\documentclass[10pt, journal, twoside, web, letterpaper]{IEEEtran}

\usepackage{generic}
\usepackage{cite}
\usepackage[compact]{titlesec}
\usepackage{amsmath,amssymb,amsfonts}
\usepackage{hyperref}
\usepackage{graphicx}
\usepackage{textcomp}
\usepackage{times}
\usepackage{mathrsfs}
\usepackage{amstext} 
\usepackage{algorithm}
\usepackage{algorithmic}
\usepackage{setspace} 
\usepackage{amssymb} 
\usepackage{mathtools} 
\usepackage{verbatim}
\usepackage{booktabs} 
\usepackage[font=footnotesize]{caption}
\usepackage[font=footnotesize]{subcaption}
\renewcommand{\theenumi}{\alph{enumi}}

\newcommand{\real}{\mathbb{R}}

\newcommand{\Ec}{\mathcal{E}}
\newcommand{\Fc}{\mathcal{F}}

\newcommand{\Hc}{\mathcal{H}}

\newcommand{\Kc}{\mathcal{K}}

\newcommand{\Pc}{\mathcal{P}}

\newcommand{\Rc}{\mathcal{R}}
\newcommand{\Sc}{\mathcal{S}}

\newcommand{\Uc}{\mathcal{U}}
\newcommand{\Vc}{\mathcal{V}}
\newcommand{\Wc}{\mathcal{W}}

\newcommand{\rank}[1]{\operatorname{rank}(#1)}

\newcommand{\innerprodF}[2]{\langle #1, #2 \rangle_{\Fc}}

\newcommand{\longthmtitle}[1]{\mbox{}{\textit{(#1):}}}

\newcommand{\oprocendsymbol}{\hbox{$\square$}}
\newcommand{\oprocend}{\relax\ifmmode\else\unskip\hfill\fi\oprocendsymbol}

\newtheorem{theorem}{Theorem}[section]
\newtheorem{definition}[theorem]{Definition}

\newtheorem{lemma}[theorem]{Lemma}

\newtheorem{remark}[theorem]{Remark}

\newtheorem{corollary}[theorem]{Corollary}
\newtheorem{proposition}[theorem]{Proposition}  
\newtheorem{problem}[theorem]{Problem}

\renewcommand{\baselinestretch}{.995}
\graphicspath{{fig/}}

\usepackage{multirow}  

\makeatletter
\newcommand{\RightComment}[1]{%
  \unskip\hfill{\footnotesize\(\triangleright\)~#1}%
}
\makeatother

\allowdisplaybreaks

\def\BibTeX{{\rm B\kern-.05em{\sc i\kern-.025em b}\kern-.08em
    T\kern-.1667em\lower.7ex\hbox{E}\kern-.125emX}}

\begin{document}


\title{\LARGE Subspace Pruning via Principal Vectors
  \\
  for Accurate Koopman-Based Approximations%
  \thanks{A preliminary version of this work has been submitted
    as~\cite{DS-JC:26-cdc1} to the 2026 IEEE Conference on Decision and
    Control.
  }}

\author{Dhruv Shah \quad Jorge Cort\'es \thanks{D. Shah and J. Cort\'es
    are with Department of Mechanical and Aerospace Engineering, UC
    San Diego, USA, {\tt\small \{dhshah,cortes\}@ucsd.edu}}}

\maketitle

\thispagestyle{empty}
\begin{abstract}
  The accuracy of Koopman operator approximations over
  finite-dimensional spaces relies critically on their invariance
  properties. These can be rigorously quantified via the principal
  angles between a candidate subspace and its image under the Koopman
  operator.  This paper proposes a unified algebraic framework for
  subspace pruning designed to systematically refine the invariance
  error. We establish the geometric equivalence between
  consistency-based methods and principal-vector pruning, and build on
  this insight to introduce a hybrid strategy that balances between
  multiple and single principal vector pruning for improved numerical
  stability and scalability. We derive error bounds for the retention
  of approximate and external eigenfunctions, demonstrating that the
  multi-vector approach mitigates the numerical drift inherent to
  sequential pruning. To ensure scalability, we develop an efficient
  numerical update scheme based on rank-one modifications that reduces
  the computational complexity of tracking principal angles by an
  order of magnitude.  Finally, we exploit the subspace obtained from
  the pruning algorithms to build a lifted linear model for state
  prediction that accounts for the trade-offs between improving
  invariance and minimizing state reconstruction error. Simulations
  demonstrate the effectiveness of our approach.
\end{abstract}


\section{Introduction}


%
%

The Koopman operator perspective~\cite{BOK-JVN:32} on dynamical
systems owes its popularity to mapping nonlinear state-space evolution
to a linear operator acting on an infinite-dimensional vector space of
observable functions. This enables the application of powerful
spectral analysis techniques to understand complex nonlinear systems
and provides a unified linear structure that can be utilized for
control. However, the practical implementation of these techniques is
inherently complicated by its infinite-dimensional
nature. Specifically, accurately approximating Koopman eigenfunctions
(a critical requirement for reliable long-term forecasting) remains a
significant challenge.  Techniques that rely exclusively on minimizing
one-step residual errors frequently fail to capture the underlying
spectral geometry. Consequently, these methods often yield poor
long-term predictions, even when their single-step performance appears
satisfactory.  This paper develops an algebraic framework to construct
finite-dimensional Koopman models for long-term predictive
fidelity. By guaranteeing the structural invariance of these models,
our approach directly addresses the critical need for reliable
long-horizon forecasting and accurate eigenfunction approximation
-- both essential for data-driven control and closed-loop stability
analysis in complex nonlinear systems.


\subsection*{Literature Review}
The Koopman operator has been a subject of intense research interest,
with applications across diverse domains, including model reduction
\cite{SK-FN-SP-JHN-CC-CS:20}, fluid dynamics \cite{AD-DL-DS-AT:21},
power grids \cite{SPN-SG-SK-SP-KA-YW-SC:22}, and robotics
\cite{LS-MH-GM-DB-IA-TM-JC-KK:26}. In systems and control, theoretical
contributions include global stability criteria for attractors based
on eigenfunctions \cite{AM-IM:16}, connections to contraction theory
\cite{BY-IRM:24}, data-driven approximation of Lyapunov functions
\cite{SAD-DVD:23}, stability of switched systems \cite{CMZ-AM:25}, and
the estimation of regions of attraction~\cite{YM-RZ-JL:23}.  While the
original theory addresses autonomous systems, extensions to control
systems \cite{MK-IM:20,MH-JC:26-auto, MH-IM-JC:25-tac} have enabled
the integration of classical control tools, including feedback
linearization \cite{DG-VK-FP:24}, control Lyapunov functions
\cite{VZ-EB:23}, and optimal control
\cite{MEV-CNJ-BH:21,JH-KC:24}. Significant advancements in the control
of Koopman models have been achieved by leveraging robust control
techniques to address design challenges. Notably, the SafEDMD
framework and Sum-of-Squares (SoS) optimization provide rigorous
closed-loop stability guarantees even under model
mismatch~\cite{RS-MS-KW-JB-FA:26,RS-JB-FA:25}. Complementing these
stability results, recent work~\cite{DS-JC:25-csl} targets controller
synthesis specifically for input-state separable Koopman models of
non-affine control systems. For a recent overview, we refer the reader
to the comprehensive survey~\cite{RS-KW-IM-JB-MS-FA:26}.

%
%

Given the operator's infinite-dimensional nature, its practical
implementation requires finite-dimensional representations. However,
because exact linear representations are rarely available for complex
systems~\cite{ZL-NO-EDS:25}, approximations are typically constructed
by projecting the operator onto a chosen subspace. Prominent examples
of this approach include Dynamic Mode Decomposition
(DMD)~\cite{PJS:10} and Extended DMD (EDMD) \cite{MOW-IGK-CWR:15},
which utilize data to project dynamics onto a subspace spanned by a
predefined dictionary. The theoretical properties of EDMD are
well-established, including convergence in asymptotic
regimes~\cite{MK-IM:18} and probabilistic error bounds for finite
data~\cite{FN-SP-FP-MS-KW:23}. More recently, this theoretical
framework has expanded to include $L^{\infty}$ error bounds for kernel
EDMD~\cite{FK-FMP-MS-AS-KW:25} and extensions to stochastic
systems~\cite{MH-FMP-MS-KW:25}.

Approximations via projection introduce two distinct but related
issues: the emergence of spurious eigenfunctions and truncation errors
in the prediction of arbitrary functions. While spurious
eigenfunctions are a critical concern in spectral analysis
applications \cite{SK-FN-SP-JHN-CC-CS:20,AD-DL-DS-AT:21},
prediction-focused applications require bounds on the approximation
error for the entire subspace \cite{MH-JC:23-auto}. These errors
vanish if the subspace is invariant under the Koopman operator,
motivating the search for (near-)invariant subspaces.  Approaches to
this problem of identifying (near-)invariant subspaces
%
%
include constructing subspaces from approximated eigenfunctions
\cite{MK-IM:20} or generalized eigenfunctions \cite{MJC-CD-AH:25}, and
learning subspaces via optimization using neural networks
\cite{YX-KS-NKL-ZS:25,BG-JP-SC-OA:25}.  Other relevant research
focuses on spectral pollution and spurious eigenvalues using Ritz
residuals \cite{HZ-STMD-CWR-EAD-LNC:20} or sparsity-promoting
optimization \cite{SI-SB-AA-NK:25}.  Residual Dynamic Mode
Decomposition (ResDMD) \cite{MJC-LJA-MS:23,MJC-AT:24,MJK:24} also
employs residuals to manage spectral pollution and estimate
pseudospectra.  Building on the computational tools introduced
there,~\cite{GC-NB-JCL-SB-MJC:26} has recently introduced,
independently from the present manuscript, the data-driven Principal
Angle Decomposition (PAD) algorithm, which proceeds by identifying
principal angles and vectors between a subspace and its image under
the Koopman operator, and orders and retains small-angle principal
observables to construct reduced, more accurate Koopman models.

An alternative line of research follows two natural, intersecting
tracks: accuracy measures for Koopman approximations and
accuracy-improving pruning algorithms. For the former, the concept of
invariance proximity evolved from $\epsilon$-apart subspaces
\cite{MH-JC:23-auto}, followed by the consistency index
\cite{MH-JC:23-csl},
to general inner product spaces with control applications
\cite{MH-JC:26-auto}. For general spaces of functions, invariance
proximity is the sine of the maximum principal angle between a
subspace and its image under the Koopman operator,
cf. \cite{MH-JC:24-csl-arxiv-revised}. For the latter track,
algorithms seek to iteratively prune candidate dictionaries to yield
subspaces with improved invariance properties. This includes Symmetric
Subspace Decomposition (SSD) \cite{MH-JC:22-tac} for finding the
largest invariant subspace, its tunable variant (T-SSD)
\cite{MH-JC:23-auto}, and Recursive Forward-Backward EDMD (RFB-EDMD)
\cite{MH-JC:25-access}.
  
\subsection*{Statement of Contributions}
The key contributions of the paper are as follows:

\noindent
\textbf{(1) geometric unification of pruning algorithms:} we introduce
the Single Principal Vector (SPV) pruning algorithm to iteratively
remove principal vectors and minimize invariance proximity of a
subspace of functions.  We establish the algebraic equivalence between
the consistency-based pruning of RFB-EDMD and SPV pruning,
demonstrating that the consistency matrix eigenvectors are identical
to the principal vectors of the current subspace;

\textbf{(2) two new pruning algorithms:} we introduce the Multiple
Principal Vector (MPV) algorithm for simultaneous multiple-vector
pruning and a hybrid MPV-SPV strategy that effectively reduces
cumulative numerical drift. This hybrid approach is particularly
beneficial for large dictionaries of functions, where sequential SPV
pruning can lead to significant numerical drift and lead to loss of
eigenfunctions retained;

\textbf{(3) external eigenfunction retention:} our analysis
establishes that true Koopman eigenfunctions in the initial dictionary
are strictly preserved, and we derive retention bounds for
``external'' eigenfunctions within an $\epsilon$-neighborhood. In
practice, when working with finite precision, we only have approximate
eigenfunctions, and the multi-vector approach significantly improves
the retention of these approximate eigenfunctions by mitigating
numerical drift;

\textbf{(4) efficient computation:} we implement a fast numerical
scheme using rank-one updates and incremental QR decompositions,
reducing the complexity of recomputing principal angles by an order of
magnitude. Numerical experiments demonstrate that the runtime of the
MPV-SPV algorithm is dominated by the first SVD computation,
indicating near minimal overhead for subsequent pruning steps;

\textbf{(5) state reconstruction:} we describe a framework for
constructing lifted linear models for state prediction of unknown
dynamical systems. This allows the designer to balance invariance
proximity with the ability to reconstruct the original state space,
which is critical for control applications.
%
%

A preliminary, conference version of this article has been
submitted as \cite{DS-JC:26-cdc1}.
The present work significantly extends~\cite{DS-JC:26-cdc1} by
developing two new pruning algorithms with improved numerical stability and
computational efficiency via rank-one updates. Our external eigenfunction retention analysis is also novel, it provides an explanation for the curious numerical drift observed in the SPV pruning algorithm, and it establishes that the multi-vector approach significantly mitigates this drift, leading to improved retention of approximate eigenfunctions. 
%
%
Additionally, we provide a novel method for balancing invariance with
state reconstruction.

\subsection*{Notation}
Let $\real^n$ denote the $n$-dimensional Euclidean space and
$\mathcal{X} \subseteq \real^n$ the state space. For a matrix
$M \in \real^{m \times n}$, we denote its range (column space) by
$\mathcal{R}(M)$, its rank by $\rank{M}$, and its Moore-Penrose
pseudoinverse by $M^\dagger$. For subspaces
$\Uc, \Vc \subseteq \real^n$, we denote the orthogonal projection onto
$\Uc$ by $\Pc_{\Uc}$, and the orthogonal complement by $\Uc^\perp$. We
write $\Uc \oplus \Vc$ to denote the orthogonal direct sum when
$\Uc \cap \Vc = \{0\}$ and $\Uc \perp \Vc$. For a Hilbert space $\Hc$
with inner product $\langle \cdot, \cdot \rangle_{\Hc}$ and induced
norm $\|\cdot\|_{\Hc}$, subspaces of observables are denoted by
calligraphic letters (e.g., $\Sc, \Vc$). We use $\text{span}(\cdot)$
to denote the linear span and $\text{dim}(\cdot)$ to denote
dimension. The distance from a point $x$ to a subspace $\Uc$ is
$\text{dist}(x, \Uc) = \inf_{u \in \Uc} \|x - u\|$.

\section{Preliminaries}

In this section, we review the theoretical foundations of the Koopman
operator and its finite-dimensional approximations. We specifically
formulate Extended Dynamic Mode Decomposition (EDMD) as an orthogonal
projection. We then take a brief detour to review the geometric
concepts of principal angles and vectors. Leveraging these geometric
tools, we introduce the concept of invariance proximity to quantify
the quality of Koopman approximations.

\subsection{The Koopman Operator}
Following~\cite{AM-YS-IM:20}, consider a discrete-time dynamical
system on the state space $\mathcal{X} \subseteq \mathbb{R}^n$
described by a map $T: \mathcal{X} \to \mathcal{X}$:
\begin{equation}\label{eqn:sys_dynamics}
  x^+ = T(x), \quad x \in \mathcal{X}.
\end{equation}
The Koopman operator $\mathcal{K}: \mathcal{F} \to \mathcal{F}$ is an
infinite-dimensional linear operator that acts on a space of
real-valued observables
$\mathcal{F} \ni \psi: \mathcal{X} \to \mathbb{R}$ by composing them
with the dynamics:
\begin{equation}
    (\mathcal{K}\psi)(x) = \psi(T(x)).
\end{equation}
We assume that the function space $\mathcal{F}$ is closed under
composition with $T$. Although the state-space map $T$ may be
nonlinear, the Koopman operator $\mathcal{K}$ acts linearly on
observables. This linearity enables the use of spectral methods: the
eigenvalues and eigenfunctions of $\mathcal{K}$ encode the long-term
asymptotic behavior of the system through the evolution of observable
functions. 
%
%

\subsection{EDMD as an Orthogonal Projection}\label{sec:EDMD}
Since $\mathcal{K}$ operates on an infinite-dimensional space
$\mathcal{F}$, practical implementations must approximate it on
finite-dimensional subspaces. We equip the space of observables
$\mathcal{F}$ with a Hilbert space structure by defining an inner
product $\langle \cdot, \cdot \rangle_{\mathcal{F}}$ and the
associated norm $\|\cdot\|_{\mathcal{F}}$. A common choice is the
$L_2$-space with respect to a probability measure $\mu$ on
$\mathcal{X}$. Let $\mathcal{S} \subset \mathcal{F}$ be a subspace
spanned by a finite set of linearly independent functions
$\Psi = \{\psi_1, \dots, \psi_s\}$, which we refer to as a dictionary.


The goal of data-driven approximation is to find a finite-dimensional
operator $K: \mathcal{S} \to \mathcal{S}$ that best represents the
action of $\mathcal{K}$ when restricted to $\mathcal{S}$. Extended
Dynamic Mode Decomposition (EDMD) \cite{lQL-FD-EMB-IGK:17} provides
the optimal approximation in the $L_2$-sense by orthogonally
projecting the image of the subspace back onto itself.  Formally, let
$P_{\mathcal{S}}: \mathcal{F} \to \mathcal{S}$ denote the orthogonal
projection operator onto $\mathcal{S}$. The EDMD approximation is
given by $K_{\text{EDMD}} \triangleq P_{\mathcal{S}} \mathcal{K}|_{\mathcal{S}}$.

Given a dataset of snapshot pairs $\{(x_i, x_i^+)\}_{i=1}^N$ where
$x_i^+ = T(x_i)$, we construct the data matrices
$\Psi(X), \Psi(X^+) \in \mathbb{R}^{M \times s}$, where the $i$-th rows
are the evaluations of the dictionary functions at $x_i$ and $x_i^+$
respectively.
%
%
The matrix representation of $K_{\text{EDMD}}$ in the
basis $\Psi$ is the solution to the least-squares problem
\begin{align}
  \label{eq:LS-EDMD}
  \min_K \|\Psi(X) K - \Psi(X^+)\|_F,   
\end{align}
given explicitly by
$K = \Psi(X)^\dagger \Psi(X^+) \in \mathbb{R}^{s \times s}$.

This projection interpretation of EDMD reveals a critical limitation:
if the subspace $\mathcal{S}$ is not invariant under $\mathcal{K}$
(i.e., $\mathcal{K}\mathcal{S} \not\subseteq \mathcal{S}$), the
projection $P_{\mathcal{S}}$ discards the component of the dynamics
that evolves orthogonal to $\mathcal{S}$, leading to approximation
errors.

\subsection{Principal Angles and Vectors}\label{sec:principal-angles}
Here we recall the basic definitions and properties of principal
angles and vectors~\cite{AB-GHG:73},
%
%
which will be useful to rigorously quantify the alignment between the
chosen subspace and its evolution under the Koopman operator.

\begin{definition}[Principal Angles and Vectors]
\label{defn:pa_pv}
Let \((\mathcal{H},\langle\cdot,\cdot\rangle)\) be a Hilbert space,
and let \(\mathcal{U},\mathcal{V}\subset\mathcal{H}\) be subspaces
with \(\dim(\mathcal{U})=d_1\) and \(\dim(\mathcal{V})=d_2\).  The
principal angles
$0 \leq \theta_1 \leq \cdots \leq \theta_k \leq \frac{\pi}{2}$
between $\mathcal{U}$ and $\mathcal{V}$, where
$k = \min\{d_1, d_2\}$, are defined recursively as follows:
\begin{align*}
  \cos \theta_j = \max_{u \in \mathcal{U},\, v \in \mathcal{V}} \;
  &
    |\langle
    u,
    v
    \rangle|
  \\ 
  \text{subject to} \quad & \|u\| = \|v\| = 1,
  \\
  & \langle u, u_i \rangle = 0, \; \langle v, v_i \rangle = 0, \,\,
    i = 1, \ldots, j-1, 
\end{align*}
where $u_i, v_i$ are the principal vectors corresponding to the
previous $(j-1)$ angles. The vectors $(u_j, v_j)$ achieving the
maximum are called the $j$-th pair of principal vectors.
\end{definition}

\begin{remark}
  The principal vectors $\{u_j\}_{j=1}^k$ and $\{v_j\}_{j=1}^k$ are
  orthonormal, i.e.,
  \begin{equation*}  
    \langle u_i,u_j\rangle=\delta_{ij},\quad
    \langle v_i,v_j\rangle=\delta_{ij},\quad i,j=1,\dots,k.
  \end{equation*}
  In particular, the principal vectors can be extended to bases of
  their subspaces. For instance, if $k=\dim(\Uc)$, then
  $\{u_j\}_{j=1}^k$ is an orthonormal basis of $\Uc$, and
  $\{v_j\}_{j=1}^k$ can be augmented to an orthonormal basis of
  $\Vc$. \oprocend
\end{remark}

One can show, cf.~\cite[Proposition~4.4]{MH-JC:24-csl-arxiv-revised}, that the
principal vectors $\{u_j\}_{j=1}^k$ and $\{v_j\}_{j=1}^k$ satisfy
$\langle u_i,v_j\rangle=\delta_{ij} \, \cos \theta_i$, for all
$i,j=1,\dots,k$. Next, we describe how to compute the principal angles
and vectors in the Euclidean setting, i.e., $\mathcal{H} = \real^n$,
via the Singular Value Decomposition (SVD).
%
%

\begin{theorem}[Computation via SVD
  \cite{AB-GHG:73}]\label{thm:Golub1973}
  Let $\Uc,\Vc\subset\real^n$ have orthonormal basis matrices
  $Q_{\Uc}\in\real^{n\times d_1}$ and $Q_{\Vc}\in\real^{n\times d_2}$,
  and set $k=\min\{d_1,d_2\}$.  Compute the compact SVD
  %
  %
  \[
    \widetilde U \, \Sigma \, \widetilde V^\top \;=\; Q_{\Uc}^\top Q_{\Vc} ,
    \qquad
    \Sigma=\operatorname{diag}(\sigma_1,\ldots,\sigma_k),
  \]
  %
  %
  where $\sigma_1\ge\cdots\ge\sigma_k\ge0$.  Then, the principal
  angles $\{\theta_j\}_{j=1}^k$ and vectors $\{u_j\}_{j=1}^k$,
  $\{v_j\}_{j=1}^k$ between $\Uc$ and $\Vc$ satisfy
  \begin{equation}
    \cos\theta_j=\sigma_j,\,u_j = Q_{\Uc}\,\widetilde u_j,\, v_j =
    Q_{\Vc}\,\widetilde v_j,  j=1,\dots,k,   
  \end{equation}
  where $\widetilde u_j$ and $\widetilde v_j$ denote the $j$-th
  columns of $\widetilde U \in \real^{d_1\times k}$ and
  $\widetilde V \in \real^{d_2\times k}$. If $\sigma_j=0$, then
  $\theta_j=\tfrac{\pi}{2}$, and the corresponding vectors may be
  chosen from the appropriate nullspaces.
\end{theorem}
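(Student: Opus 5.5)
We reduce the statement to the variational (Courant--Fischer) characterization of singular values, using the parametrizations $u = Q_{\Uc} a$ and $v = Q_{\Vc} b$. Because $Q_{\Uc}$ and $Q_{\Vc}$ have orthonormal columns, the map $a \mapsto Q_{\Uc} a$ is an isometry from $\real^{d_1}$ onto $\Uc$, and likewise $b\mapsto Q_{\Vc} b$ maps $\real^{d_2}$ isometrically onto $\Vc$. Consequently $\|u\| = \|a\|$, $\|v\|=\|b\|$, and
\[
  \langle u, v\rangle = a^\top Q_{\Uc}^\top Q_{\Vc}\, b = a^\top M b, \qquad M \triangleq Q_{\Uc}^\top Q_{\Vc}.
\]
Inner products between vectors of $\Uc$ are also preserved: $\langle u, u_i\rangle = \langle a, a_i\rangle$, and similarly in $\Vc$. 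The recursive problem in Definition~\ref{defn:pa_pv} therefore becomes: maximize $|a^\top M b|$ over unit vectors $a\in\real^{d_1}$, $b\in\real^{d_2}$, subject to $a \perp a_1,\dots,a_{j-1}$ and $b\perp b_1,\dots,b_{j-1}$.

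\emph{Base step $j=1$.} We show that $\max_{\|a\|=\|b\|=1} |a^\top M b| = \sigma_1 = \|M\|_2$. The upper bound follows from Cauchy--Schwarz, $|a^\top M b| \le \|a\|\,\|Mb\| \le \sigma_1$. The bound is attained at $a=\widetilde u_1$, $b=\widetilde v_1$, since $M\widetilde v_1 = \sigma_1 \widetilde u_1$. Hence $\cos\theta_1=\sigma_1$, $u_1 = Q_{\Uc}\widetilde u_1$ and $v_1 = Q_{\Vc}\widetilde v_1$.

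\emph{Induction step.} Assume the first $j-1$ maximizers have been taken to be $\widetilde u_i,\widetilde v_i$ for $i<j$. For unit $a\perp \widetilde u_1,\dots,\widetilde u_{j-1}$ and $b \perp \widetilde v_1,\dots,\widetilde v_{j-1}$, we use the compact SVD $M=\sum_{i=1}^k \sigma_i \widetilde u_i \widetilde v_i^\top$ to write
\[
  a^\top M b = \sum_{i\ge j} \sigma_i (a^\top \widetilde u_i)(\widetilde v_i^\top b).
\]
This bilinear form is the one defined by the deflated matrix $M_j = \sum_{i\ge j}\sigma_i\widetilde u_i\widetilde v_i^\top$, whose operator norm is $\sigma_j$. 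The same Cauchy--Schwarz argument then gives $|a^\top M b|\le\sigma_j$, with equality at $(\widetilde u_j,\widetilde v_j)$. This yields $\cos\theta_j=\sigma_j$ and the stated formulas for $u_j$ and $v_j$. Orthonormality of the principal vectors follows directly from the orthonormality of the columns of $\widetilde U$ and $\widetilde V$ together with the isometry property.

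\emph{Zero singular values.} If $\sigma_j=0$, the constrained maximum is $0$, so $\theta_j=\pi/2$. Any unit vectors satisfying the orthogonality constraints are then maximizers, so they can be taken from the left and right null spaces of $M$ (orthogonal to the earlier singular vectors), mapped through $Q_{\Uc}$ and $Q_{\Vc}$.

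\emph{Main obstacle.} The delicate step is the induction when singular values repeat. In that case the principal vectors are not unique, and the earlier maximizers chosen in the definition need not coincide with the SVD columns. We would handle this by proving the slightly stronger statement that the maximum values $\sigma_j$ do not depend on which admissible previous maximizers were chosen. This holds because any choice of maximizers within the singular subspace of $\sigma_j$ is itself part of some valid SVD. With that in hand, we may fix the SVD-consistent choice, which gives exactly the correspondence claimed in the theorem, namely the existence of principal vectors of the stated form.
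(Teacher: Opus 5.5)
The paper gives no proof of this statement; it quotes the result from Bj\"orck and Golub~\cite{AB-GHG:73}, so the relevant comparison is with their classical argument. Your proposal is correct and follows essentially the same route. The isometries $a\mapsto Q_{\Uc}a$ and $b\mapsto Q_{\Vc}b$ turn the recursive definition of principal angles into the extremal characterization of the singular values of $M=Q_{\Uc}^\top Q_{\Vc}$, namely $\sigma_j=\max a^\top M b$ over unit vectors $a\perp\widetilde u_1,\dots,\widetilde u_{j-1}$ and $b\perp\widetilde v_1,\dots,\widetilde v_{j-1}$. The only difference is that Bj\"orck and Golub take this characterization as known. You instead derive it by applying Cauchy--Schwarz to the deflated matrices $M_j$, which makes your version self-contained.

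The repeated-singular-value paragraph is the one step you only sketch. The phrase ``maximizers within the singular subspace'' assumes what needs a line of proof, and that line is the equality case of Cauchy--Schwarz.

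\emph{Equality case.} Suppose unit vectors $a,b$ satisfying the current constraints attain $|a^\top M_j b|=\sigma_j>0$. Then $\|M_jb\|=\sigma_j=\|M_j\|_2$ and $M_jb=\pm\sigma_j a$, hence also $M_j^\top a=\pm\sigma_j b$ with the same sign.

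\emph{Consequence.} $M_j$ maps $b^\perp$ into $a^\perp$, and $M_j\mp\sigma_j ab^\top$ has singular values $\sigma_{j+1},\dots,\sigma_k$. Its singular vectors can be chosen orthogonal to $a$, $b$ and to the earlier columns. So $(a,\pm b)$ can serve as the next column pair of a compact SVD of $M$ that extends the earlier columns. For $\sigma_j=0$, any feasible pair works, as you note. By induction, the angles do not depend on which admissible maximizers were chosen, as you claim.

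When the sign is negative, the chosen principal vector is $v_j=-Q_{\Vc}\widetilde v_j$ for that SVD. This sign freedom comes from the absolute value in Definition~\ref{defn:pa_pv} and is harmless.
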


\subsection{Invariance Proximity}
As reasoned in Section~\ref{sec:EDMD}, to ensure accuracy of the
Koopman approximation via EDMD, the chosen finite-dimensional subspace
$\mathcal{S}$ should be as close to invariant as possible. We quantify
this ``closeness'' using the geometric concept of principal angles.

Let
$\mathcal{K}\mathcal{S} = \text{span}\{\mathcal{K}\phi \mid \phi \in
\mathcal{S}\}$ denote the image of the subspace under the
operator. The proximity of $\mathcal{S}$ to invariance is determined
by the alignment between $\mathcal{S}$ and $\mathcal{K}\mathcal{S}$.
Let $0 \le \theta_1 \le \dots \le \theta_s \le \pi/2$ be the principal
angles between $\mathcal{S}$ and $\mathcal{K}\mathcal{S}$. These
angles recursively maximize the inner product between unit vectors in
the two subspaces.

\begin{definition}[Invariance
  Proximity~\cite{MH-JC:24-csl-arxiv-revised}]\label{definition:IPT} 
  The invariance proximity of a subspace $\mathcal{S}$ with respect to
  the operator $\mathcal{K}$ is defined by
  \begin{equation}
    \delta(\mathcal{S}) \triangleq \sin \theta_{\max}(\mathcal{S},
    \mathcal{K}\mathcal{S}). 
  \end{equation}
\end{definition}
%
%

A value of $\delta(\mathcal{S}) = 0$ means that
$\mathcal{K}\mathcal{S} \subseteq \mathcal{S}$, indicating that
$\mathcal{S}$ is an invariant subspace.  The following result
clarifies the practical significance of invariance proximity by
showing that it exactly corresponds to the worst-case prediction error
of the EDMD model over the subspace.

\begin{theorem}[Worst-Case Relative Prediction Error
  \cite{MH-JC:24-csl-arxiv-revised}]
  Let $\mathcal{S} \subset \mathcal{F}$ be a finite-dimensional
  subspace and let
  $K_{\text{EDMD}} = P_{\mathcal{S}} \mathcal{K}|_{\mathcal{S}}$ be
  the EDMD approximation of the Koopman operator on
  $\mathcal{S}$. Then,
  \begin{equation}
    \delta(\mathcal{S}) = \sup_{\substack{f \in \mathcal{S}
        \\
        \|\mathcal{K}f\| \neq 0}} \frac{\| \mathcal{K}f -
      K_{\text{EDMD}}f \|}{\| \mathcal{K}f \|}. 
  \end{equation}
\end{theorem}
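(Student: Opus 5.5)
The idea is to rewrite the error ratio as a statement about how far unit vectors of $\mathcal{K}\mathcal{S}$ lie from $\mathcal{S}$, and then identify the worst case with the largest principal angle.

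First, since $K_{\text{EDMD}} f = P_{\mathcal{S}}\mathcal{K} f$, the numerator is
\[
\|\mathcal{K}f - K_{\text{EDMD}}f\| = \|(I-P_{\mathcal{S}})\mathcal{K}f\| = \text{dist}(\mathcal{K}f,\mathcal{S}).
\]
As $f$ ranges over $\mathcal{S}$ with $\mathcal{K}f\neq 0$, the vector $g=\mathcal{K}f$ ranges over all nonzero elements of $\mathcal{K}\mathcal{S}$. The ratio is scale invariant, so the supremum becomes
\[
\sup_{g\in\mathcal{K}\mathcal{S},\,\|g\|=1}\|(I-P_{\mathcal{S}})g\|.
\]
Because $\mathcal{K}\mathcal{S}$ is finite-dimensional, the unit sphere is compact and this supremum is attained. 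By Pythagoras, $\|(I-P_{\mathcal{S}})g\|^2 = 1-\|P_{\mathcal{S}}g\|^2$, and for unit $g$,
\[
\|P_{\mathcal{S}}g\| = \max_{u\in\mathcal{S},\|u\|=1}\langle u,g\rangle.
\]
So it remains to show
\[
\min_{g\in\mathcal{K}\mathcal{S},\|g\|=1}\|P_{\mathcal{S}}g\| = \cos\theta_{\max}.
\]

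To prove this, take orthonormal bases $Q_{\mathcal{S}}$ and $Q_{\mathcal{K}\mathcal{S}}$. These can be handled in coordinates, via Gram matrices, or exactly as in Theorem~\ref{thm:Golub1973} after isometrically identifying $\mathcal{S}+\mathcal{K}\mathcal{S}$ with a Euclidean space. Writing $g = Q_{\mathcal{K}\mathcal{S}}c$ with $\|c\|=1$, we have $\|P_{\mathcal{S}}g\| = \|Q_{\mathcal{S}}^\top Q_{\mathcal{K}\mathcal{S}}c\|$. Its minimum over unit $c$ is the smallest singular value of the $s\times d_2$ matrix $M=Q_{\mathcal{S}}^\top Q_{\mathcal{K}\mathcal{S}}$, provided $d_2=\dim\mathcal{K}\mathcal{S}\le s$. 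That minimum equals $\sigma_k=\cos\theta_k=\cos\theta_{\max}$ by Theorem~\ref{thm:Golub1973}. Combining the steps gives $\sqrt{1-\cos^2\theta_{\max}}=\sin\theta_{\max}=\delta(\mathcal{S})$.

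\textbf{Main obstacle: dimension bookkeeping.} Linearity of $\mathcal{K}$ gives $d_2=\dim\mathcal{K}\mathcal{S}\le s$, so $k=d_2$. If $d_2<s$, the principal angles listed in the paper ($s$ of them) must be understood as the $k=d_2$ angles, or as padded appropriately; the minimum over unit $c\in\real^{d_2}$ of $\|Mc\|$ is still $\sigma_{d_2}$, since $M$ has $d_2$ columns. I will state explicitly that $\theta_{\max}$ means $\theta_k$ with $k=\min\{s,d_2\}=d_2$.

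The degenerate case $\mathcal{K}\mathcal{S}=\{0\}$ makes the supremum empty. In that case I will adopt the convention that both sides are zero, or simply exclude it.
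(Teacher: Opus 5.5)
Your proof is correct. The paper gives no proof of this theorem; it imports it from \cite{MH-JC:24-csl-arxiv-revised}. The natural comparison is therefore with the appendix tools the paper does contain.

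Your key step is $\min_{g\in\Kc\Sc,\,\|g\|=1}\|\Pc_{\Sc}g\|=\cos\theta_{\max}$. This is exactly Lemma~\ref{lemma:cos_theta_max} with $k=1$, applied with $\Uc=\Kc\Sc$ and $\Vc=\Sc$. Equivalently, it is the lower bound in Corollary~\ref{corollary:bounds_proj_norm}, attained at the last principal vector. In that form, expanding $g$ in the principal vectors of $\Kc\Sc$ turns $\|\Pc_{\Sc}g\|^2/\|g\|^2$ into the Rayleigh quotient $c^\top\Lambda^2c/c^\top c$. This is the basis-free version of your smallest-singular-value computation for $Q_{\Sc}^\top Q_{\Kc\Sc}$. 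Your route has the small advantage of making explicit how Theorem~\ref{thm:Golub1973} transfers to $\Fc$, via an orthonormal basis of $\Sc+\Kc\Sc$.

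You also orient the projection correctly: from the possibly lower-dimensional $\Kc\Sc$ onto $\Sc$. This is the direction the remark after Lemma~\ref{lemma:cos_theta_max} requires. The reversed quantity $\sup_{u\in\Sc}\|(I-\Pc_{\Kc\Sc})u\|/\|u\|$ equals $1$ whenever $\dim\Kc\Sc<s$. It is therefore guaranteed to coincide with $\delta(\Sc)$ only under the standing assumption $\dim\Kc\Sc=s$.

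For the same reason, your commitment to $k=d_2$ in the dimension bookkeeping is forced rather than a convention, so drop the ``padded appropriately'' alternative. Padding with angles $\pi/2$ would break the identity. For example, take $\Sc=\text{span}\{f_1,f_2\}$ with $\Kc f_1=f_1$ and $\Kc f_2=0$. The subspace is invariant and the supremum is $0$, whereas a padded $\theta_{\max}=\pi/2$ would give $\delta(\Sc)=1$.
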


This result implies that minimizing invariance proximity
$\delta(\mathcal{S})$ directly minimizes the maximum relative error
incurred by the EDMD predictor for any function in the subspace.

\section{Problem Statement}


Our aim is to obtain useful Koopman-based finite-dimensional
approximations of unknown dynamical
systems. Following\cite{MH-JC:25-access}, we formulate this objective
as a subspace search problem within the space of observables using the
notion of invariance proximity. Consider the underlying dynamical system \eqref{eqn:sys_dynamics} with the associated Koopman operator $\mathcal{K}$, let
$\mathcal{S}_{\text{init}}$ be an initial finite-dimensional subspace
spanned by a (large) dictionary of candidate functions. Ideally, we
seek a subspace $\mathcal{S} \subseteq \mathcal{S}_{\text{init}}$ that
is invariant under $\mathcal{K}$ and helps us
reconstruct the state effectively. However, exact invariance is rare
in finite dictionaries. Instead, we seek a subspace where the dynamics
are confined within the subspace to a user-specified degree of
accuracy. Furthermore, we want this subspace to be of the largest
possible dimension to maximize the expressivity of the Koopman
model. This can be formulated using invariance proximity as follows.

\begin{problem}[Invariant Subspace Search]\label{problem:subspace_search}
  Given an initial subspace $\mathcal{S}_{\text{init}}$ and a
  tolerance $\epsilon \in [0, 1)$, find a subspace
  $\mathcal{S}^* \subseteq \mathcal{S}_{\text{init}}$ of the largest
  possible dimension such that:
  \begin{equation}\label{eq:problem-bound}
    \delta(\mathcal{S}^*) \le \epsilon.
  \end{equation}
\end{problem}
\smallskip

Solving Problem \ref{problem:subspace_search} exactly requires a
combinatorial search over all possible subspaces of the initial
dictionary, which is computationally prohibitive. Consequently, we
focus on developing efficient algorithms to identify high-dimensional
subspaces that satisfy the invariance proximity constraint, even if
they are not the one with the largest dimension. Our primary
motivation behind optimizing the dimension of the retained subspace is
to preserve expressivity. If the resulting subspace $\mathcal{S}^*$ is
insufficient for accurate state reconstruction, it necessitates either
enriching the initial dictionary $\mathcal{S}_{\text{init}}$ or
relaxing the tolerance~$\epsilon$.

%
%

\begin{remark}[Beyond One-Step Prediction]
  Here, we justify why minimizing invariance proximity is preferable
  over standard residual error minimization to identify Koopman
  models.  Standard data-driven methods, such as EDMD, approximate the
  operator $\Kc$
    %
    %
    by minimizing the one-step prediction error on a fixed
    dictionary, as in~\eqref{eq:LS-EDMD}.
    %
    %
    While this approach ensures that the model fits the data over a single
    time step, it does not guarantee that the subspace spanned by $\Psi$
    is invariant.
    Instead, we argue that  minimizing the
    one-step RMSE 
    is insufficient for the following key considerations:
    \begin{LaTeXdescription}
      
    \item[Long-Term Prediction:] Low one-step error does not imply
      accurate long-term prediction. If the subspace is not invariant, the
      trajectory of the observables  ``leak'' out of the subspace upon
      iteration. This leakage accumulates geometrically, rendering
      multi-step predictions unreliable. In contrast, minimizing 
      invariance proximity  explicitly minimizes this
      leakage, ensuring that the linear model remains valid over long
      horizons;
      
    \item[The Value of Linearity:] If the primary goal were simply to
      minimize one-step prediction error, restricting the model to a linear
      map would be counterproductive; non-linear approximators (e.g., neural
      networks or Gaussian processes) usually offer superior one-step
      expressivity.
      %
      %
    \end{LaTeXdescription}
    The formulation in Problem~\ref{problem:subspace_search} with
    invariance proximity thereby seeks to give rise to good long term
    prediction while retaining the linearity structure of the model.
    \oprocend
\end{remark}

The algorithms presented below are designed to solve
Problem~\ref{problem:subspace_search} by iteratively pruning
directions from $\mathcal{S}_{\text{init}}$ that violate the
invariance condition, thereby systematically improving the value of
the invariance proximity.
Another way of minimizing invariance proximity would be to learn the
dictionary functions $\Psi$ (e.g., via neural networks) to directly
reduce $\delta$. However, such objectives are typically non-convex,
computationally costly, and lack deterministic guarantees. Instead,
the subspace-pruning strategies adopted here exploit the linear
structure of the function space to algebraically identify and remove
leaky directions in a fixed dictionary, yielding efficient algorithms
with deterministic performance guarantees.

%
%

The remainder of this paper is organized as follows. In
Section~\ref{sec:pruning_algorithms}, we introduce the Principal
Vector based pruning framework, detailing the baseline Single-PV (SPV)
algorithm. We prove its algebraic equivalence to the Recursive
Forward-Backward EDMD (RFB-EDMD) method~\cite{MH-JC:25-access} and
introduce the Multiple-PV (MPV) algorithm, which accelerates pruning
by dropping multiple directions simultaneously. In
Section~\ref{sec:properties_pruning}, we analyze the stability
properties of these algorithms and derive error bounds for the
retention of ``external'' eigenfunctions to compare the numerical
performance of MPV and SPV strategies. Motivated by this analysis, we
introduce a hybrid MPV-SPV pruning algorithm that combines the
advantages of both approaches. We then address computational
efficiency in Section~\ref{sec:efficient_computation}, presenting a
numerical scheme based on rank-one updates and incremental QR
decompositions that reduces the cost of iterative pruning by an order
of magnitude.
Finally, in
Section~\ref{sec:state_construction}, we detail the construction of
the reduced-order linear model and propose a decoupled architecture
using an auxiliary lifting map to recover the state without
compromising the invariance of the dynamical model.

\section{Subspace Pruning via Principal
  Vectors} \label{sec:pruning_algorithms}

In this section, we propose a systematic approach to solve Problem
\ref{problem:subspace_search} by iteratively refining the
dictionary. The core idea relies on the geometric interpretation of
invariance proximity. Recall that $\delta(\mathcal{S})$ is determined
by the largest principal angle between the subspace $\mathcal{S}$ and
its image $\mathcal{K}\mathcal{S}$,
cf. Definition~\ref{definition:IPT}. Consequently, the principal
vectors associated with these large angles identify the specific
directions within $\mathcal{S}$ that are most responsible for
violating invariance—effectively, the directions where the dynamics
``leak'' out of the subspace.

Motivated by this insight, our strategy is to identify and remove
(prune) these high-error directions from the dictionary. We begin by
introducing a baseline algorithm that removes the single worst
direction at each step, termed Single-Principal-Vector (SPV)
pruning. We then establish its theoretical connection to existing
methods before generalizing it to a multi-vector pruning
strategy. First, we clarify a standing assumption and notation that
will be used throughout the paper.

\vspace{0.1in}
\noindent \textbf{Standing Assumption:} Throughout the paper, we
consider a finite-dimensional subspace
$\mathcal{S} \subset \mathcal{F}$ and its image
$\mathcal{K}\mathcal{S}$ under the Koopman operator, with
$\dim(\mathcal{S}) = \dim(\mathcal{K}\mathcal{S}) = s$. We denote
their principal angles, arranged in increasing order, and the
corresponding principal vectors by
\begin{equation}
\{\theta_i \}_{i=1}^s \subset [0, \pi/2], \, \{u^{\Sc}_i\}_{i=1}^s \subset \Sc, \, \{\Kc v_i^{\Kc \Sc}\}_{i=1}^s \subset \Kc \Sc.
\label{eqn:setting_principal_arguments}
\end{equation}
These arguments are computed with respect to the inner product
$\langle \cdot, \cdot \rangle_{\mathcal{F}}$ on $\mathcal{F}$, for
example the $L_2$ inner product induced by the data measure. We omit
the subscript when the context is clear.  \vspace{0.1in}

\subsection{Computation of Principal Arguments in
  $L_2(\mu_X)$}
We describe here how to compute principal angles and vectors in the
data-driven setting, which is critical for the pruning
procedures. This involves defining an appropriate inner product on the
space of observables and leveraging the isomorphism between function
spaces and finite-dimensional Euclidean spaces induced by the data.

Consider the nonlinear system~\eqref{eqn:sys_dynamics}. We utilize a
dataset of $N$ snapshot pairs organized into data matrices
$X, X^+ \in \mathbb{R}^{N \times n}$, where
$X = [x_1, \dots, x_N]^\top$ and $X^+ = [x_1^+, \dots, x_N^+]^\top$
with $x_i^+ = T(x_i)$ for $i=1,\dots,N$. We fix an initial dictionary
of observables $\Psi = [\psi_1, \dots, \psi_s]$ and define the lifted
data matrices $A = \Psi(X) \in \mathbb{R}^{N \times s}$ and
$B = \Psi(X^+) \in \mathbb{R}^{N \times s}$. In this data-driven
framework, we equip the space of observables with the $L_2(\mu_X)$
inner product induced by the empirical data measure $\mu_X = \frac{1}{N} \sum_{i=1}^N \delta_{x_i}$. Specifically, for observables $f,g \in L_2(\mu_X)$, this inner product
is given by
$$
\langle f, g \rangle_{L_2(\mu_X)} = \int_{\mathcal{X}} f(x) g(x) \,
d\mu_X(x) = \frac{1}{N} \sum_{i=1}^N f(x_i) g(x_i).
$$
Under this inner product, we define the discrete evaluation map
$\mathcal{E}: L_2(\mu_X) \rightarrow \mathbb{R}^N$ as:
\begin{equation}
  \mathcal{E}(f) = \begin{bmatrix} f(x_1) \, \cdots \, f(x_N) \end{bmatrix}^{\top}.
\label{eq:isomorpshism_map}  
\end{equation}
The evaluation map $\Ec|_{\Sc}$ forms an isomorphism between the
candidate function subspace $\mathcal{S}$ and the Euclidean column
space $\mathcal{R}(A)$, and similarly the map $\Ec|_{\Kc \Sc}$ forms
an isomorphism between $\mathcal{K}\mathcal{S}$ and the column space
$\mathcal{R}(B)$. We utilize this isomorphism throughout the paper,
allowing us to compute functional geometric properties---such as
principal angles and orthogonal projections---directly via their
finite-dimensional Euclidean representations.

Specifically, let
$S = \text{span}(\Psi) \subset \Fc \subseteq L_2(\mu_X)$ be the
subspace spanned by the dictionary. Under the map
\eqref{eq:isomorpshism_map},
\begin{equation}
  \Sc \equiv \Rc(A), \quad \Kc \Sc \equiv \Rc(B).
  \label{eq:iso_equiv}
\end{equation}
We can compute the principal angles and vectors between $\Rc(A)$ and
$\Rc(B)$ using the SVD-based procedure from
Theorem~\ref{thm:Golub1973}. As the following result shows, this gives
us the corresponding elements between $\Sc$ and $\Kc \Sc$.

\begin{proposition}[Computation of principal angles and vectors in
  $L_2(\mu_X)$]\label{prop:pa_L2}
  Let $\{A u_i^A\}_{i=1}^{{s}} \subset \Rc(A)$,
  $\{B v_i^B\}_{i=1}^{{s}} \subset \Rc(B)$, and
  $\{\theta_i(\Rc(A), \Rc(B))\}_{i=1}^{{s}}$ be the principal vectors
  and angles between $\Rc(A)$ and $\Rc(B)$, where
  $u_i^A, v_i^B \in \real^s$ are the coefficients of the principal
  vectors in the bases of $\Rc(A)$ and $\Rc(B)$, resp. 
  Then,
  \begin{enumerate}
  \item $\theta_i(\Sc, \Kc \Sc) = \theta_i(\Rc(A), \Rc(B))$, for all
    $i \in [s]$,
  \item $ u_i^{\Sc}(\cdot) = \Psi(\cdot) u_i^A$,
    $v_i^{\Kc \Sc}(\cdot) = \Psi(\cdot) v_i^B$, for all $ i \in [s]$.
  \end{enumerate}
\end{proposition}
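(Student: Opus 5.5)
The approach is to show that the evaluation map $\Ec$ is, up to the constant factor $1/N$, an isometry from $L_2(\mu_X)$ (restricted to $\Sc+\Kc\Sc$) onto its image in $\real^N$ with the Euclidean inner product. Principal angles and vectors are defined purely through inner products and norms (Definition~\ref{defn:pa_pv}), so they are preserved by any map that scales all inner products by the same positive constant. The key identity, immediate from the definition of the empirical inner product, is
\begin{equation*}
  \langle f,g\rangle_{L_2(\mu_X)} = \tfrac{1}{N}\,\Ec(f)^\top \Ec(g), \qquad f,g\in L_2(\mu_X).
\end{equation*}
Combining this with linearity of $\Ec$, $\Ec(\Psi(\cdot)c)=Ac$, and $\Ec(\Kc\Psi(\cdot)c)=\Ec(\Psi(T(\cdot))c)=Bc$, we get $\Ec(\Sc)=\Rc(A)$ and $\Ec(\Kc\Sc)=\Rc(B)$, as in \eqref{eq:iso_equiv}. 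Under the standing assumption these restrictions are bijective.

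Next, I would go through the recursive optimization of Definition~\ref{defn:pa_pv} for the pair $(\Sc,\Kc\Sc)$. Each feasible pair $(u,v)=(\Psi c,\ \Kc\Psi d)$ with $\|u\|=\|v\|=1$ corresponds bijectively to the Euclidean pair $(Ac/\sqrt N,\ Bd/\sqrt N)$, which has unit Euclidean norm. Orthogonality constraints are preserved, and the objective satisfies $|\langle u,v\rangle| = |(Ac)^\top(Bd)|/N$, which is exactly the Euclidean objective for the rescaled vectors. I would then argue by induction on $j$. If the first $j-1$ principal pairs correspond, the $j$-th feasible sets correspond bijectively and have equal objective values. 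Hence the maxima agree, so $\cos\theta_j$ agrees and part (a) follows. The same bijection maps maximizers to maximizers, so a Euclidean principal vector $Au_i^A$ (normalized) pulls back to the function $\Psi(\cdot)u_i^A$, and likewise $Bv_i^B$ pulls back to $\Kc\Psi(\cdot)v_i^B=\Psi(T(\cdot))v_i^B$. This gives part (b). The coefficient vector is shared because $\Ec$ acts on coefficients through $A$ or $B$ with the same $c$. The statement's "$v_i^{\Kc\Sc}(\cdot)=\Psi(\cdot)v_i^B$" should be read in the notation \eqref{eqn:setting_principal_arguments}, where the principal vector in $\Kc\Sc$ is $\Kc v_i^{\Kc\Sc}$ with $v_i^{\Kc\Sc}=\Psi v_i^B\in\Sc$.

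The main obstacle is being careful about normalization and non-uniqueness. The factor $\sqrt N$ rescales norms but cancels in the cosines, and it can be absorbed into the coefficients, or one can note that the principal vectors are determined only up to the normalization in the chosen inner product. When principal angles repeat, the principal vectors are not unique. I would handle this by stating (b) as "for any choice of Euclidean principal vectors, the pulled-back functions form a valid choice of functional principal vectors", and conversely. The induction above gives exactly this, since it is a bijection of the feasible sets and of the maximizer sets at each step. No appeal to Theorem~\ref{thm:Golub1973} is needed beyond using it to actually compute $u_i^A, v_i^B$ from orthonormal bases of $\Rc(A)$ and $\Rc(B)$.
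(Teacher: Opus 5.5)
Your proposal is correct and takes the same route as the paper: it transfers principal angles and vectors from $\Rc(A),\Rc(B)$ to $\Sc,\Kc\Sc$ through the evaluation map $\Ec$, and checks that the pulled-back vectors satisfy Definition~\ref{defn:pa_pv}. Your version is more careful in two places. The identity $\langle f,g\rangle_{L_2(\mu_X)}=\tfrac1N\Ec(f)^\top\Ec(g)$, applied across $\Sc$ and $\Kc\Sc$ jointly, is what actually justifies the paper's appeal to ``isomorphisms'', since a mere linear isomorphism need not preserve angles. Your normalization remark also correctly flags that the unit-norm functional principal vectors are $\sqrt{N}\,\Psi(\cdot)u_i^A$ rather than $\Psi(\cdot)u_i^A$, which the statement glosses over.
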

\begin{proof}
  Since $\Ec|_{\Sc}$ and $\Ec|_{\Kc \Sc}$ are isomorphisms, the
  principal angles between $\Sc$ and $\Kc \Sc$ are the same as those
  between $\Rc(A)$ and $\Rc(B)$ due to the
  equivalence~\eqref{eq:iso_equiv}. This establishes the first
  claim. For the second claim, it is easy to verify that
  $\Ec(u_i^{\Sc}) = A u_i^A$ and $\Ec(\Kc v_i^{\Kc \Sc}) = B v_i^B$
  and hence, $\{u_i^{\Sc}\}, \{\Kc v_i^{\Kc \Sc}\}$ satisfy
  Definition~\ref{defn:pa_pv} and are the principal vectors between
  $\Sc$ and $\Kc \Sc$.
\end{proof}

The above discussion clarifies how to compute principal arguments for
the specific $L_2(\mu_X)$ inner product used in the data-driven
setting.
Leveraging this finite-dimensional identification, the algorithms
presented below proceed by iteratively pruning the subspace
$\mathcal{S}$ to find a target subspace $\mathcal{S}^*$ that satisfies
the invariance bound~\eqref{eq:problem-bound}.


\subsection{Single-Principal-Vector (SPV) Pruning}
Given the definition of invariance proximity, the most direct
implementation of the pruning strategy is to iteratively remove the
principal vector corresponding to the maximum principal angle.
Conceptually, the SPV algorithm operates on a ``worst-offender''
principle. At every step, we analyze the alignment between the current
subspace and its image under the dynamics. The principal vector
$u_{\max}^{\Sc} \in \Sc$ corresponding to $\theta_{\max}$ highlights
the specific direction that undergoes the most significant rotation
out of the subspace. By explicitly projecting the subspace onto the
orthogonal complement of~$u_{\max}^{\Sc}$, we surgically remove the
dimension most responsible for the invariance error. This yields a new
subspace and the process is repeated until the maximum angle falls
below the desired tolerance $\epsilon$. This is formalized in
Algorithm \ref{alg:naive-pruning}.

\begin{algorithm}[h]
  \caption{\textbf{SPV Pruning}}
  \label{alg:naive-pruning}
  \begin{algorithmic}[1]
    \REQUIRE $\Sc, \Kc \Sc \subset \Fc$, $\epsilon \in [0,1]$
    \STATE Initialize $\Sc_1 \gets \Sc$, \, $i \gets 0$
    \WHILE{True}
    \STATE $i \gets i + 1$
    \IF{$\Sc_{i} = \emptyset$}
    \RETURN $\emptyset$ \RightComment{Terminate with failure}
    \ENDIF
    \STATE $\{u^{\Sc}_j\}, \{\theta_j\} \gets \text{Compute principal
      arguments}(\Sc_i, \Kc \Sc_i)$ 
    \IF{$\sin \theta_{\max} \leq \epsilon$}
    \RETURN $\Sc_i$ \RightComment{Terminate with success}
    \ENDIF
    \STATE $\Sc_{i+1} \gets \Sc_i \setminus \text{span}(u_{\max}^{\Sc})$
    \ENDWHILE
  \end{algorithmic}
\end{algorithm}

\subsection{Equivalence of RFB-EDMD and SPV
  Algorithms}\label{sec:equivalence}

Here, we explain the equivalence of SPV pruning with the Recursive
Forward-Backward EDMD (RFB-EDMD) algorithm introduced
in~\cite{MH-JC:25-access}.  To do so, we start by introducing key
ingredients of RFB-EDMD. Given data matrices
$X, X^+ \in \mathbb{R}^{N \times n}$ from the nonlinear
system~\eqref{eqn:sys_dynamics}, consider the standard ``forward''
EDMD matrix $K_f = \Psi(X)^\dagger \Psi(X^+)$ (corresponding to the
forward-in-time evolution $x \to x^+$) and the ``backward'' EDMD
matrix $K_b = \Psi(X^+)^\dagger \Psi(X)$ (corresponding to the
backward-in-time evolution $x^+ \to x$).  Let $M_c = I - K_f K_b$ be
the \textit{consistency matrix} measuring the discrepancy between the
forward and backward predictions.
The next result establishes that the eigenvalues of the consistency
matrix are exactly the squared sines of the principal angles between
the search space $\mathcal{S}$ and its image
$\Kc \mathcal{S}$.
%
%
Furthermore, the result identifies the eigenvectors of
$M_c$ as the coefficients of the principal vectors in~$\Sc$.

\begin{lemma}\longthmtitle{Spectral Characterization of
    Consistency}\label{lem:spectral-consistency}
  Let $\Sc \subset \Fc$ be the subspace spanned by the
  dictionary~$\Psi = [\psi_1, \dots, \psi_s]$. Let
  $A = \Psi(X) \in \real^{N \times s}$ and
  $B = \Psi(X^+) \in \real^{N \times s}$ be the data matrices
  representing the domain and image of the Koopman operator on
  $\mathcal{S}$, with full column rank. Then, the consistency matrix
  $M_c = I - K_f K_b$ satisfies:
  \begin{enumerate}
  \item Its eigenvalues $\{\lambda_i \}_{i=1}^s$ are squared sines of
    the principal angles, i.e., $ \lambda_i = \sin^2 \theta_i$ 
    $\forall \,\, i \in [s]$;

  \item Its eigenvectors $\{v_i\}_{i=1}^s$ correspond to the principal
    vectors of $\mathcal{S}$, as specified by
    $u^{\Sc}_i(\cdot) = \Psi(\cdot)v_i$, $\forall \,\,i \in [s]$.
  \end{enumerate}
\end{lemma}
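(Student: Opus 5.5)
The plan is to express $K_f K_b$ through orthogonal projections and then read off its spectrum via Theorem~\ref{thm:Golub1973}. Since $A$ and $B$ have full column rank, $A^\dagger = (A^\top A)^{-1}A^\top$ and $B^\dagger = (B^\top B)^{-1}B^\top$. Hence
\[
K_f K_b = A^\dagger B B^\dagger A = A^\dagger P_B A, \qquad P_B \triangleq B B^\dagger ,
\]
where $P_B$ is the orthogonal projector onto $\Rc(B)$ in $\real^N$. The empirical inner product is a scalar multiple of the Euclidean one, so it does not change angles or eigenvectors.

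Next take thin QR factorizations $A = Q_A R_A$ and $B = Q_B R_B$, with $R_A$ and $R_B$ invertible. Then $P_B = Q_B Q_B^\top$ and $A^\dagger = R_A^{-1} Q_A^\top$. This gives
\[
K_f K_b = R_A^{-1}\,\big(Q_A^\top Q_B Q_B^\top Q_A\big)\,R_A .
\]
So $K_f K_b$ is similar to the symmetric matrix $W = (Q_A^\top Q_B)(Q_A^\top Q_B)^\top$. Write the SVD $Q_A^\top Q_B = \widetilde U \Sigma \widetilde V^\top$. Because both dimensions equal $s$, $\widetilde U$ is an $s\times s$ orthogonal matrix, and therefore $W = \widetilde U \Sigma^2 \widetilde U^\top$.

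Item (a) now follows from Theorem~\ref{thm:Golub1973}. The eigenvalues of $K_f K_b$ are $\sigma_i^2 = \cos^2\theta_i(\Rc(A),\Rc(B))$, so those of $M_c = I - K_f K_b$ are $1 - \cos^2\theta_i = \sin^2\theta_i$. By Proposition~\ref{prop:pa_L2}(a), these angles coincide with $\theta_i(\Sc,\Kc\Sc)$.

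For item (b), the similarity shows that the eigenvectors of $K_f K_b$, and hence of $M_c$, are $v_i = R_A^{-1}\widetilde u_i$. By Theorem~\ref{thm:Golub1973}, the principal vector in $\Rc(A)$ is $Q_A \widetilde u_i = A R_A^{-1}\widetilde u_i = A v_i$. So $v_i$ is exactly the coefficient vector $u_i^A$ of Proposition~\ref{prop:pa_L2}, and part (b) of that proposition gives $u_i^{\Sc}(\cdot) = \Psi(\cdot) v_i$.

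The main obstacle is repeated principal angles, i.e., repeated eigenvalues of $M_c$. In that case the eigenvectors are not unique, and the claim must be read as holding for a suitable choice of eigenbasis. I would handle this as follows. Any orthonormal eigenbasis of $W$ within an eigenspace yields valid left singular vectors of $Q_A^\top Q_B$, with matching right singular vectors $\widetilde v_i = \Sigma^{-1}\widetilde V^\top$-type combinations when $\sigma_i > 0$, or vectors chosen from the nullspace when $\sigma_i = 0$. Either way these are valid principal vectors in the sense of Definition~\ref{defn:pa_pv}. I would also note that the $v_i$ are orthonormal in the $A^\top A$-weighted inner product, which matches the orthonormality of the $u_i^{\Sc}$ in $L_2(\mu_X)$ up to the scalar $1/N$.
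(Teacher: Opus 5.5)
Your proposal is correct and follows essentially the same route as the paper. Both arguments write $K_fK_b = A^\dagger P_B A$, use thin QR factorizations and the SVD of $Q_A^\top Q_B$ to obtain the similarity $M_c = (R_A^{-1}\widetilde U)\sin^2\Theta\,(R_A^{-1}\widetilde U)^{-1}$, and then read off the eigenpairs via Theorem~\ref{thm:Golub1973} and Proposition~\ref{prop:pa_L2}. Your extra caveat is a sensible addition that the paper leaves implicit: with repeated angles, or arbitrary eigenvector scaling, the eigenvectors coincide with principal-vector coefficients only for a suitable choice of eigenbasis and normalization.
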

\begin{proof}
  Using the definition of the forward $ K_f = A^\dagger B$ and
  backward $ K_b = B^\dagger A $ EDMD matrices, we have that
  $M_c = I - K_f K_b = I - A^\dagger B B^\dagger A$. Note that
  $P_B = B B^\dagger$ is the orthogonal projection onto
  $\mathcal{R}(B)$. Therefore, $M_c = I - A^\dagger P_B A$.  Consider
  the QR decompositions $A = Q_A R_A$ and $B = Q_B R_B$, where
  $Q_A, Q_B$ have orthonormal columns and $R_A, R_B$ are invertible
  upper triangular matrices. Substituting into the expression for
  $M_c$, we get
  \begin{align*}
    M_c &= I - (R_A^{-1} Q_A^{\top}) (Q_B Q_B^{\top}) (Q_A R_A) \\
        &= I - R_A^{-1} (Q_A^{\top} Q_B Q_B^{\top} Q_A) R_A .
  \end{align*}
  The term $Q_A^{\top} Q_B$ is the matrix of inner products between the
  orthonormal bases. Following Theorem~\ref{thm:Golub1973}, let the SVD of
  this matrix be $ Q_A^{\top} Q_B = U_A (\cos \Theta) V_B^{\top} $, where
  $\cos \Theta = \text{diag}(\cos \theta_1, \dots, \cos \theta_s)$
  contains the cosines of the principal angles. Substituting this SVD
  back, we obtain:
  \begin{align*}
    Q_A^{\top} Q_B Q_B^{\top} Q_A
    = U_A
      \cos^2 \Theta U_A^{\top} .
  \end{align*}
  Therefore, the consistency matrix becomes:
  \begin{align*}
    M_c &= I - R_A^{-1} (U_A \cos^2 \Theta U_A^{\top}) R_A
    \\
        &= R_A^{-1} (I - U_A \cos^2 \Theta U_A^{\top}) R_A \\
        &= R_A^{-1} U_A (I - \cos^2 \Theta) U_A^{\top} R_A .
  \end{align*}
  Since $I - \cos^2 \Theta = \sin^2 \Theta$, we have
  \begin{align*}
    M_c = (R_A^{-1} U_A) \sin^2 \Theta (R_A^{-1} U_A)^{-1}.    
  \end{align*}
  This similarity relation proves that the eigenvalues of $M_c$ are
  exactly $\{\sin^2 \theta_i\}_{i=1}^s$. Furthermore, the eigenvectors
  of $M_c$ are the columns of $R_A^{-1} U_A$. The $i$-th eigenvector
  $v_i = R_A^{-1} u^A_i$ of $M_c$ satisfies
  \begin{align*}
    \Psi(X) v_i = A v_i = Q_A R_A (R_A^{-1} u^A_i) = Q_A u^A_i.    
  \end{align*}
  Since $u^A_i$ is the left singular vector of $Q_A^{\top} Q_B$, according
  to Theorem~\ref{thm:Golub1973}, the vector $Q_A u^A_i = A v_i$ is
  precisely the $i$-th principal vector of the subspace
  $\mathcal{R}(A)$. Utilizing Proposition~\ref{prop:pa_L2}, we have
  $u^{\Sc}_i(\cdot) = \Psi(\cdot)v_i$ for all $i \in [s]$.
\end{proof}
\smallskip

We leverage this spectral characterization to establish the algebraic
equivalence between the RFB-EDMD and SPV pruning algorithms.

\begin{theorem}[Algorithmic Equivalence]\label{thm:RFB-EDMD_SPV}
  The \textit{RFB-EDMD} algorithm is algebraically equivalent to the
  \textit{SPV} pruning algorithm. Specifically, at every iteration,
  both algorithms remove the same one-dimensional subspace from the
  search space $\mathcal{S}$.
\end{theorem}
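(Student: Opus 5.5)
The plan is to prove the equivalence by induction on the iteration index. At each iteration, the direction removed by RFB-EDMD coincides with the direction removed by SPV, because of the spectral identity in Lemma~\ref{lem:spectral-consistency}.

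First I would recall the RFB-EDMD iteration from~\cite{MH-JC:25-access}:
\begin{enumerate}
\item given the current dictionary (coefficient matrix $C_i$ with $\Sc_i=\text{span}(\Psi C_i)$), form $A_i=\Psi(X)C_i$ and $B_i=\Psi(X^+)C_i$;
\item build the consistency matrix $M_c^{(i)}=I-A_i^\dagger B_i B_i^\dagger A_i$;
\item stop if its largest eigenvalue is at most $\epsilon^2$;
\item otherwise, discard the eigenvector $w_{\max}$ associated with that largest eigenvalue and keep the span of the remaining eigenvectors as the new dictionary.
\end{enumerate}

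The base case and inductive step both apply Lemma~\ref{lem:spectral-consistency} to the current dictionary $\Psi C_i$. Full column rank is preserved because we retain a subset of linearly independent directions, and $\dim \Kc\Sc_i=\dim\Sc_i$ by the standing assumption. The lemma then gives:
\begin{enumerate}
\item the eigenvalues of $M_c^{(i)}$ are $\sin^2\theta_j(\Sc_i,\Kc\Sc_i)$, so the RFB-EDMD stopping test $\lambda_{\max}\le\epsilon^2$ is identical to the SPV test $\sin\theta_{\max}\le\epsilon$;
\item the eigenvector $w_{\max}$ satisfies $\Psi C_i w_{\max}=u^{\Sc_i}_{\max}$, so the function direction flagged by RFB-EDMD is exactly $\text{span}(u^{\Sc_i}_{\max})$.
\end{enumerate}

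Next I would show that the retained subspaces coincide, i.e.\ that the span of the remaining eigenvectors equals $\Sc_i\setminus\text{span}(u^{\Sc_i}_{\max})$, understood as the orthogonal complement of $u_{\max}$ within $\Sc_i$. The proof of Lemma~\ref{lem:spectral-consistency} writes the eigenvectors as the columns of $R_A^{-1}U_A$, so the remaining functions are $\Psi C_i R_A^{-1}u^A_j=Q_A u^A_j$ for $j\neq\max$. These are the other principal vectors, which are orthonormal in $L_2(\mu_X)$. Their span is therefore exactly the $L_2(\mu_X)$-orthogonal complement of $u_{\max}$ in $\Sc_i$, which is the SPV update. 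The new image $\Kc\Sc_{i+1}$ is then also the same for both algorithms, and the induction proceeds. Termination with failure when the space becomes empty matches as well.

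The main obstacle is that $M_c$ is not symmetric, so its eigenvectors are not orthogonal in $\real^s$. If RFB-EDMD removes the eigenvector by projecting in coefficient space, or if the largest eigenvalue is repeated, the removed subspace is not obviously the same. I would handle this as follows:
\begin{enumerate}
\item argue via the $L_2(\mu_X)$ geometry as above, noting that the eigenvector basis is $A$-orthogonal since $(R_A^{-1}U_A)^\top A^\top A (R_A^{-1}U_A)=I$, so the complement must be taken in the data inner product;
\item for repeated eigenvalues, state the equivalence modulo the same tie-breaking choice within the eigenspace, which is also the principal-vector subspace for that angle.
\end{enumerate}
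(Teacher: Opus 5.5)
Your proposal is correct and follows the paper's route. At each iteration you apply Lemma~\ref{lem:spectral-consistency} to identify $\lambda_{\max}$ with $\sin^2\theta_{\max}$ and $\Psi(\cdot)v_{\max}$ with $u^{\Sc}_{\max}$, so both algorithms remove the same direction and produce the same nested sequence. Your extra checks make explicit steps the paper's proof takes for granted: matching stopping tests, the fact that the $A^\top A$-orthogonal remaining eigenvectors span exactly the $L_2(\mu_X)$-orthogonal complement of $u^{\Sc}_{\max}$, and the tie-breaking caveat for a repeated top eigenvalue.
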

\begin{proof}
  Let $\mathcal{S}_k$ be the search space at iteration $k$:

  The \textit{RFB-EDMD} algorithm computes the consistency matrix
  $M_c$ and identifies the eigenvector $v_{\max}$ corresponding to the
  largest eigenvalue $\lambda_{\max}$. It then updates the subspace to
  be the orthogonal complement of this direction according to
  $\mathcal{S}_{k+1} = \mathcal{S}_k \setminus
  \text{span}\{\Psi(\cdot)v_{\max}\}$.

  The \textit{SPV} algorithm computes the principal vectors between
  $\mathcal{S}_k$ and its image $\Kc \Sc_k$. It identifies the
  principal vector $u_{\max}^{\Sc} \in \Sc_k$ corresponding to the
  largest principal angle $\theta_{\max}$ and removes it according to
  $\mathcal{S}_{k+1} = \mathcal{S}_k \setminus
  \text{span}\{u_{\max}^{\Sc}\}$.
  
  From Lemma~\ref{lem:spectral-consistency},
  $\lambda_{\max} = \sin^2 \theta_{\max}$. Since $\sin^2 \theta$ is
  monotonic $[0, \pi/2]$, maximizing the eigenvalue is equivalent to
  maximizing the principal angle.  Furthermore,
  Lemma~\ref{lem:spectral-consistency} establishes that the function
  defined by the eigenvector $v_{\max}$ is exactly the principal
  vector $u_{\max}^{\Sc}$, i.e.,
  $ \Psi(\cdot)v_{\max} \equiv u_{\max}^{\Sc} $.  Therefore, both
  algorithms remove the exact same function direction from the
  dictionary span at every step, generating the same sequence of
  nested subspaces $\mathcal{S}_0 \supset \mathcal{S}_1 \supset \dots$
  and terminating at the same final subspace.
\end{proof}

\subsection{Multiple Principal Vector (MPV) Pruning}

The equivalence established in Theorem \ref{thm:RFB-EDMD_SPV} provides
a geometric interpretation of RFB-EDMD and brings up the possibility
of further refinement for increased computational efficiency. This is
because the SPV strategy (and, by extension, RFB-EDMD) removes only a
single dimension --the ``worst offender''-- at each iteration. For
high-dimensional dictionaries where many directions might
simultaneously violate the invariance tolerance $\epsilon$, this
sequential approach may be computationally intensive.  This
observation raises a natural question: is it reasonable to prune
\textit{all} violating directions simultaneously? The following result
provides justification for this type of aggressive pruning, showing
that any direction currently violating the tolerance is fundamentally
incompatible with the target invariance condition.

%
%

\begin{lemma}\longthmtitle{Dropping Multiple Principal Vectors}\label{Lemma:dropping_mult_pvs}
  %
  Consider the subspaces $\Sc, \Kc \Sc \subset \Fc$ according to \eqref{eqn:setting_principal_arguments}. Given a fixed
  $\epsilon^{\star} > 0$, let $\theta_k$ be the smallest principal
  angle satisfying $\sin \theta_k > \epsilon^{\star}$.  Define the
  subspace of violating directions
  $\Wc = \text{span}(u^{\Sc}_i \, \big| \, \sin \theta_i > \epsilon^{\star})
  = \text{span}(u^{\Sc}_{i})_{i = k}^{s}$.  Let
  $\Sc_{\epsilon} \subseteq \Sc$ be any subspace satisfying the
  invariance condition $\delta(\Sc_{\epsilon}) \leq \epsilon$, where
  $\epsilon \leq \epsilon^{\star}$. Then,
  \begin{equation}
    \Wc \cap \Sc_{\epsilon} = \left\{0\right\} .
  \end{equation}
  %
  %
\end{lemma}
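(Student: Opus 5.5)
The plan is to argue by contradiction, using the same geometric quantity on both sides: the distance of a unit vector to the image subspace, measured once for the large subspace $\Sc$ and once for the pruned subspace $\Sc_\epsilon$. The whole proof rests on one computation, which I will use twice.

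\emph{Key computation.} Let $\Uc$ be any subspace appearing in \eqref{eqn:setting_principal_arguments} (i.e., $\dim \Uc = \dim \Kc\Uc$). Let $\{u_j\}$, $\{v_j\}$ and $\{\theta_j\}$ be its principal vectors and angles with $\Kc\Uc$. Then, for $f = \sum_j c_j u_j \in \Uc$,
\begin{equation*}
\text{dist}(f,\Kc\Uc)^2 = \|f\|^2 - \|\Pc_{\Kc\Uc} f\|^2 = \sum_j c_j^2 \sin^2\theta_j .
\end{equation*}
The justification has three parts. Because $\dim \Kc\Uc = \dim \Uc$, the vectors $\{v_j\}$ form an orthonormal basis of $\Kc\Uc$, by the remark following Definition~\ref{defn:pa_pv}. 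Next, the relation $\langle u_i, v_j\rangle = \delta_{ij}\cos\theta_i$ (cf.~\cite[Proposition~4.4]{MH-JC:24-csl-arxiv-revised}) gives $\Pc_{\Kc\Uc} u_i = \cos\theta_i\, v_i$. Finally, the $u_j$ are orthonormal, so the projected components add in squares.

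\emph{Step 1: vectors of $\Wc$ are far from $\Kc\Sc$.} Apply the computation with $\Uc = \Sc$ to a unit vector $w = \sum_{i\ge k} c_i u_i^{\Sc} \in \Wc$. Since $\sum_{i \ge k} c_i^2 = 1$ and $\sin\theta_i \ge \sin\theta_k$ for $i \ge k$,
\begin{equation*}
\text{dist}(w,\Kc\Sc)^2 = \sum_{i\ge k} c_i^2 \sin^2\theta_i \ge \sin^2\theta_k > (\epsilon^\star)^2 .
\end{equation*}

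\emph{Step 2: vectors of $\Sc_\epsilon$ are close to $\Kc\Sc_\epsilon$.} First I check that $\Sc_\epsilon$ satisfies the standing dimension assumption. $\Kc$ maps $\Sc$ onto $\Kc\Sc$ with $\dim \Sc = \dim \Kc\Sc = s$, so $\Kc$ is injective on $\Sc$, and hence $\dim \Kc\Sc_\epsilon = \dim \Sc_\epsilon$. Now apply the computation with $\Uc = \Sc_\epsilon$ and its own principal angles $\theta'_j$. For any unit $f\in\Sc_\epsilon$,
\begin{equation*}
\text{dist}(f,\Kc\Sc_\epsilon)^2 \le \max_j \sin^2\theta'_j = \delta(\Sc_\epsilon)^2 \le \epsilon^2 .
\end{equation*}

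\emph{Step 3: conclude.} Suppose a unit vector $w$ lies in $\Wc\cap\Sc_\epsilon$. Since $\Sc_\epsilon \subseteq \Sc$, we have $\Kc\Sc_\epsilon\subseteq\Kc\Sc$, and therefore
\begin{equation*}
\text{dist}(w,\Kc\Sc) \le \text{dist}(w,\Kc\Sc_\epsilon) \le \epsilon \le \epsilon^\star .
\end{equation*}
This contradicts Step 1, so $\Wc\cap\Sc_\epsilon=\{0\}$.

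The step I expect to need the most care is Step 2. Definition~\ref{definition:IPT} is phrased through principal angles, so I must derive the one-sided bound on the distance from $\Sc_\epsilon$ to its image. This is exactly where the equal-dimension property is used: it makes the $\{v'_j\}$ a full orthonormal basis of $\Kc\Sc_\epsilon$, so the key computation applies. Establishing that property is the purpose of the injectivity argument at the start of Step 2.
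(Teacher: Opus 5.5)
Your proof is correct and follows essentially the paper's argument. Both argue by contradiction for a unit vector in $\Wc\cap\Sc_\epsilon$, expand it in the principal basis of $\Wc$ to bound how well $\Kc\Sc$ captures it, and use $\Kc\Sc_\epsilon\subseteq\Kc\Sc$ for the opposite bound; you phrase this with distances and sines where the paper uses projection norms and cosines.

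Your injectivity check that $\dim\Kc\Sc_\epsilon=\dim\Sc_\epsilon$ usefully makes explicit a hypothesis the paper uses only implicitly when it invokes Corollary~\ref{corollary:bounds_proj_norm}.
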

\begin{proof}
  Suppose, for the sake of contradiction, that there exists a nonzero unit
  vector $u \in \Wc$ such that $u \in \Sc_{\epsilon}$. Utilizing
  Corollary \ref{corollary:bounds_proj_norm}, we can lower bound the
  projection of $u$ onto the image of the target subspace:
  \begin{align}\label{eq:proj_lower_bound}
    \| \Pc_{\Kc \Sc}u \|^2
    &= \| \Pc_{\Kc \Sc_{\epsilon}}u \|^2 + \|
      \Pc_{\Kc \Sc \setminus \Kc \
      Sc_{\epsilon}}u \|^2 \geq \| \Pc_{\Kc
      \Sc_{\epsilon}}u \|^2 \notag
    \\ 
    &\geq \cos^2 \theta_{\max}(\Sc_{\epsilon},
      \Kc \Sc_{\epsilon}) \, \| u \|^2 \geq (1
      - \epsilon^2) \, \| u \|^2.  
  \end{align}
  Alternatively, since $u \in \Wc$, it can be written as
  $u = \sum_{i = k}^{s} c_i u^{\Sc}_i$ for some coefficients
  $c_i \in \real$. We can upper bound the projection using the
  specific principal angles of the violating vectors:
  \begin{align}
    &\max_{u \in \Wc} \frac{\|\Pc_{\Kc \Sc}u\|^2}{\|u\|^2} = 
      \max_{c \in \real^{s-k+1}} \frac{c^{\top} \text{diag}[\cos^2
      \theta_i]_{i = k}^{s} c}{c^{\top} c} = \cos^2 \theta_k \notag
    \\ 
    &\implies \|\Pc_{\Kc \Sc}u\|^2 \leq \cos^2 \theta_k \, \| u \|^2
      < (1 - (\epsilon^{\star})^2) \, \| u \|^2,
      \label{eq:proj_upper_bound}
  \end{align}
  where we have used $u \neq 0$ in the last inequality.  Since
  $\epsilon \leq \epsilon^{\star}$, the lower bound
  \eqref{eq:proj_lower_bound} reads
  $\|\Pc_{\Kc \Sc}u\|^2 \geq (1 - \epsilon^2) \, \| u \|^2 \geq (1 -
  (\epsilon^{\star})^2) \, \| u \|^2$, which contradicts the upper
  bound \eqref{eq:proj_upper_bound}.  Thus,
  $\Wc \cap \Sc_{\epsilon} = \{0\}$.
\end{proof}

Lemma~\ref{Lemma:dropping_mult_pvs} constitutes a strong exclusion
result, establishing that the subspace $\mathcal{W}$, spanned by all
principal vectors exceeding the tolerance, is strictly disjoint from
any admissible subspace $\mathcal{S}_\epsilon$ satisfying the
invariance criterion. It is important to note, however, that
disjointedness does not imply orthogonality; vectors in $\mathcal{W}$
may still possess a non-trivial projection onto
$\mathcal{S}_\epsilon$. Consequently, $\mathcal{S}_\epsilon$ is
generally not contained within the surviving subspace
$\mathcal{S}^{\text{new}} = \mathcal{S} \setminus \mathcal{W}$. This
leads to a leakage of the target subspace $\mathcal{S}_\epsilon$ into
the pruned directions. Furthermore, the more aggressive the pruning
(i.e., the larger the number of principal vectors dropped), the larger
this leakage can be. This phenomenon is analyzed in detail in
Section~\ref{sec:properties_pruning}, where we derive bounds on this
error based on the principal angles of the dropped vectors.
%

Based on Lemma~\ref{Lemma:dropping_mult_pvs}, we propose the
\textbf{Multiple-Principal-Vector (MPV)} pruning strategy. Instead of
pruning dimensions one by one, we discard the entire subspace
$\mathcal{W}$ in a single batch. This modification is formalized in
Algorithm \ref{alg:multi-pv-pruning}.

\begin{algorithm}[h]
  \caption{\textbf{MPV Pruning}}
  \label{alg:multi-pv-pruning}
  \begin{algorithmic}[1]
    \REQUIRE $\Sc, \Kc \Sc \subset \Fc$, $\epsilon \in [0,1]$
    \STATE Initialize $\Sc_1 \gets \Sc$, \, $i \gets 0$
    \WHILE{True}
    \STATE $i \gets i + 1$
    \IF{$\Sc_{i} = \emptyset$}
    \RETURN $\emptyset$ \RightComment{Terminate with failure}
    \ENDIF
    \STATE $\{u^{\Sc}_j\}, \{\theta_j\} = \text{Compute principal arguments}(\Sc_i, \Kc \Sc_i)$
    \IF{$\sin \theta_{\max} \leq \epsilon$}
    \RETURN $\Sc_i$ \RightComment{Terminate with success}
    \ENDIF
    \STATE $\Sc_{i+1} \gets \text{span}\{ u^{\Sc}_j \, | \, \sin \theta_j \leq \epsilon \}$ \RightComment{Batch Pruning}
    \ENDWHILE
  \end{algorithmic}
\end{algorithm}

Note that Algorithm \ref{alg:multi-pv-pruning} differs from Algorithm
\ref{alg:naive-pruning} only in the pruning step (Line 11), yet this
modification can reduce the number of iterations drastically,
especially in high-dimensional settings. As mentioned above, the
downside to MPV pruning is that it can be too aggressive, leading to a
large leakage of the target invariant subspace into the pruned
directions. This can result in a pruned subspace which is much smaller
than what is needed to satisfy the invariance specification.
Furthermore, if there is no subspace that satisfies the invariance
condition, MPV will output the trivial empty subspace. Instead, the
SPV algorithm provides a sequence of nested subspaces that can be used
to trade-off between invariance and expressivity. Thus, if the user is
unsure about what tolerance $\epsilon$ to select, SPV provides a more
gentle approach to gradually reach the desired invariance level.
\begin{remark}[Connection with the Principal Angle Decomposition (PAD)
  algorithm~\cite{GC-NB-JCL-SB-MJC:26}]
  Interestingly, the recently proposed PAD
  algorithm~\cite{GC-NB-JCL-SB-MJC:26} corresponds to the first
  iteration of the MPV algorithm proposed here.  The PAD algorithm
  builds on the computational tools (specifically, the use of Gram
  matrices with entries $ \innerprodF{\Kc\psi_i}{\Kc\psi_j}$,
  $i,j =1,\dots,s$) introduced in~\cite{MJC-LJA-MS:23,MJK:24} for
  ResDMD to compute residuals and subspace-invariance diagnostics.
  The PAD algorithm performs a single batch pruning step, where all
  directions with principal angles above the tolerance are removed
  simultaneously. Note, however, that this does not guarantee that the
  obtained subspace satisfies the invariance
  condition~\eqref{eq:problem-bound}. \oprocend
\end{remark}
%
%

\section{Properties of Subspace Pruning
  Algorithms}\label{sec:properties_pruning}

This section evaluates the properties of the subspace pruning
algorithms, emphasizing the trade-offs between the SPV and MPV
approaches. Crucially, we identify a significant vulnerability in
high-dimensional pruning: the potential for the target invariant
subspace to leak into the pruned directions. By analyzing the
conditions required to mitigate this leakage, we also establish the
foundation for a more robust pruning strategy.

\subsection{Bounds on Information Loss for Subspace
  Pruning}\label{sec:approx-eigen}
%
%

The algorithmic equivalence between the SPV and RFB-EDMD algorithms
established in Section~\ref{sec:equivalence} implies that exact
eigenfunctions contained in the initial dictionary are preserved by
the pruning process, cf.~\cite[Theorem~5]{MH-JC:25-access}. However,
exact invariance is rare, particularly in data-driven settings.  Here,
instead, we are interested in studying the behavior of the pruning
algorithms with respect to $\epsilon$-\emph{approximate}
eigenfunctions, which we define as a function $f \in \Sc$ whose image
$\Kc f$ satisfies
\begin{align*}
  \sin \theta(f,\Kc f) \leq \epsilon,   
\end{align*}
where $\theta(f,\Kc f)$ is the angle between $f$ and $\Kc f$ defined
by
$\cos \theta(f, \Kc f) = \frac{\langle f, \Kc f \rangle}{\|f\| \,
  \|\Kc f\|}$.  We show that the projection of an
$\epsilon$-approximate eigenfunction onto the pruned space remains a
$C\epsilon$-approximate eigenfunction, for some constant $C$ that
depends on the principal angles of the dropped vectors.  To do this,
we examine the core mechanism of the SPV and MPV algorithms, which is
to remove (principal) vectors that least contribute to the invariance,
and quantify the error introduced by this removal.  Intuitively, if we
drop a vector $u_s$ that is far from being invariant (large
corresponding principal angle $\theta_s$), the component of an
approximate function $f$ along $u_s$ must be negligible.

\begin{lemma}[Bound on Information Loss for
  MPV]\label{lemma:pruning_error_bound}
  Consider the subspaces $\Sc, \Kc \Sc \subset \Fc$ according to \eqref{eqn:setting_principal_arguments}. Let
  $\mathcal{I}_{\text{drop}} \subset \{1, \dots, s\}$ be the set of
  indices of the principal vectors to be pruned and
  $\mathcal{S}^{\text{new}} = \text{span}\{u^{\Sc}_i\}_{i \notin
    \mathcal{I}_{\text{drop}}}$ the pruned subspace.  Let
  $\gamma = \min_{k \in \mathcal{I}_{\text{drop}}} \sin \theta_k$ be
  the sine of the minimum principal angle among the dropped
  vectors. If $f \in \mathcal{S}$ is an $\epsilon$-approximate
  eigenfunction satisfying $\| f \| = 1$, then
  \begin{equation}
    \label{eqn:info_loss_bound}
    \text{dist}(f, \mathcal{S}^{\text{new}}) \le \frac{\epsilon}{\gamma}.  
  \end{equation}
\end{lemma}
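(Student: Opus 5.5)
The approach is to expand $f$ in the orthonormal basis of principal vectors $\{u^{\Sc}_i\}_{i=1}^s$ of $\Sc$. We then compare two expressions for the distance from $f$ to $\Kc\Sc$: an exact one in terms of the principal angles, and an upper bound coming from the approximate-eigenfunction property.

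\textbf{Step 1 (the pruning error).} Write $f=\sum_{i=1}^s c_i u^{\Sc}_i$ with $\sum_i c_i^2=\|f\|^2=1$. Since $\mathcal{S}^{\text{new}}$ is spanned by the orthonormal vectors $u^{\Sc}_i$, $i\notin\mathcal{I}_{\text{drop}}$, the orthogonal projection onto it simply keeps those coefficients. Hence
\begin{equation*}
  \text{dist}(f,\mathcal{S}^{\text{new}})^2=\sum_{i\in\mathcal{I}_{\text{drop}}} c_i^2 .
\end{equation*}

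\textbf{Step 2 (exact leakage of $f$ out of $\Kc\Sc$).} By the standing assumption $\dim(\Kc\Sc)=s$, so the vectors $\{\Kc v^{\Kc\Sc}_j\}_{j=1}^s$ form an orthonormal basis of $\Kc\Sc$. The biorthogonality relation $\langle u_i,v_j\rangle=\delta_{ij}\cos\theta_i$ (\cite[Proposition~4.4]{MH-JC:24-csl-arxiv-revised}) gives $\Pc_{\Kc\Sc}u^{\Sc}_i=\cos\theta_i\,\Kc v^{\Kc\Sc}_i$. This is the key point: there are no cross terms, so the projection is diagonal in the principal-vector bases. Therefore
\begin{equation*}
  \|\Pc_{\Kc\Sc}f\|^2=\sum_i c_i^2\cos^2\theta_i ,
\end{equation*}
and by the Pythagorean theorem
\begin{equation*}
  \text{dist}(f,\Kc\Sc)^2=\sum_i c_i^2\sin^2\theta_i\;\ge\;\gamma^2\sum_{i\in\mathcal{I}_{\text{drop}}}c_i^2 .
\end{equation*}
The inequality holds because every term is nonnegative and $\sin\theta_i\ge\gamma$ on the dropped indices.

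\textbf{Step 3 (upper bound from approximate invariance).} Since $\Kc f\in\Kc\Sc$, we have $\text{dist}(f,\Kc\Sc)\le\text{dist}(f,\text{span}(\Kc f))$. For a unit vector $f$, the distance to the line spanned by $\Kc f$ equals $\sin\theta(f,\Kc f)$, which is at most $\epsilon$ by hypothesis. Here we use that $\Kc f\neq 0$, which holds because $\Kc$ is injective on $\Sc$ by the dimension assumption, and that the sine is insensitive to the sign of $\langle f,\Kc f\rangle$.

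\textbf{Step 4 (combine).} Steps 1--3 give $\gamma^2\,\text{dist}(f,\mathcal{S}^{\text{new}})^2\le\epsilon^2$, which yields \eqref{eqn:info_loss_bound}.

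The main obstacle is Step 2, namely justifying the diagonal form of $\Pc_{\Kc\Sc}$ on the principal vectors. This requires that the image-side principal vectors span all of $\Kc\Sc$, which is where the equal-dimension standing assumption enters. Once that is in place, the rest is bookkeeping.
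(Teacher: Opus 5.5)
Your proof is correct and follows essentially the same route as the paper: you expand $f$ in the principal basis of $\Sc$, use that $\Pc_{\Kc\Sc}$ acts diagonally on the principal vectors, and compare against the approximate-eigenfunction bound $\|\Pc_{\Kc\Sc}f\|^2 \ge \|\Pc_{\Kc f}f\|^2 \ge 1-\epsilon^2$. The only cosmetic difference is that you work with $\text{dist}(f,\Kc\Sc)^2=\sum_i c_i^2\sin^2\theta_i$ and discard the retained terms, whereas the paper works with the complementary quantity $\sum_i c_i^2\cos^2\theta_i$ and bounds the retained cosines by one; by Pythagoras these are the same estimate.
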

\begin{proof}
  Let $f = \sum_{i=1}^s c_i u^{\Sc}_i$ be the decomposition of $f$ in the
  principal basis. The distance to the subspace
  $\mathcal{S}^{\text{new}}$ is determined by the energy in the
  dropped components,
  \begin{align*}
    \text{dist}(f, \mathcal{S}^{\text{new}})^2 = \sum_{k \in
    \mathcal{I}_{\text{drop}}} c_k^2 .     
  \end{align*}
  We analyze the projection of $f$ onto the image subspace
  $\Kc \mathcal{S}$. From the assumption
  $\sin \theta(f, \Kc f) \le \epsilon$, we have
  \begin{align*}
    \|P_{\Kc \mathcal{S}}f\|^2 \ge \|P_{\Kc f}f\|^2 \ge 1 - \epsilon^2.
  \end{align*}
  Expanding this in terms of principal angles, we get
  \begin{align*}
    \sum_{i=1}^s c_i^2 \cos^2 \theta_i
    &= \sum_{j \notin
      \mathcal{I}_{\text{drop}}}
      c_j^2 \cos^2 \theta_j +
      \sum_{k \in
      \mathcal{I}_{\text{drop}}}
      c_k^2 \cos^2 \theta_k \ge 1 -
      \epsilon^2. 
  \end{align*}
  We use the bound $\cos^2 \theta_j \le 1$ for the retained components
  and $\cos^2 \theta_k = 1 - \sin^2 \theta_k$ for the dropped
  components:
  \begin{align*}
    \sum_{j \notin \mathcal{I}_{\text{drop}}} c_j^2 + \sum_{k \in
    \mathcal{I}_{\text{drop}}} c_k^2 (1 - \sin^2 \theta_k) &\ge 1 -
                                                             \epsilon^2. 
  \end{align*}
  Grouping the coefficients and utilizing $\sum c_i^2 = 1$:
  \begin{align*}
    1 - \sum_{k \in \mathcal{I}_{\text{drop}}} c_k^2 \sin^2 \theta_k
    &\ge 1 - \epsilon^2 \implies \sum_{k \in
      \mathcal{I}_{\text{drop}}} c_k^2 \sin^2 \theta_k \le \epsilon^2. 
  \end{align*}
  For every dropped vector $k \in \mathcal{I}_{\text{drop}}$, we have
  $\sin \theta_k \ge \gamma$. Thus:
  \begin{align*}
    \gamma^2 \sum_{k \in \mathcal{I}_{\text{drop}}} c_k^2 \le \sum_{k
    \in \mathcal{I}_{\text{drop}}} c_k^2 \sin^2 \theta_k \le
    \epsilon^2. 
  \end{align*}
  Dividing by $\gamma^2$ and taking the square root yields the result.
\end{proof}

\begin{remark}[Bound on Information Loss for
  SPV]\label{re:information-loss-SPV}
  A similar bound holds for the SPV algorithm when dropping a single
  vector $u_s$ with principal sine $\sin \theta_s = \gamma$. In this
  case, the information loss bound simplifies to:
  $ \text{dist}(f, \mathcal{S}^{\text{new}}) \le \frac{\epsilon}{\sin
    \theta_s}$. \oprocend
\end{remark}

Having established that the projected function
$f^{\text{new}} = P_{\mathcal{S}^{\text{new}}}f$ is geometrically
close to $f$, we now prove that it retains the dynamical property of
being an approximate eigenfunction.

\begin{theorem}[Stability of MPV
  Pruning]\label{thm:stability_MPV_pruning}
  With the same notation as Lemma~\ref{lemma:pruning_error_bound}, let
  $f \in \mathcal{S}$ be an $\epsilon$-approximate eigenfunction with
  $\| f \| = 1$ and $\|\Kc f\| \ge m > 0$.  Additionally, suppose
  $\Kc$ restricted to the finite dimensional space $\Sc$ satisfies
  $\| \Kc|_{\Sc} \| \le L$.
  %
  Then, $f^{\text{new}} = P_{\mathcal{S}^{\text{new}}}f$ is a
  $C \epsilon$-approximate eigenfunction, with
  $C$ given by
  $$ C = 1 + \frac{1}{\gamma}\left(2 + \frac{4L}{m}\right). $$
\end{theorem}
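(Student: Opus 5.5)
We need to show that $\sin\theta(f^{\text{new}},\Kc f^{\text{new}})\le C\epsilon$ with $C = 1+\frac{1}{\gamma}(2+\frac{4L}{m})$. The plan is a perturbation argument around the original pair $(f,\Kc f)$. It controls the error $e = f - f^{\text{new}}$ through Lemma~\ref{lemma:pruning_error_bound}, which gives $\|e\| = \text{dist}(f,\Sc^{\text{new}})\le \epsilon/\gamma$, and controls its image through the operator bound, $\|\Kc e\|\le L\|e\|\le L\epsilon/\gamma$; this step uses $e\in\Sc$. The angle is measured through the distance between normalized vectors, using
\[
\sin\theta(a,b)\le \left\|\tfrac{a}{\|a\|}-\tfrac{b}{\|b\|}\right\|,
\]
which holds for any nonzero $a,b$ with $\langle a,b\rangle\ge0$. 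If that sign condition fails, the bound $\sin\theta(a,b) = \text{dist}(a/\|a\|,\text{span}(b))$ applies and gives the same inequality anyway.

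The bound is assembled by a triangle inequality along the chain
\[
\tfrac{f^{\text{new}}}{\|f^{\text{new}}\|}\;\to\; f \;\to\; \tfrac{\Kc f}{\|\Kc f\|}\;\to\;\tfrac{\Kc f^{\text{new}}}{\|\Kc f^{\text{new}}\|}.
\]
The three pieces are estimated as follows.

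\textbf{Middle piece.} Choosing the sign of $\Kc f$ so that $\langle f,\Kc f\rangle\ge 0$ (only the span matters), the middle term is essentially $\sin\theta(f,\Kc f)\le\epsilon$. Strictly, the chordal distance is $2\sin(\theta/2)$, so it is better to work with distances to spans throughout. Concretely, I would bound
\[
\text{dist}\!\left(\tfrac{f^{\text{new}}}{\|f^{\text{new}}\|},\text{span}(\Kc f^{\text{new}})\right)
\]
by inserting $f$ and $\Kc f$. The cleanest version is
\[
\sin\theta(f^{\text{new}},\Kc f^{\text{new}})\le \frac{\|f^{\text{new}}-\alpha\Kc f^{\text{new}}\|}{\|f^{\text{new}}\|}
\]
for $\alpha=\langle f,\Kc f\rangle/\|\Kc f\|^2$, the best scalar for $f$ itself. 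This gives $\|f-\alpha\Kc f\|=\sin\theta(f,\Kc f)\le\epsilon$ and $|\alpha|\le 1/\|\Kc f\|\le 1/m$.

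\textbf{Outer pieces.} With this choice,
\[
f^{\text{new}}-\alpha\Kc f^{\text{new}}=(f-\alpha\Kc f)-e+\alpha\Kc e,
\]
so the numerator is at most $\epsilon+\epsilon/\gamma+L\epsilon/(m\gamma)$.

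\textbf{Denominator.} Dividing by $\|f^{\text{new}}\|\ge 1-\epsilon/\gamma$ introduces a normalization factor. The extra $\epsilon/\gamma$ and $L/m$ contributions in the claimed constant, namely $2$ instead of $1$ and $4L/m$ instead of $L/m$, are presumably the slack from this normalization. They also absorb a possible renormalization of $\Kc f^{\text{new}}$, since $\|\Kc f^{\text{new}}\|\ge m-L\epsilon/\gamma$.

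\textbf{Main obstacle.} The delicate step is this normalization. Dividing by $1-\epsilon/\gamma$ only produces a clean linear constant under a smallness assumption, say $\epsilon/\gamma\le 1/2$, which turns the factor into at most $2$. In the complementary regime, $C\epsilon\ge 1+2\epsilon/\gamma>1\ge\sin\theta$, so the bound holds trivially. I would therefore split into these two cases:
\begin{itemize}
\item If $\epsilon/\gamma>1/2$, the claim is vacuous because $C\epsilon>1$.
\item If $\epsilon/\gamma\le 1/2$, then $\|f^{\text{new}}\|\ge 1/2$, and the estimate above gives a bound $\le 2\epsilon\,(1+\tfrac1\gamma+\tfrac{L}{m\gamma})$. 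This is then tightened or directly compared to $C$.
\end{itemize}
To land exactly on $C=1+\frac{1}{\gamma}(2+\frac{4L}{m})$, I would keep the $\epsilon$ term unnormalized by normalizing via $f$ rather than $f^{\text{new}}$. That means estimating $\bigl\|\frac{f^{\text{new}}}{\|f^{\text{new}}\|}-f\bigr\|\le 2\|e\|$ and $\bigl\|\frac{\Kc f^{\text{new}}}{\|\Kc f^{\text{new}}\|}-\frac{\Kc f}{\|\Kc f\|}\bigr\|\le \frac{2\|\Kc e\|}{\|\Kc f\|}\le \frac{2L\epsilon}{m\gamma}$, using the standard inequality $\|\frac{a}{\|a\|}-\frac{b}{\|b\|}\|\le\frac{2\|a-b\|}{\|b\|}$. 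The remaining factor of $2$ on the $L/m$ term would then cover sign and chordal-versus-sine conversions.
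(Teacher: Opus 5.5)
Your argument takes a genuinely different route from the paper's, and its core is sound, but as written it never actually reaches the stated constant.

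\textbf{Comparison with the paper.} The paper applies its appendix perturbation lemma,
$|\sin\theta(x,y)-\sin\theta(x',y')|\le \frac{2}{\alpha_x}\|x-x'\|+\frac{4}{\alpha_y}\|y-y'\|$,
with $x=f$, $y=\Kc f$, $\alpha_x=1$, $\alpha_y=m$, and then adds $\sin\theta(f,\Kc f)\le\epsilon$. The $2$ is the Lipschitz constant of normalization. The $4$ is that constant times the factor $2$ in $\|P_v-P_{v'}\|\le 2\|v-v'\|$, so $C$ is exactly what this lemma outputs. You instead freeze the optimal scalar $\alpha=\langle f,\Kc f\rangle/\|\Kc f\|^2$ of the unperturbed pair and bound a distance to $\text{span}(\Kc f^{\text{new}})$ directly. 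This avoids comparing projections and never normalizes $\Kc f^{\text{new}}$. The paper's route buys modularity, since the same lemma is reused for the external-eigenfunction results.

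\textbf{The gap.} In your case $\epsilon/\gamma\le 1/2$ you get $2\epsilon\bigl(1+\frac{1}{\gamma}+\frac{L}{m\gamma}\bigr)$ and leave the comparison with $C\epsilon$ open. Its leading term $2\epsilon$ exceeds the $\epsilon$ in $C\epsilon$, so the comparison is not automatic.

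The same issue affects your final chain argument. After choosing the sign of $\Kc f$, the chordal loss $\|f-v\|=2\sin(\theta/2)\le\sqrt2\,\epsilon$ sits on the $\epsilon$ term. The spare factor on the $L/m$ term cannot "cover" it without a further argument.

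The missing observation is that $m\le\|\Kc f\|\le L\|f\|=L$, because $f\in\Sc$, and that $\gamma\le 1$. Hence $\frac{L}{m\gamma}\ge1$, and
$C\epsilon-2\epsilon\bigl(1+\frac1\gamma+\frac{L}{m\gamma}\bigr)=\epsilon\bigl(\frac{2L}{m\gamma}-1\bigr)\ge\epsilon>0$.
The chain version closes the same way, since $\frac{2L}{m\gamma}\ge 2>\sqrt2-1$.

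\textbf{A cleaner finish.} Combine the two ingredients you already have. Let $u'=f^{\text{new}}/\|f^{\text{new}}\|$ and use $\|u'-f\|\le 2\|e\|/\|f\|$. Then
\begin{align*}
\sin\theta(f^{\text{new}},\Kc f^{\text{new}})
&\le \|u'-\alpha\Kc f^{\text{new}}\| \\
&\le \|u'-f\|+\|f-\alpha\Kc f\|+|\alpha|\,\|\Kc e\| \\
&\le \tfrac{2\epsilon}{\gamma}+\epsilon+\tfrac{L\epsilon}{m\gamma}.
\end{align*}
This is at most $C\epsilon$ with no case split, assuming, as the paper implicitly does, that $f^{\text{new}}$ and $\Kc f^{\text{new}}$ are nonzero. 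It even gives the sharper constant $1+\frac1\gamma\bigl(2+\frac Lm\bigr)$.

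A small fix to your vacuous case: you mean $C\epsilon\ge\epsilon+2\epsilon/\gamma>1$, not $C\epsilon\ge 1+2\epsilon/\gamma$.
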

\begin{proof}
  From Lemma~\ref{lemma:pruning_error_bound}, we have
  $\|f - f^{\text{new}}\| \le \frac{\epsilon}{\gamma}$.  Using the
  operator bound,
  $\|\Kc f - \Kc f^{\text{new}}\| \le L \|f - f^{\text{new}}\| \le
  \frac{L\epsilon}{\gamma}$.  With normalization lower bounds
  $\alpha_x = \|f\| = 1$ and $\alpha_y = \|\Kc f\| \ge m$, the
  application of Lemma~\ref{lemma:Perturbation_bound} with
  $x=f, y=\Kc f$ and $x'=f^{\text{new}}, y'=\Kc f^{\text{new}}$ yields
  \begin{multline*}
    |\sin \theta(f^{\text{new}}, \Kc f^{\text{new}}) - \sin \theta(f, \Kc f)|
    \le 2 \|f - f^{\text{new}}\| +
    \\
    \frac{4}{m} \|\Kc f - \Kc f^{\text{new}}\| 
    \le \frac{\epsilon}{\gamma} \left( 2 + \frac{4L}{m} \right).
  \end{multline*}
  Using now $\sin \theta(f, \Kc f) \le \epsilon$ yields the desired result.
\end{proof}

For SPV pruning, a similar stability result holds with the same
constant $C$, where $\gamma$ is the principal sine of the single
dropped vector, cf. Remark~\ref{re:information-loss-SPV}. The constant
$C$ in Theorem~\ref{thm:stability_MPV_pruning} acts as a condition
number for the pruning step: it is small when the principal angles of
dropped vectors are well-separated from those of retained vectors
(large $\gamma$) and when the dynamics restricted to the subspace are
well-conditioned (small $L/m$).  Furthermore, the result implies that
exact eigenfunctions ($\epsilon=0$) are exactly preserved during
SPV/MPV pruning. Finally, we iteratively apply
Theorem~\ref{thm:stability_MPV_pruning} to analyze the effect of
multiple pruning steps.
\begin{corollary}[Stability of Multi-Step MPV Pruning]\label{cor:multi_step_pruning}
  Consider a sequence of MPV pruning steps that generate $T$ nested
  subspaces
  $\mathcal{S}_0 \supset \mathcal{S}_1 \supset \dots \supset
  \mathcal{S}_T$ according to
  Algorithm~\ref{alg:multi-pv-pruning}. Let $f_{0} \in \Sc_0$ be an
  $\epsilon$-approximate eigenfunction with $\|f_{0}\| = 1$. At each
  step $k \in [T]$, let $\gamma_k$ be the sine of the minimum
  principal angle among the vectors dropped, and define the
  iteratively pruned function
  $f_{k} = P_{\Sc_k} f_{k-1}/\|f_{k-1}\|$.
  Suppose there exists a uniform lower bound $m > 0$ such that
  $\|\Kc f_{k-1}\| /\|f_{k-1}\| \ge m$ for all $k \in [T]$, and let
  $\| \Kc|_{\Sc_0} \| \le L$. Then the resulting function $f_{T}$ is a
  $C_T \, \epsilon$-approximate eigenfunction, where the accumulated
  stability constant is
  \begin{equation}\label{eqn:multi_step_constant}
    C_T = \prod_{k=1}^{\top} \left[ 1 + \frac{1}{\gamma_k}\left(2 +
        \frac{4L}{m}\right) \right]. 
  \end{equation}
\end{corollary}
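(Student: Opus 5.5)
The plan is an induction on $k$, applying Theorem~\ref{thm:stability_MPV_pruning} once per pruning step. Set $\epsilon_0 = \epsilon$ and let $\epsilon_k$ denote the approximate-eigenfunction level of $f_k$. The inductive claim is
\begin{equation*}
\epsilon_k \le \prod_{j=1}^{k}\Big[1+\tfrac{1}{\gamma_j}\big(2+\tfrac{4L}{m}\big)\Big]\,\epsilon .
\end{equation*}
The base case $k=0$ is the hypothesis on $f_0$. Taking $k=T$ gives the constant \eqref{eqn:multi_step_constant}.

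For the inductive step, fix $k\in[T]$ and set $g = f_{k-1}/\|f_{k-1}\|$, a unit vector in $\Sc_{k-1}$. The angle $\theta(g,\Kc g)$ is scale invariant, so $g$ is an $\epsilon_{k-1}$-approximate eigenfunction. By the uniform hypothesis, $\|\Kc g\| = \|\Kc f_{k-1}\|/\|f_{k-1}\| \ge m$. Since $\Sc_{k-1}\subseteq \Sc_0$, we also get $\|\Kc|_{\Sc_{k-1}}\| \le \|\Kc|_{\Sc_0}\|\le L$. Step $k$ of Algorithm~\ref{alg:multi-pv-pruning} drops principal vectors between $\Sc_{k-1}$ and $\Kc\Sc_{k-1}$, with minimum dropped sine $\gamma_k$. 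So we are exactly in the setting of Lemma~\ref{lemma:pruning_error_bound} and Theorem~\ref{thm:stability_MPV_pruning}, with $\Sc$ replaced by $\Sc_{k-1}$ and $\Sc^{\text{new}}=\Sc_k$. The theorem then shows that $P_{\Sc_k} g$ is a $C_k\epsilon_{k-1}$-approximate eigenfunction, where $C_k = 1+\frac{1}{\gamma_k}(2+\frac{4L}{m})$. Because $P_{\Sc_k}$ is linear, $P_{\Sc_k} g$ coincides with $f_k = P_{\Sc_k} f_{k-1}/\|f_{k-1}\|$. Hence $\epsilon_k\le C_k\epsilon_{k-1}$, which closes the induction.

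The main obstacle is checking that the hypotheses of Theorem~\ref{thm:stability_MPV_pruning} really hold at every step, not just at the first one. The lower bound $m$ is only available in the ratio form $\|\Kc f_{k-1}\|/\|f_{k-1}\|$, which is why the argument works with the normalized $g$. A related point is that $f_k$ need not have unit norm, because of the placement of the normalization in the recursion. This is harmless, since only the angle enters the definition of an approximate eigenfunction, and the next step normalizes again.

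A second subtlety is that Theorem~\ref{thm:stability_MPV_pruning} is stated using Lemma~\ref{lemma:pruning_error_bound}, whose bound $\epsilon/\gamma$ is meaningful (and $f^{\text{new}}\neq 0$) only when $\epsilon_{k-1}<\gamma_k$. I would note that $f_k\neq 0$ is implicitly required for the ratio hypothesis involving $f_k$ to make sense, and otherwise treat the bound as vacuous. Finally, multiplying the per-step inequalities gives the product $C_T$.
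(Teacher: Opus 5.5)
Your proposal is correct and takes the same route as the paper, which gives no separate proof and simply states that the corollary follows by applying Theorem~\ref{thm:stability_MPV_pruning} once per pruning step. Your induction on the normalized $g = f_{k-1}/\|f_{k-1}\|$ makes that iteration explicit, using the scale invariance of the angle, the inherited bound $\|\Kc|_{\Sc_{k-1}}\| \le L$, and the identity $P_{\Sc_k} g = f_k$; your remarks on $f_k \neq 0$ also cover a well-definedness point the paper leaves implicit.
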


\subsection{Retention of External Eigenfunctions}\label{sec:ext-eigen}

The discussion of Section~\ref{sec:approx-eigen} deals with the case
of approximate eigenfunctions that belong to the finite span of the
dictionary. Here, we extend the analysis to the case where true
Koopman eigenfunctions lie slightly outside. This is always the case
in practice because of finite machine-precision. Thus, understanding
how well the pruning algorithms retain these ``external''
eigenfunctions is critical.

Formally, given an given a finite-dimensional subspace
$\Sc \subset \Fc$, an eigenfunction $f \in \Fc$ with eigenvalue
$\lambda \neq 0$ is
\begin{itemize}
\item \emph{$\epsilon$-close} to $\Sc$ with constant $\epsilon>0$ if
  $\text{dist}(f, \Sc) \le \epsilon$;
\item \emph{$\Sc$-admissible} with constant $L_\perp > 0$ if
  $ \|\Kc f^{\perp}\| \le L_\perp \|f^{\perp}\|$, where
  $f^{\perp} = (I - P_\Sc)f$.
\end{itemize}
Intuitively, $\epsilon$-closeness captures the notion of the
eigenfunction being well-approximated by the subspace, whereas
$\Sc$-admissibility captures the bounded amplification of its
orthogonal residual by the dynamics.


The next result shows that eigenfunctions that enjoy both properties
are the ones retained by the pruning algorithm,
highlighting the importance of the selection of the initial
dictionary.

\begin{lemma}[External Eigenfunctions are
  $\epsilon$-Approximate]\label{lemma:external_approximate_eigfunc}
  Let $f \in \Fc$ be an eigenfunction of $\Kc$ with eigenvalue
  $\lambda \neq 0$ satisfying $\|f\| = 1$. Assume $f$ is
  $\epsilon$-close to the subspace $\Sc$, with $\epsilon<1$,
  and $\Sc$-admissible with constant $L_\perp$.  Then,
  $f_{\Sc}=P_{\Sc}f$ is a $C_{\text{ext}} \, \epsilon$-approximate
  eigenfunction of $\Kc$, where
  $$ C_{\text{ext}} = 2 + \frac{4 L_\perp}{|\lambda|}. $$
\end{lemma}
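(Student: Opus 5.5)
The plan is to treat $f_{\Sc} = P_{\Sc} f$ as a perturbation of the exact eigenfunction $f$ and to invoke the perturbation bound of Lemma~\ref{lemma:Perturbation_bound}, mirroring the proof of Theorem~\ref{thm:stability_MPV_pruning}. Since $\Kc f = \lambda f$, the angle $\theta(f,\Kc f)$ equals $0$ when $\lambda>0$ and $\pi$ when $\lambda<0$. In both cases $\sin\theta(f,\Kc f)=0$, so the pair $(x,y)=(f,\Kc f)$ serves as the unperturbed reference. The perturbed pair is $(x',y')=(f_{\Sc},\Kc f_{\Sc})$, and the goal is to bound $|\sin\theta(f_{\Sc},\Kc f_{\Sc}) - 0|$ by $C_{\text{ext}}\epsilon$.

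First, write $f = f_{\Sc} + f^{\perp}$ with $f^{\perp}=(I-P_{\Sc})f$. By $\epsilon$-closeness, $\|f-f_{\Sc}\| = \|f^{\perp}\| = \text{dist}(f,\Sc)\le\epsilon$. Second, linearity of $\Kc$ gives $\Kc f - \Kc f_{\Sc} = \Kc f^{\perp}$. By $\Sc$-admissibility this is bounded as $\|\Kc f - \Kc f_{\Sc}\| \le L_\perp\|f^{\perp}\|\le L_\perp\epsilon$. Third, the normalization constants in Lemma~\ref{lemma:Perturbation_bound} are $\alpha_x=\|f\|=1$ and $\alpha_y=\|\Kc f\|=|\lambda|\,\|f\|=|\lambda|$. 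Plugging these into the same form used in Theorem~\ref{thm:stability_MPV_pruning} gives
\begin{equation*}
\sin\theta(f_{\Sc},\Kc f_{\Sc}) \le \frac{2}{\alpha_x}\|f-f_{\Sc}\| + \frac{4}{\alpha_y}\|\Kc f-\Kc f_{\Sc}\| \le 2\epsilon + \frac{4L_\perp}{|\lambda|}\epsilon = C_{\text{ext}}\,\epsilon .
\end{equation*}

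The main obstacle is checking that the hypotheses of Lemma~\ref{lemma:Perturbation_bound} hold, in particular that $f_{\Sc}$ and $\Kc f_{\Sc}$ are nonzero so that the angle is well defined. For $f_{\Sc}$ this follows from $\epsilon<1$: we have $\|f_{\Sc}\|^2 = 1-\|f^{\perp}\|^2 \ge 1-\epsilon^2>0$. For $\Kc f_{\Sc}$, I would note that if $\Kc f_{\Sc}=0$, then $\|\Kc f^{\perp}\|=|\lambda|$. Any such degeneracy would make the right-hand side at least $4L_\perp\epsilon/|\lambda|\ge 4\|\Kc f^{\perp}\|/|\lambda| = 4\ge 1$, so the bound holds trivially. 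Otherwise the lemma applies directly.

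A second point to confirm is the sign convention. If the paper's angle uses $\langle f,\Kc f\rangle$ without absolute value, the case $\lambda<0$ still has $\sin\theta=0$, so the reference value is unaffected. I also expect the perturbation lemma to be stated in terms of normalized vectors, whose differences are controlled by $2\|x-x'\|/\alpha_x$. That would account for the factors $2$ and $4$ appearing in $C_{\text{ext}}$.
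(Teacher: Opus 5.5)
Your proposal is correct and matches the paper's proof: both apply Lemma~\ref{lemma:Perturbation_bound} with $x=f$, $y=\Kc f$, $x'=f_{\Sc}$, $y'=\Kc f_{\Sc}$, normalization constants $\alpha_x=1$ and $\alpha_y=|\lambda|$, and the bounds $\|x-x'\|\le\epsilon$ and $\|y-y'\|\le L_\perp\epsilon$. You also check that the degenerate case $\Kc f_{\Sc}=0$ forces $L_\perp\epsilon\ge|\lambda|$, which makes the bound vacuous; this is a point of rigor the paper's proof leaves out.
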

\begin{proof}
  Since orthogonal projections are non-expansive,
  $\|f_{\Sc}\| \le \|f\| = 1$.  Let $f_0^{\perp} = f - f_{0}$ be the
  orthogonal residual. By the $\epsilon$-closeness assumption, we have
  $\|f_{\Sc}^{\perp}\| \le \epsilon$. Because $\epsilon < 1$, we deduce
  $f_{\Sc} \neq 0$.  We invoke Lemma~\ref{lemma:Perturbation_bound} with
  $x = f, y = \Kc f$, $x' = f_{\Sc}$, and $y' = \Kc f_{\Sc}$. The
  normalization bounds are $\alpha_x = \|f\| = 1$ and
  $\alpha_y = \|\Kc f\| = |\lambda|$. We bound the distances between
  the exact and projected components:
  \begin{align*}
    \|x - x'\| &= \|f - f_{\Sc}\| = \|f_{\Sc}^{\perp}\| \le \epsilon,
    \\
    \|y - y'\| &= \|\Kc f - \Kc f_{\Sc}\| = \|\Kc f_{\Sc}^{\perp}\| \le
                 L_\perp \|f_{\Sc}^{\perp}\| \le L_\perp \epsilon.
  \end{align*}
  Applying Lemma~\ref{lemma:Perturbation_bound} yields
  \begin{align*}
    |\sin \theta(f_{\Sc}, \Kc f_{\Sc})| 
    &\le \frac{2}{\alpha_x}\|x - x'\| + \frac{4}{\alpha_y}\|y - y'\|
      = C_{\text{ext}}  \epsilon,
  \end{align*}
  where we have used $\sin \theta(f, \Kc f) = 0$.
\end{proof}


Next, we show that the admissibility of an eigenfunction is inherited
under pruning, meaning that if an eigenfunction is admissible with
respect to the initial dictionary, it remains admissible with respect
to any nested subspace obtained by pruning. This is a crucial property
that ensures the retention of external eigenfunctions across multiple
pruning steps.

\begin{lemma}[Admissibility is preserved under
  Pruning]\label{lemma:hereditary_admissibility}
  Assume the initial subspace $\Sc_{0}$ is such that
  $\|\Kc|_{\Sc_0}\| \le L$. Let $f \in \Fc$ be an $\Sc_0$-admissible
  eigenfunction with constant $L_\perp$. For any nested subspace
  $\Sc_k \subseteq \Sc_0$, $f$ is an $\Sc_k$-admissible eigenfunction
  with constant $\tilde{L}_\perp \le L_\perp + L$.
\end{lemma}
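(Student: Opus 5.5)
We need to show $\|\Kc f_k^\perp\| \le (L_\perp+L)\|f_k^\perp\|$, where $f_k^\perp = (I-P_{\Sc_k})f$. The approach is to decompose the new residual into the old one plus a piece lying inside $\Sc_0$. Since $\Sc_k\subseteq\Sc_0$, the projections satisfy $P_{\Sc_k}P_{\Sc_0}=P_{\Sc_k}$, and therefore
\begin{equation*}
f_k^\perp = (I-P_{\Sc_0})f + (P_{\Sc_0}-P_{\Sc_k})f = f_0^\perp + g,
\end{equation*}
where $f_0^\perp=(I-P_{\Sc_0})f\in\Sc_0^\perp$ and $g=(P_{\Sc_0}-P_{\Sc_k})f\in\Sc_0\cap\Sc_k^\perp$. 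The two pieces are orthogonal, so $\|f_k^\perp\|^2=\|f_0^\perp\|^2+\|g\|^2$, and in particular $\|f_0^\perp\|\le\|f_k^\perp\|$ and $\|g\|\le\|f_k^\perp\|$.

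Next, apply $\Kc$ using linearity and the triangle inequality:
\begin{equation*}
\|\Kc f_k^\perp\|\le\|\Kc f_0^\perp\|+\|\Kc g\|.
\end{equation*}
The first term is bounded by $L_\perp\|f_0^\perp\|$ by $\Sc_0$-admissibility. Since $g\in\Sc_0$, the second term is bounded by $L\|g\|$ via $\|\Kc|_{\Sc_0}\|\le L$. Combining these with the norm inequalities from the orthogonal decomposition gives $\|\Kc f_k^\perp\|\le (L_\perp+L)\|f_k^\perp\|$, which is the claim with $\tilde L_\perp\le L_\perp+L$.

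The only step needing care is the identification $g\in\Sc_0$. This is exactly where we need the operator bound on $\Sc_0$ rather than on all of $\Fc$, and it relies on the nestedness $\Sc_k\subseteq\Sc_0$ through $P_{\Sc_k}P_{\Sc_0}=P_{\Sc_k}$. The eigenfunction property of $f$ is not needed beyond the definition of admissibility. A slightly sharper constant, $\sqrt{L_\perp^2+L^2}$, would follow from Cauchy--Schwarz applied to the orthogonal split, but the stated bound suffices.
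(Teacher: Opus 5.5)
Your proof is correct and matches the paper's argument step for step: the split $f_k^\perp = f_0^\perp + \Delta_k$ with $\Delta_k \in \Sc_0$, the Pythagorean bounds on both pieces, the triangle inequality, and then admissibility combined with the restricted operator bound. Your closing remark also holds: Cauchy--Schwarz applied to $L_\perp\|f_0^\perp\| + L\|g\|$ gives the sharper constant $\sqrt{L_\perp^2+L^2}$, which the paper does not state; and $g\in\Sc_0$ already follows from $P_{\Sc_k}f\in\Sc_k\subseteq\Sc_0$, so $P_{\Sc_k}P_{\Sc_0}=P_{\Sc_k}$ is only needed for $g\in\Sc_k^\perp$, which the argument never uses.
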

\begin{proof}
  Let $f^{\perp}_0 = (I - P_{\Sc_0})f$ be the residual with respect to
  the initial subspace $\Sc_0$, and $f^{\perp}_k = (I - P_{\Sc_k})f$
  be the residual with respect to the $k$-th subspace $\Sc_k$.  Then,
  \begin{align*}
    f^{\perp}_k
    &= f - P_{\Sc_k}f = (f - P_{\Sc_0}f) + (P_{\Sc_0}f - P_{\Sc_k}f)
    \\
    &= f^{\perp}_0 + \Delta_k,
  \end{align*}
  where $\Delta_k \in \Sc_0$.
  %
  %
  Since $f^{\perp}_0 \in (\Sc_0)^\perp$, the components are
  orthogonal, and we have
  $\|f^{\perp}_k\|^2 = \|f^{\perp}_0\|^2 + \|\Delta_k\|^2$. This
  implies that
  \begin{align}
  \|f^{\perp}_0\| \le \|f^{\perp}_k\| \text{ and } \|\Delta_k\| \le \|f^{\perp}_k\|.
    \label{eq:orthogonal_bounds}
  \end{align}  
  We evaluate the action of the Koopman operator on the new
  residual. By the triangle inequality,
  \begin{equation*}
    \|\Kc f^{\perp}_k\| \le \|\Kc f^{\perp}_0\| + \|\Kc \Delta_k\|.
  \end{equation*}
  For the first term, the $\Sc_0$-admissibility of $f$ guarantees
  $\|\Kc f^{\perp}_0\| \le L_\perp \|f^{\perp}_0\|$.  For the second
  term, since $\Delta_k \in \Sc_0$, we can apply the restricted
  operator bound: $\|\Kc \Delta_k\| \le L \|\Delta_k\|$.  Substituting
  these bounds and utilizing~\eqref{eq:orthogonal_bounds},
  \begin{align*}
    \|\Kc f^{\perp}_k\| &\le L_\perp \|f^{\perp}_0\| + L \|\Delta_k\| 
    \le L_\perp \|f^{\perp}_k\| + L \|f^{\perp}_k\| \\
    &= (L_\perp + L)\|f^{\perp}_k\|.
  \end{align*}
  Therefore, $f$ is $\Sc_k$-admissible with
  $\tilde{L}_\perp \le L_\perp + L$.
\end{proof}

The following result provides a bound on the distance of an external
eigenfunction to the pruned subspace after multiple MPV pruning steps.

\begin{theorem}[External Eigenfunction Bound for Multi-step MPV
  Pruning]\label{thm:tight_distance_bound}
  Consider the setting of Corollary~\ref{cor:multi_step_pruning}.  Let
  $f \in \Fc$ be an eigenfunction of $\Kc$ with eigenvalue
  $\lambda \neq 0$ satisfying $\|f\| = 1$. Assume $f$ is
  $\epsilon$-close to the initial subspace $\Sc_0$, with $\epsilon<1$,
  and $\Sc_0$-admissible with constant $L_\perp$.  Then,
  \begin{equation}
    \text{dist}(f, \Sc_T) \le \epsilon \prod_{k=1}^{\top} \sqrt{ 1 +
      \frac{\tilde{C}_{\text{ext}}^2}{\gamma_k^2} }, 
    \label{eq:Iterative_bound}
  \end{equation}
  where
  $\tilde{C}_{\text{ext}} = 2 + \frac{4(L_\perp + L)}{|\lambda|}$.
\end{theorem}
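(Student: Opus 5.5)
The plan is to prove the bound by induction on the number of pruning steps, tracking the residual distance $d_k = \text{dist}(f, \Sc_k) = \|(I-P_{\Sc_k})f\|$ with $d_0 \le \epsilon$. The goal is the one-step recursion
\begin{equation*}
  d_k^2 \le d_{k-1}^2\left(1 + \frac{\tilde{C}_{\text{ext}}^2}{\gamma_k^2}\right),
\end{equation*}
from which \eqref{eq:Iterative_bound} follows by taking square roots and multiplying over $k=1,\dots,T$.

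For the one-step estimate, fix $k$ and decompose $f = P_{\Sc_{k-1}}f + f^{\perp}_{k-1}$. Since $\Sc_k \subseteq \Sc_{k-1}$, the residual with respect to $\Sc_k$ splits orthogonally:
\begin{equation*}
  (I-P_{\Sc_k})f = f^{\perp}_{k-1} + (P_{\Sc_{k-1}} - P_{\Sc_k})f .
\end{equation*}
The first term lies in $\Sc_{k-1}^\perp$ and the second in $\Sc_{k-1}\ominus\Sc_k$. Hence
\begin{equation*}
  d_k^2 = d_{k-1}^2 + \text{dist}(P_{\Sc_{k-1}}f, \Sc_k)^2 .
\end{equation*}
It therefore suffices to show $\text{dist}(P_{\Sc_{k-1}}f, \Sc_k) \le \tilde{C}_{\text{ext}}\, d_{k-1}/\gamma_k$.

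To get this, I will use Lemma~\ref{lemma:hereditary_admissibility}: $f$ is $\Sc_{k-1}$-admissible with constant at most $L_\perp + L$. I then apply Lemma~\ref{lemma:external_approximate_eigfunc} with $\Sc = \Sc_{k-1}$ and closeness parameter $d_{k-1}$ in place of $\epsilon$. This requires $d_{k-1}<1$; otherwise the claim is trivial, or I note that the bound exceeds $1 \ge \text{dist}$. The lemma gives that $P_{\Sc_{k-1}}f$ is a $\tilde{C}_{\text{ext}} d_{k-1}$-approximate eigenfunction lying in $\Sc_{k-1}$.

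Next I want to invoke Lemma~\ref{lemma:pruning_error_bound} for the step $\Sc_{k-1}\to\Sc_k$, whose dropped principal vectors have minimum sine $\gamma_k$. That lemma assumes a unit-norm function, so I apply it to $g = P_{\Sc_{k-1}}f/\|P_{\Sc_{k-1}}f\|$. Approximate-eigenfunction status is scale invariant, so this gives $\text{dist}(g,\Sc_k) \le \tilde{C}_{\text{ext}} d_{k-1}/\gamma_k$. Multiplying back by $\|P_{\Sc_{k-1}}f\| \le 1$ yields the needed estimate, and this closes the induction.

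The main obstacle is the validity of applying Lemma~\ref{lemma:external_approximate_eigfunc} at each stage. It needs $d_{k-1}<1$ so that the projection is nonzero, and it must use the inherited admissibility constant rather than $L_\perp$. That is exactly why $\tilde{C}_{\text{ext}}$ involves $L_\perp + L$, and why $L$ must bound $\Kc$ on $\Sc_0$ and hence on every $\Sc_{k-1}$.

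A secondary subtlety is that the principal-vector decomposition in Lemma~\ref{lemma:pruning_error_bound} is taken relative to $\Sc_{k-1}$ and $\Kc\Sc_{k-1}$. This matches the MPV construction of $\Sc_k$, so the estimate should go through. The key point of the argument is the orthogonal Pythagorean splitting, which is what yields the product of square-root factors rather than an additive accumulation.
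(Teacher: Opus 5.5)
Your proposal is correct and follows essentially the same route as the paper. The shared steps are the Pythagorean splitting $d_k^2 = d_{k-1}^2 + \|(P_{\Sc_{k-1}}-P_{\Sc_k})f\|^2$, inherited admissibility via Lemma~\ref{lemma:hereditary_admissibility}, Lemma~\ref{lemma:external_approximate_eigfunc} applied on $\Sc_{k-1}$ with closeness $d_{k-1}$, Lemma~\ref{lemma:pruning_error_bound} applied to the normalized projection, rescaling by $\|P_{\Sc_{k-1}}f\|\le 1$, and unrolling the recursion. Your explicit treatment of the degenerate case $d_{k-1}=1$, where $P_{\Sc_{k-1}}f=0$ and the recursion holds trivially, is a detail the paper leaves implicit.
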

\begin{proof}
  Let $f_k = P_{\Sc_k} f$ be the orthogonal projection of the exact
  eigenfunction onto the $k$-th subspace, and let
  $\epsilon_k = \|f - f_k\|$. 
  We can decompose $f - f_k$ into
  $ f - f_k = (f - f_{k-1}) + (f_{k-1} - f_k)$, where $f - f_{k-1} \in
  \Sc_{k-1}^\perp$ and $f_{k-1} - f_k \in \Sc_{k-1}$. Therefore,
  \begin{align}\label{eqn:pythagoras_step}
    \epsilon_k^2 = \epsilon_{k-1}^2 + \|f_{k-1} - f_k\|^2.
  \end{align}
  Next, we bound $\|f_{k-1} - f_k\|$.  By
  Lemma~\ref{lemma:hereditary_admissibility}, $f$ is an
  $\Sc_{k-1}$-admissible eigenfunction with
  $\tilde{L}_\perp \le L_\perp + L$.  Applying
  Lemma~\ref{lemma:external_approximate_eigfunc} to $\Sc_{k-1}$, the
  projection $f_{k-1}$ is a
  $(\tilde{C}_{\text{ext}} \epsilon_{k-1})$-approximate eigenfunction,
  where
  $\tilde{C}_{\text{ext}} = 2 + \frac{4(L_\perp + L)}{|\lambda|}$.  We
  now normalize $f_{k-1}$ to apply the pruning error bound.  Since
  orthogonal projections are non-expansive,
  $\|f_{k-1}\| \le \|f\| = 1$.  By
  Lemma~\ref{lemma:pruning_error_bound}, the distance from
  $u = f_{k-1} / \|f_{k-1}\|$ to the next pruned subspace $\Sc_k$ is
  bounded as
  $ \|u - P_{\Sc_k} u\| \leq \frac{\tilde{C}_{\text{ext}}
    \epsilon_{k-1}}{\gamma_k}$. Combining these facts, we deduce
  \begin{align*}
    \|f_{k-1}\!  - \!f_k\| \!=\! \|f_{k-1}\| \!\cdot\! \|u \!-\! P_{\Sc_k} u\| 
    \frac{\tilde{C}_{\text{ext}}
    \epsilon_{k-1}}{\gamma_k}. 
  \end{align*}
  Substituting this bound back into~\eqref{eqn:pythagoras_step}, we
  obtain
  \begin{align*}
    \epsilon_k^2 \le \epsilon_{k-1}^2 + \Big(
    \frac{\tilde{C}_{\text{ext}} \epsilon_{k-1}}{\gamma_k} \Big)^2 =
    \epsilon_{k-1}^2 \Big( 1 +
    \frac{\tilde{C}_{\text{ext}}^2}{\gamma_k^2} \Big). 
  \end{align*}
  Taking the square root of both sides gives the recursive geometric
  bound
  $ \epsilon_k \le \epsilon_{k-1} \sqrt{ 1 +
    \frac{\tilde{C}_{\text{ext}}^2}{\gamma_k^2} }$.  Since
  $\epsilon_0 = \|f - f_0\| \le \epsilon$, unrolling this
  recursion from $k=T$ down to $k=1$ yields the final multiplicative
  bound.
\end{proof}

%


The error bound~\eqref{eq:Iterative_bound} in
Theorem~\ref{thm:tight_distance_bound} reveals a limitation of
subspace pruning: since the amplification factor
$\sqrt{ 1 + \frac{\tilde{C}_{\text{ext}}^2}{\gamma_k^2} }$ at step $k$
is strictly greater than $1$, if the pruning process requires a large
number of iterations $T$, the cumulative error can grow significantly,
indicating a potential loss of spectral information. This effect is
exacerbated if the algorithm begins pruning directions where the gap
$\gamma_k$ is small.

This observation highlights a fundamental trade-off between SPV and
MPV pruning.  \textit{SPV pruning} removes only one direction at a
time, meaning in high-dimensional settings it requires
$T_{\text{SPV}} \approx N_{\text{bad}}$ iterations (where
$N_{\text{bad}}$ is the number of violating dimensions). This large
iteration count risks significant cumulative error, though it
mitigates the single-step amplification by carefully removing only the
worst offender with the largest gap $\gamma_k$. Conversely,
\textit{MPV pruning} removes all violating directions in a single
batch, drastically reducing the number of iterations
($T_{\text{MPV}} \ll T_{\text{SPV}}$). However, MPV can be overly
aggressive; dropping directions with small gaps $\gamma_k$ leads to a
large amplification factor and loss of spectral information.

\begin{algorithm}[tb]
  \caption{MPV-SPV Hybrid Pruning}
  \label{alg:hybrid_pruning}
  \begin{algorithmic}[1]
    \REQUIRE $\Sc, \Kc \Sc \subset \Fc$, $\epsilon \in [0,1]$,
    $\epsilon_{\text{coarse}} > \epsilon$. 
    \smallskip
    \STATE $\Sc_{\text{coarse}} \leftarrow \text{MPV}(\Sc, \Kc \Sc,
    \epsilon_{\text{coarse}})$  
    \STATE $\Sc_{\text{final}} \,\,\, \leftarrow \text{SPV}(\Sc_{\text{coarse}}, \Kc \Sc_{\text{coarse}}, \epsilon)$ 
    \RETURN $\Sc_{\text{final}}$
  \end{algorithmic}
\end{algorithm}

This insight leads us to propose a hybrid strategy, summarized in
Algorithm~\ref{alg:hybrid_pruning}, that balances these trade-offs by
minimizing the number of iterations while maximizing the spectral gap
of pruned directions at each step.  First, we apply the MPV algorithm
with a relaxed tolerance $\epsilon_{\text{coarse}} > \epsilon$. This
acts as a coarse pruning step, rapidly eliminating the most
non-invariant directions. Because MPV operates in batches, it
terminates in very few iterations. The relaxed tolerance ensures that
only vectors with large principal angles are removed, maintaining
relatively large gaps $\gamma_k$ and controlling the single-step error
amplification. This yields a subspace $\mathcal{S}_{\text{coarse}}$ of
significantly smaller dimension without incurring large numerical
drift.

Subsequently, we refine $\mathcal{S}_{\text{coarse}}$ using the SPV
algorithm with the strict target tolerance $\epsilon$. Since the
dimension of $\mathcal{S}_{\text{coarse}}$ is already reduced, SPV
converges rapidly, introducing minimal cumulative numerical
error. This final stage provides a nested sequence of subspaces,
allowing for a precise trade-off between expressivity and
invariance. In Section~\ref{sec:sim-numerical_benchmarking} below, we
empirically validate Algorithm~\ref{alg:hybrid_pruning} and illustrate
its numerical stability.

\section{Efficient Computation of Principal Angles and Vectors via
  Rank-One Updates}\label{sec:efficient_computation}

This section develops an efficient procedure for computing principal
angles and vectors after having pruned multiple principal vectors from
a subspace. Interestingly, this procedure can be applied to arbitrary
inner products. We state our theoretical results in a general inner
product space $\Fc$, but when describing the algorithmic
implementation and time complexity, we focus on the $L_2(\mu)$-inner
product.  The key idea is to cast the update as a sequence of
symmetric rank-one corrections to an eigenproblem, enabling fast
incremental updates. This procedure can be directly integrated into
both SPV and MPV pruning to speed up computations and improve
numerical stability.

\subsection{Computation of Principal
  Arguments}\label{sec:computation_principal_angles}

Consider the subspaces $\Sc, \Kc \Sc \subset \Fc$ according to \eqref{eqn:setting_principal_arguments}. Let
\begin{align*}
  \Uc & = [u^{\Sc}_1 \, u^{\Sc}_2 \, \dots \, u^{\Sc}_s], \quad
        \Kc \Uc = [\Kc u^{\Sc}_1 \, \Kc u^{\Sc}_2 \, \dots \, \Kc
        u^{\Sc}_s],
  \\
  \Lambda_{\cos}
      & \!=\! \text{diag}(\cos \theta_1, \dots, \cos \theta_s)
        , \,
  \Lambda_{\sin} \!=\! \text{diag}(\sin \theta_1, \dots, \sin
  \theta_s). 
\end{align*}
Define the pruned subspace
$\Sc^{\text{new}} = \text{span}(u^{\Sc}_1, u^{\Sc}_2, \dots,
u^{\Sc}_{s-k})$ obtained by dropping the top $k$ principal vectors.
The updated image subspace is
$\Kc \Sc^{\text{new}} = \text{span}(\Kc u^{\Sc}_1, \Kc u^{\Sc}_2,
\dots, \Kc u^{\Sc}_{s-k})$.

%
%
We introduce a sequence of orthogonal vectors
$\{\omega_l\}_{l=1}^{k} \subset \Kc \Sc$ that span the orthogonal
complement of the new image subspace $\Kc \Sc^{\text{new}}$ within
$\Kc \Sc$. This sequence helps us decompose the projection onto the
new image subspace $\Kc \Sc^{\text{new}}$ into a projection onto the
old image subspace $\Kc \Sc$ followed by a sequence of rank-one
updates, which can be computed efficiently.
%
%
The sequence is extracted directly from the thin QR decomposition
\cite[Section 5.2]{GHG-CFVL:13}
%
%
of the image matrix $ \Kc \Uc$,
\begin{equation}
  W R = [\Kc u^{\Sc}_1, \Kc u^{\Sc}_2, \dots, \Kc u^{\Sc}_s],
  \label{eq:QR_W}
\end{equation}
where $R \in \real^{s \times s}$ is an upper triangular matrix and
$W = [w_1 \, w_2 \, \dots \, w_s]$ is a matrix with orthonormal
columns.
%
%
Because the first $s-k$ columns of $\Kc \Uc$ span
$\Kc \Sc^{\text{new}}$, the standard QR algorithm guarantees that the
first $s-k$ columns of $W$ form an orthonormal basis for
$\Kc \Sc^{\text{new}}$. The remaining~$k$ columns span the orthogonal
complement of $\Kc \Sc^{\text{new}}$ within~$\Kc \Sc$.

To process the dropped directions sequentially from the least
invariant to the most invariant,
%
%
we define our update sequence $\{\omega_l\}_{l=1}^k$ by collecting
these last $k$ columns of $W$ in reverse order:
\begin{equation}
  \omega_l = w_{s - l + 1}, \quad \text{for } l \in [k].
\end{equation}
We group these vectors into the update matrix
$\Wc_k = [\omega_1 \, \omega_2 \, \dots \, \omega_k]$. By constructing
$\Wc_k$ via the QR decomposition~\eqref{eq:QR_W}, we avoid explicit
and numerically unstable recursive projections, ensuring a robust
foundation for the rank-one updates.

For $l \in [k]$, let
$\omega_l = \sum_{i=1}^{s} d_i^{\omega_l} \Kc v^{\Kc \Sc}_i$, and
define
$d^{\omega_l} = [d^{\omega_l}_1 \, \dots \,
d^{\omega_l}_s]^{\top}$. By construction, the coordinate vectors are
orthonormal and satisfy
$(d^{\omega_l})^{\top} d^{\omega_l} = \delta_{ll}$.  Define the
sequence of rank-one update matrices
$\{N_l\}_{l=0}^{k} \subset \mathbb{R}^{s \times s}$ by
\begin{align}
  N_0 \!=\! \Lambda_{\sin}^2, \,
  N_l \!=\! N_{l-1} + \Lambda_{\cos} d^{\omega_l} (\Lambda_{\cos}
  d^{\omega_l})^{\top} , \quad l \in [k]. 
    \label{eq:rank_one_matrices}
\end{align}

The following result describes to how to compute the principal angles
and vectors between $\Sc^{\text{new}}$ and~$\Kc \Sc^{\text{new}}$
efficiently using the eigenpairs of~$N_k$.

\begin{theorem}[Efficient Computation of Principal
  Arguments]\label{thm:efficient_computation}
  Let $\tilde{N}_k \in \mathbb{R}^{(s-k) \times (s-k)}$ be the
  truncated matrix obtained by dropping the last $k$ rows and columns
  of $N_k$. Let
  $(\lambda_{\alpha} \in \mathbb{R}, z_{\alpha} \in
  \mathbb{R}^{s-k})_{\alpha = 1}^{s-k}$ be the eigenpairs of
  $\tilde{N}_k$, arranged with increasing eigenvalues.  Then, the
  principal vectors
  $\{u^{\Sc^{\text{new}}}_{\alpha}\}_{\alpha=1}^{s-k} \subset
  \Sc^{\text{new}}$ and squared principal sines
  $\sin^2 \theta_{\alpha}(\Sc^{\text{new}}, \Kc \Sc^{\text{new}})$ are
  given by
  \begin{subequations}
    \begin{align}
      \sin^2 \theta_{\alpha} (\Sc^{\text{new}}, \Kc \Sc^{\text{new}})
      &=
        \lambda_{\alpha}, \label{eq:new_angles}
      \\ 
      u^{\Sc^{\text{new}}}_{\alpha} &= \Uc_{s-k} z_{\alpha}, \label{eq:new_vectors} 
    \end{align}
  \end{subequations}
  for $\alpha \in [s-k]$, where
  $\Uc_{s-k} = [u^{\Sc}_1 \, u^{\Sc}_2 \, \dots \,
  u^{\Sc}_{s-k}]$.
\end{theorem}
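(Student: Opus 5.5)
Following the proof of Lemma~\ref{lem:spectral-consistency}, the plan is to characterize the new principal sines as the eigenvalues of the operator $I - \Pc_{\Sc^{\text{new}}}\Pc_{\Kc\Sc^{\text{new}}}$ restricted to $\Sc^{\text{new}}$. The principal vectors of $\Sc^{\text{new}}$ are the eigenvectors of this operator. Because $\{u^{\Sc}_i\}_{i=1}^{s-k}$ is an orthonormal basis of $\Sc^{\text{new}}$, it suffices to compute the Gram-type matrix
\[
G_{ij} = \langle u^{\Sc}_i, \Pc_{\Kc\Sc^{\text{new}}} u^{\Sc}_j\rangle, \qquad i,j\in[s-k].
\]
The goal is then to show $I - G = \tilde N_k$. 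Once this holds, $\tilde N_k$ is symmetric and its eigen-decomposition $z_\alpha$ gives the principal vectors $\Uc_{s-k} z_\alpha$ with $\sin^2\theta_\alpha = \lambda_\alpha$. This is exactly the variational characterization behind Definition~\ref{defn:pa_pv}. Equivalently, one can view it as Theorem~\ref{thm:Golub1973} applied with an orthonormal basis $Q$ of $\Kc\Sc^{\text{new}}$: the squared cosines are the eigenvalues of $\Uc_{s-k}^\top Q Q^\top \Uc_{s-k}$.

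The key step is the decomposition of the projection. By the QR construction \eqref{eq:QR_W}, $\Kc\Sc = \Kc\Sc^{\text{new}} \oplus \text{span}(\omega_1,\dots,\omega_k)$ with the $\omega_l$ orthonormal, so
\[
\Pc_{\Kc\Sc^{\text{new}}} = \Pc_{\Kc\Sc} - \sum_{l=1}^k \omega_l\langle \omega_l,\cdot\rangle .
\]
We first compute $\langle u^{\Sc}_i, \Pc_{\Kc\Sc} u^{\Sc}_j\rangle$ over the full index range $i,j\in[s]$. Expanding $\Pc_{\Kc\Sc}$ in the orthonormal basis $\{\Kc v^{\Kc\Sc}_m\}$ and using $\langle u^{\Sc}_i, \Kc v^{\Kc\Sc}_m\rangle = \delta_{im}\cos\theta_i$ (the property cited after the Remark, from \cite[Proposition~4.4]{MH-JC:24-csl-arxiv-revised}) gives $\Lambda_{\cos}^2$. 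Next, writing $\omega_l = \sum_m d^{\omega_l}_m \Kc v^{\Kc\Sc}_m$ gives $\langle u^{\Sc}_i,\omega_l\rangle = \cos\theta_i\, d^{\omega_l}_i$, so each rank-one term contributes $\Lambda_{\cos} d^{\omega_l}(\Lambda_{\cos} d^{\omega_l})^\top$. Combining these pieces, the full $s\times s$ matrix of $\langle u^{\Sc}_i,(I-\Pc_{\Kc\Sc^{\text{new}}})u^{\Sc}_j\rangle$ equals
\[
I - \Lambda_{\cos}^2 + \sum_{l=1}^k \Lambda_{\cos} d^{\omega_l}(\Lambda_{\cos} d^{\omega_l})^\top = N_k,
\]
by $I-\Lambda_{\cos}^2=\Lambda_{\sin}^2$ and the recursion \eqref{eq:rank_one_matrices}. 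Restricting to $i,j\le s-k$, which corresponds to compressing onto $\Sc^{\text{new}}$, yields $\tilde N_k = I - G$.

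The main obstacle is bookkeeping rather than depth, and it has two parts. First, we must justify that the $\omega_l$ lie in $\Kc\Sc$ and are orthogonal to $\Kc\Sc^{\text{new}}$. This requires the assumption $\dim \Kc\Sc = s$, so that $R$ is invertible and the first $s-k$ columns of $W$ span $\Kc\Sc^{\text{new}}$. The same assumption gives $\dim\Kc\Sc^{\text{new}} = s-k$, so the number of principal pairs is $s-k$. Second, we must confirm that the coordinate vectors $d^{\omega_l}$ are orthonormal because $\{\Kc v^{\Kc\Sc}_m\}$ is an orthonormal basis of $\Kc\Sc$ (extended via the Remark if needed). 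Finally, we will note that ordering the eigenvalues increasingly matches the increasing ordering of the angles, by monotonicity of $\sin^2$ on $[0,\pi/2]$.
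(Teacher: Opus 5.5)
Your proposal is correct and follows essentially the same route as the paper. Both arguments split $\Pc_{\Kc\Sc^{\text{new}}} = \Pc_{\Kc\Sc} - \sum_{l=1}^k \Pc_{\omega_l}$, use $\langle u^{\Sc}_i, \Kc v^{\Kc\Sc}_m\rangle = \delta_{im}\cos\theta_i$ together with the coordinates $d^{\omega_l}$ to identify the quadratic form on $\Sc^{\text{new}}$ with $I - \tilde N_k$, and then read off the principal arguments from its eigendecomposition. The differences are cosmetic: the paper bounds $\|\Pc_{\Kc\Sc^{\text{new}}}\tilde u\|^2$ and invokes Lemma~\ref{lemma:cos_theta_max} with Courant--Fischer, whereas you compute the Gram matrix $G$ entrywise and appeal to Theorem~\ref{thm:Golub1973}, which is the same identity since $\|\Pc\tilde u\|^2 = \langle \tilde u, \Pc\tilde u\rangle$.
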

\smallskip
\begin{proof}
  Based on the construction of $\{\omega_l\}_{l=1}^k$, we have the
  orthogonal direct sum
  $\Kc \Sc = \Kc \Sc^{\text{new}} \,\oplus\, \text{span}(\omega_1,
  \dots, \omega_k)$. Thus, the orthogonal projection operator can be
  expressed as
  \[
    \mathcal{P}_{\Kc \Sc^{\text{new}}} = \mathcal{P}_{\Kc \Sc} -
    \sum_{l=1}^{k} \mathcal{P}_{\omega_l}.
  \]
  Let $\tilde{u} \in \Sc^{\text{new}}$ be decomposed as
  $\tilde{u} = \sum_{\alpha=1}^{s-k} c_\alpha u^{\Sc}_{\alpha}$.
  %
  %
  The projection of $\tilde{u}$ onto the original image
  space $\Kc \Sc$ yields
  \[
    \mathcal{P}_{\Kc \Sc} \tilde{u} = \sum_{\alpha=1}^{s-k} c_\alpha \Kc v^{\Kc
      \Sc}_{\alpha} \cos \theta_{\alpha}.
  \]
  To compute the projection onto $\omega_l$, we let
  $t_l = \langle \omega_l, \tilde{u} \rangle_{\Fc}$. Then,
  \begin{align*}
    t_l &= \Big\langle \sum_{i=1}^{s} d^{\omega_l}_i \Kc v^{\Kc \Sc}_i,
    \sum_{\alpha=1}^{s-k} c_\alpha u^{\Sc}_{\alpha} \Big\rangle_{\Fc} \\
    &= \sum_{\alpha=1}^{s-k} d^{\omega_l}_\alpha c_\alpha \cos
      \theta_{\alpha} = \tilde{c}^{\top} 
    \Lambda_{\cos} d^{\omega_l}, 
  \end{align*}
  where $\tilde{c} = [c_1, \dots, c_{s-k}, 0, \dots, 0]^{\top} \in \mathbb{R}^s$.
  The projection is therefore
  $\mathcal{P}_{\Kc \Sc^{\text{new}}} \tilde{u} = \mathcal{P}_{\Kc
    \Sc} \tilde{u} - \sum_{l=1}^{k} t_l \omega_l$.
  %
  Since the basis elements are orthonormal, the squared norm
  evaluates as
  \begin{align*}
    \| \mathcal{P}_{\Kc \Sc^{\text{new}}} \tilde{u} \|_{\Fc}^2 
    &=   
      \big\| \Lambda_{\cos} \tilde{c} - \sum_{l=1}^{k} t_l d^{\omega_l} \big\|_2^2
    \\
    &= \tilde{c}^{\top} \Lambda_{\cos}^2 \tilde{c} + \sum_{l=1}^{k} t_l^2 - 2
      \sum_{l=1}^{k} t_l \tilde{c}^{\top} \Lambda_{\cos} d^{\omega_l}
    \\
    &= \tilde{c}^{\top} \Big( \Lambda_{\cos}^2 - \sum_{l=1}^{k} \Lambda_{\cos}
      d^{\omega_l} (\Lambda_{\cos} d^{\omega_l})^{\top}  \Big) \tilde{c}.  
  \end{align*}
  Using equation \eqref{eq:rank_one_matrices}, this expression can be
  rewritten as
  $\| \mathcal{P}_{\Kc \Sc^{\text{new}}} \tilde{u} \|_{\Fc}^2 =
  \tilde{c}^{\top} (I - \tilde{N}_k) \tilde{c}$. Since
  $\| \tilde{u} \|_{\Fc}^2 = \tilde{c}^{\top} \tilde{c}$, utilizing
  Lemma~\ref{lemma:cos_theta_max}, we find the principal angles by
  solving the Rayleigh quotient minimization:
  \[
    \min_{\tilde{c} \in \mathbb{R}^{s-k}}
    \frac{\tilde{c}^{\top} (I - \tilde{N}_k) \tilde{c}}{\tilde{c}^{\top}
      \tilde{c}}.
  \]
  %
  %

  Note that Lemma~\ref{lemma:cos_theta_max} dictates that subsequent
  minimizers must be orthogonal in $\Fc$ (i.e.,
  $\langle \tilde{u}_a, \tilde{u}_b \rangle_{\Fc} =
  0$). Orthonormality of the basis $\Uc^{\text{new}}$ guarantees this
  is equivalent to the Euclidean constraint
  $\tilde{c}_a^\top \tilde{c}_b = 0$. By the Courant-Fischer min-max
  theorem \cite{GHG-CFVL:13},
  sequentially minimizing a Rayleigh quotient subject to
  Euclidean orthogonality constraints is exactly equivalent to
  computing the eigendecomposition of the symmetric matrix
  $(I-\tilde{N}_{k})$.
  
  The successive minimums of this quotient yield the eigenvalues
  $\cos^2 \theta_\alpha$. Consequently, the eigenvalues of
  $\tilde{N}_k$ are exactly
  $1 - \cos^2 \theta_\alpha = \sin^2 \theta_\alpha$, verifying
  equation \eqref{eq:new_angles}. Furthermore, the corresponding
  eigenvectors $z_\alpha$ directly provide the parameterization for
  the updated principal vectors
  $u_{\alpha}^{\Sc^{\text{new}}} = \Uc^{\text{new}}z_\alpha$, yielding
  equation \eqref{eq:new_vectors}.
\end{proof}

\begin{remark}[Applicability to General Inner Product
  Spaces]\label{rem:general_spaces}
  It is important to emphasize that this rank-one update procedure is
  entirely coordinate free and holds for general abstract inner
  product spaces, 
  not just Euclidean spaces. Because the formulation relies purely on
  the principal angles and the coefficients of the QR decomposition,
  the algorithm never requires evaluating the abstract inner product
  $\langle \cdot, \cdot \rangle_{\Fc}$ explicitly during the update
  steps. \oprocend
\end{remark}

\begin{remark}[Computational Complexity and LAPACK]\label{rem:lapack}
  The symmetric rank-one update problem involves computing the
  eigendecomposition of $A \pm \rho \, u u^{\top}$ where the
  eigendecomposition of the symmetric matrix
  $A \in \mathbb{R}^{n \times n}$ is already known. The computational
  cost of this update is $O(n^2)$, implemented robustly in the LAPACK
  subroutine \texttt{DLAED9}~\cite{EA-LAPACK:99}. 
  %
  In the context of Theorem~\ref{thm:efficient_computation}, this
  procedure applies directly to the sequence of matrices
  $\{N_l\}_{l=0}^k$. By chaining $k$ such rank-one updates to compute
  the eigenpairs of $N_k$ (and subsequently the truncated matrix
  $\tilde{N}_k$), the overall cost of our pruning step is $O(k
  s^2)$. When $k \ll s$, this is significantly lower than the $O(s^3)$
  cost of recomputing the principal angles. \oprocend
\end{remark}

\subsection{Incremental Basis Update via QR Decomposition}

While rank-one updates efficiently yield the new principal angles,
iterative pruning also requires maintaining an orthogonal basis for
the image space $\Kc \Sc$. If we were to use the empirical $L_2(\mu)$
inner product over $N$ data points, a naive re-computation of the thin
%
%
QR decomposition for the image matrix $\Kc \Uc$ as in~\eqref{eq:QR_W}
would incur a prohibitive computational cost of $O(Ns^2)$ at each
step.

To circumvent this, we update the factors incrementally. Suppose we
have the decomposition $\Kc \Uc = W R$ available. By restricting
operations to the smaller triangular matrix $R$, we avoid processing
the full, high-dimensional matrix $\Kc \Uc$ directly. We propose the
following efficient procedure to compute the QR decomposition of the
updated image $\Kc \Uc^{\text{new}}$.  Recall from
Theorem~\ref{thm:efficient_computation} that the new principal vectors
$\Uc^{\text{new}}$ are formed by taking linear combinations of the
retained basis using the computed eigenvectors, i.e.,
\begin{equation}
  \Uc^{\text{new}} = [u^{\Sc^{\text{new}}}_1 \, \dots \,
  u^{\Sc^{\text{new}}}_{s-k}] = \underbrace{[u^{\Sc}_1 \, \dots \,
    u^{\Sc}_{s-k}]}_{\text{Retained Basis}} \underbrace{[z_1 \, \dots
    \, z_{s-k}]}_{E}. 
\end{equation}

\textbf{1. Construct the Re-alignment Matrix}\\
Let $E \in \mathbb{R}^{(s-k) \times (s-k)}$ be the matrix of
eigenvectors. We construct a transformation matrix
$T \in \mathbb{R}^{s \times (s-k)}$ by padding $E$ with zeros to align
with the original $s$-dimensional space:
\begin{equation}
  T = \begin{bmatrix} E \\ 0_{k \times (s-k)} \end{bmatrix}.
\end{equation}

\textbf{2. Update the Triangular Factor}\\
We apply the transformation $T$ to the existing upper triangular
factor $R$ to form the intermediate matrix
$C = R T \in \mathbb{R}^{s \times (s-k)}$. We then perform a QR
decomposition of $C$ as
\begin{equation}
  C = Q_C R_C ,
\end{equation}
where $Q_C \in \mathbb{R}^{s \times (s-k)}$ is orthogonal and
$R^{\text{new}} = R_C \in \mathbb{R}^{(s-k) \times (s-k)}$ is the new
upper triangular factor.

\textbf{3. Update the Orthogonal Bases}\\
Finally, we update the orthogonal image basis $W$ by applying the
rotations derived above,
\begin{align}
    W^{\text{new}} &= W Q_C.
\end{align}
As we show next, the resulting matrices form the QR decomposition of
the new image space $\Kc \Uc^{\text{new}}$

\begin{lemma}[Correctness of Incremental
  QR]\label{lemma:incremental_validity}
  %
  %
  Consider the notation and construction of
  Section~\ref{sec:computation_principal_angles}.  The matrices
  $W^{\text{new}}$ and $R^{\text{new}}$ are a valid QR decomposition
  of~$\Kc \Uc^{\text{new}}$, i.e.,
  $\Kc \Uc^{\text{new}} = W^{\text{new}} R_C$.
\end{lemma}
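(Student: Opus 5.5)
We need to show three things: $\Kc \Uc^{\text{new}} = W^{\text{new}} R_C$, that $W^{\text{new}}$ has orthonormal columns, and that $R_C$ is upper triangular. The plan is a direct computation using linearity of $\Kc$ and the existing factorization $\Kc\Uc = WR$ from~\eqref{eq:QR_W}.

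\textbf{Step 1: express the new image through the old factorization.} Since $\Kc$ is linear, applying it to $\Uc^{\text{new}} = \Uc_{s-k} E$ gives $\Kc \Uc^{\text{new}} = [\Kc u^{\Sc}_1 \, \dots \, \Kc u^{\Sc}_{s-k}]\, E$. Padding $E$ with $k$ zero rows turns this into a product with all $s$ columns:
\[
  \Kc \Uc^{\text{new}} = \Kc \Uc \, T .
\]
Here $\Kc\Uc$ denotes the quasi-matrix of functions, and multiplication by a real matrix means taking linear combinations of its columns.

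\textbf{Step 2: substitute the factorizations.} Using $\Kc \Uc = W R$, we get $\Kc \Uc^{\text{new}} = W R T = W C$. Using $C = Q_C R_C$, this becomes $W Q_C R_C = W^{\text{new}} R_C$. Both steps rely only on associativity of linear combinations.

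\textbf{Step 3: check the QR properties.}

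Orthonormality of $W^{\text{new}}$ is the only step needing care, because $W$ is a quasi-matrix of elements of $\Fc$. I would argue through Gram matrices. The Gram matrix of $W$ is $[\langle w_i, w_j\rangle_{\Fc}] = I_s$, since $W$ has orthonormal columns. For any real matrix $Q$, the Gram matrix of $WQ$ is $Q^\top [\langle w_i,w_j\rangle] Q$, by bilinearity of the inner product. Applying this with $Q = Q_C$ gives Gram matrix $Q_C^\top Q_C = I_{s-k}$, because $Q_C$ comes from a thin QR and so has orthonormal columns.

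Upper triangularity of $R_C$ holds by construction.

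Two remarks complete the argument. First, the columns of $W^{\text{new}}$ lie in $\Kc\Sc$, because $W$ spans $\Kc\Sc$. Second, $R_C$ is invertible: $\Kc$ is injective on $\Sc$ under the standing assumption $\dim \Kc\Sc = s$, and $E$ is orthogonal, so $C = RT$ has full column rank. Hence the range of $W^{\text{new}}$ is exactly $\Kc\Sc^{\text{new}}$, and the factorization is a genuine thin QR.
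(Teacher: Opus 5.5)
Your proof is correct and follows the paper's argument: write $\Uc^{\text{new}} = \Uc T$, use linearity of $\Kc$ and $\Kc\Uc = WR$ to get $\Kc\Uc^{\text{new}} = WRT = WQ_C R_C = W^{\text{new}} R_C$, and conclude from the orthonormality of $W$ and $Q_C$ and the triangularity of $R_C$. Your Gram-matrix justification of orthonormality in $\Fc$ is a sound addition that the paper leaves implicit, as is your full-column-rank argument that $R_C$ is invertible, which shows the columns of $W^{\text{new}}$ span exactly $\Kc\Sc^{\text{new}}$.
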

\begin{proof}
  By equation \eqref{eq:new_vectors}, we have
  $\Uc^{\text{new}} = \Uc T$.
  %
  %
  Linearity of the operator $\Kc$ implies
  $\Kc \Uc^{\text{new}} = \Kc \Uc T$. Substituting the initial QR
  decomposition $\Kc \Uc = W R$
  %
  %
  yields
  \[
    \Kc \Uc^{\text{new}} = W R T.
  \]
  Using the definition of $C$ and its decomposition $C = Q_C R_C$, we
  expand the expression as
  \[
    \Kc \Uc^{\text{new}} = W (Q_C R_C) = (W Q_C) R_C = W^{\text{new}} R^{\text{new}}.
  \]
  Since $W$ has orthonormal columns and $Q_C$ is orthogonal, their
  product $W^{\text{new}}$ also has orthonormal columns. Furthermore,
  $R^{\text{new}} = R_C$ is upper triangular by construction. Thus,
  $W^{\text{new}} R^{\text{new}}$ is a valid QR decomposition.
\end{proof}

Based on Lemma~\ref{lemma:incremental_validity}, instead of computing
the QR decomposition of $\Kc \Uc^{\text{new}}$ from scratch (which
costs $O(N(s-k)^2)$ for $N$ data points), we can compute it using the
existing QR decomposition of $\Kc \Uc$ and the QR decomposition of the
smaller matrix $C$, which costs only $O(s(s-k)^2)$. This incremental
update leads to massive computational savings in data-driven
applications where the number of data points vastly exceeds the
dictionary size ($N \gg s$).

%
%
\begin{algorithm}[h]
  \caption{\textbf{Efficient Computation of Principal Arguments}}
  \label{alg:efficient_computation}
  \begin{algorithmic}[1]
    \REQUIRE 
    $\Uc = [u_1, \dots, u_s]$ \RightComment{\textit{PV}s of $\Sc$}
    \REQUIRE 
    $(W ,R) $ \RightComment{Thin QR of $\Kc \Uc$} 
    \REQUIRE 
    $\Lambda_{\sin \theta} \in \real^{s \times s}$
    \RightComment{Principal sines of $\Sc, \Kc \Sc$}  
    \REQUIRE 
    Drop count $k \in \mathbb{N}$ \RightComment{Number of \textit{PV}s to drop}
    \vspace{0.2cm}
    \STATE $\Wc_k \gets [w'_s,\dots,  w'_{s-k+1}]$
    \RightComment{Extract last $k$ columns of $W$} 
    \vspace{0.2cm}
    \STATE $D_W^{\text{cos}} \gets \langle \Uc, \Wc_k \rangle_{\Fc}$ \RightComment{Update vectors $\{d^{\omega_i} \cos \theta_i\}$}
    \vspace{0.2cm}
    \STATE Initialize $\Lambda_0 \gets (\Lambda_{\sin \theta}^2)_{1:s-k, \, 1:s-k}$ \RightComment{Top-left block}
    \STATE Initialize $E_0 \gets I_{s-k}$
    \FOR{$i = 1$ to $k$}
    \STATE $b_i \gets \text{first } s-k \text{ elements of } D_W^{\text{cos}}(:, i)$ 
    \STATE $(\Lambda_i, E_i) \gets \texttt{DLAED9}(\Lambda_{i-1}, E_{i-1}, b_i)$ \RightComment{LAPACK routine} 
    \ENDFOR
    
    \STATE Set $T \gets
    \begin{bmatrix}
      E_k \\ 0
    \end{bmatrix},\, C \gets R T$, $\Lambda_{\sin \theta}^{\text{new}} \gets \Lambda_k^{1/2}, \quad \Uc^{\text{new}} \gets \Uc T$
    \STATE $(Q_C, R_C) \gets \text{QR}(C)$, $W^{\text{new}} \gets W Q_C, \quad R^{\text{new}} \gets R_C$
    \smallskip
    \RETURN $(\Lambda_{\sin \theta}^{\text{new}}, \Uc^{\text{new}}, W^{\text{new}}, R^{\text{new}})$
  \end{algorithmic}
\end{algorithm}


\begin{remark}\longthmtitle{Efficient Algorithm for recomputing
    Principal Arguments}
  Algorithm~\ref{alg:efficient_computation} exploits the results of
  Theorem~\ref{thm:efficient_computation} and
  Lemma~\ref{lemma:incremental_validity} to compute efficiently new
  principal angles and vectors after having dropped multiple principal
  vectors from a subspace.  This procedure can be used as a high-speed
  subroutine in both SPV and MPV pruning.  The algorithm takes as
  input the principal vectors $\Uc$ of the subspace $\Sc$, the
  principal sines $\Lambda_{\sin \theta}$ between $\Sc$ and its image
  $\Kc \Sc$, the QR decomposition $(W,R)$ of $\Kc \Uc$, and the number
  of principal vectors $k$ to drop. The algorithm returns the
  corresponding quantities for the updated subspace $\Sc^{\text{new}}$
  after dropping the top $k$ principal vectors.

  In step 2, we compute the projection matrix $D_W^{\text{cos}}$,
  whose columns are $d^{w_i} \cos \theta_i$. This is evaluated
  directly via the inner product $\langle \Uc, \Wc_k \rangle_{\Fc}$
  (which reduces to the matrix multiplication $\Uc^T \Wc_k$ when
  utilizing the empirical $L_2$ inner product). This avoids the need
  to explicitly compute the principal vectors of~$\Kc \Sc$.  In steps
  3 and 6, we restrict the matrices and vectors to their first $s-k$
  dimensions. This truncation corresponds to finding the eigenpairs of
  $\tilde{N}_k$, which is obtained by dropping the last $k$ rows and
  columns of the full update matrix $N_k$. In step 7, we use the
  LAPACK subroutine \texttt{DLAED9} \cite{EA-LAPACK:99} to compute the
  eigenpairs $(\Lambda_i, E_i)$ of the matrix
  $E_{i-1} \Lambda_{i-1} E_{i-1}^T + b_ib_i^T$ obtained via the
  symmetric rank-one update.
  Section~\ref{sec:sim-computational_efficiency_numerics} describes
  numerical benchmarks demonstrating the vast computational efficiency
  Algorithm~\ref{alg:efficient_computation} when integrated into the
  pruning procedures.  \oprocend
\end{remark}
\vspace{-0.15cm}
\section{Koopman Model for State
  Prediction}\label{sec:state_construction}
\vspace{-0.15cm}
In this section, we outline a procedure for constructing a lifted
linear model for state prediction using the subspace obtained from the
pruning algorithms.
%
Let $s = \dim(\mathcal{S})$ be the dimension of the pruned subspace
obtained after solving Problem \ref{problem:subspace_search} using
either of
Algorithms~\ref{alg:naive-pruning},~\ref{alg:multi-pv-pruning},
or~\ref{alg:hybrid_pruning}. The evolution of observables within this
subspace is governed by the matrix
$K_{\text{EDMD}} \in \mathbb{R}^{s \times s}$, which is the matrix
representation of the projected Koopman operator
$\Pc_{\Sc} \Kc|_{\Sc}$, cf. Section~\ref{sec:EDMD}.

\subsection{Construction of the Lifted Linear Model}
Predicting the evolution of the full state $x \in \real^n$ translates
into describing the Koopman image of the state observables
$e_j(x) = x_j$ for $j \in [n]$ using the pruned subspace $\Sc$. A
natural way to achieve this is to define the approximate Koopman
operator $\hat{\Kc} : \Fc \to \Sc$ as:
\begin{equation*}
  \hat{\Kc} := \Pc_{\Sc} \Kc \circ \Pc_{\Sc}.  
\end{equation*}
This operator projects any observable onto the subspace $\Sc$, applies
the Koopman operator, and then projects the result back onto $\Sc$.
Since $e_j$ does not belong to $\Sc$ in general, we approximate it by
its projection onto $\Sc$. Formally, if
$\Psi = [\psi_1, \dots, \psi_s]^{\top}$ is a basis of $\Sc$,
$\text{span}(\Psi) = \Sc$, then
$ \Pc_{\Sc} e_j = \sum_{i=1}^{s} C_{ji} \psi_i$, which implies
\begin{align*}
  \hat{\Kc} e_j &= \sum_{i=1}^{s} C_{ji} \Pc_{\Sc} \Kc \psi_i.
\end{align*}
Collecting the projection coefficients into a matrix
$C \in \real^{n \times s}$, and utilizing the EDMD matrix
$K_{\text{EDMD}}$ to compute the action of $\Pc_{\Sc} \Kc$ on
$\psi_i$, we obtain the following approximation for the next state:
\begin{equation}
  x^+_j \approx \hat{\Kc} e_j(x) =  c_j^{\top} K_{\text{EDMD}} \Psi(x),
\end{equation}
where $c_j^{\top} \in \real^{1 \times s}$ is the $j$-th row of the
matrix $C$.

By defining the lifted state as $z = \Psi(x) \in \real^{s}$, this
yields the discrete-time lifted linear model:
\begin{align}\label{eq:lifted_dynamics2}
  z^+ = A z, 
  \,\,
  \hat{x} = C z,
\end{align}  
where $A = K_{\text{EDMD}} \in \real^{s \times s}$ is the dynamics
matrix, and $C \in \real^{n \times s}$ is the state reconstruction
matrix obtained by projecting the state observables onto the
basis~$\Psi$.
%
%

\subsection{Balancing Invariance and Reconstruction
  Error}\label{sec:state_reconstruction}

The framework of pruning subspaces while relying on the orthogonal
projection to such subspaces for state reconstruction exposes another
fundamental trade-off. The pruning algorithms generate a nested
sequence of subspaces
$\mathcal{S}_0 \supset \mathcal{S}_1 \supset \dots \supset
\mathcal{S}_T$. As the iteration count $k$ increases, the subspace
dimension decreases, and two competing effects emerge:
\begin{enumerate}
\item \textbf{Invariance Improves:} By design, the invariance
  proximity $\delta(\mathcal{S}_k)$ decreases with~$k$. This ensures
  the accurate long-term evolution of the lifted state $z_k$ under the
  linear dynamics~\eqref{eq:lifted_dynamics2};
\item \textbf{Reconstruction Degrades:} As $\mathcal{S}_k$ shrinks,
  the subspace loses expressive power, and the projection error
  $\| (I - P_{\mathcal{S}_k}) e \|$ of the state observables typically
  increases.
\end{enumerate}
%
%

The designer must carefully weigh the cost of cumulative lifted state
errors against static projection errors. An important observation is
that lifted state errors, which are driven by non-invariance,
accumulate exponentially over time (as the matrix $A$ is raised to
higher powers), whereas the reconstruction error, which are driven by
the state projection onto the subspace (encoded in the matrix $C$) is
static. Therefore, for long-horizon prediction or planning, a tighter
invariant subspace (smaller $\delta$) is preferable. However, for
short-horizon control tasks, a larger subspace with richer state
reconstruction capabilities may be
favored. Section~\ref{sec:sim-vdp_simulation} provides an empirical
example illustrating this trade-off in practice.

\section{Simulation Results}
In this section, we present numerical experiments to validate the
proposed pruning algorithms and demonstrate their effectiveness in
identifying Koopman invariant subspaces. All simulations were
performed on a machine with an Apple M1 Pro chip and 16GB of RAM,
using Python 3.11.4. Unless otherwise specified, we use the standard
inner product on $L_2(\mu_X)$, where $X$ is the trajectory data used
for pruning.

\subsection{Numerical Benchmarking of Pruning
  Algorithms}\label{sec:sim-numerical_benchmarking}

We consider a 2D nonlinear dynamical system to benchmark the numerical
performance of the proposed pruning algorithms. The system dynamics
are given by:
\begin{subequations}\label{eq:sys1}
  \begin{align}
    x_{1}^+ &= 0.8 x_{1} \\
    x_{2}^+ &= \sqrt{0.9 x_{2}^2 + x_{1} + 0.1}
  \end{align}
\end{subequations}
For this system, the Koopman operator possesses 4 known eigenfunctions
among infinitely many others, which are explicitly given by:
$\phi_1(x) = 1$, $\phi_2(x) = x_1$, $\phi_3(x) = x_1^2$, and
$\phi_4(x) = 1 - 10 x_1 - x_2^2$.
%
%
The corresponding eigenvalues are $\lambda_1 = 1$, $\lambda_2 = 0.8$,
$\lambda_3 = 0.64$, and $\lambda_4 = 0.9$. We employ these
eigenfunctions as the ground truth for evaluating the accuracy of the
pruning algorithms.

We initialize the search space with a large dictionary of basis
functions with $1615$ elements, comprising both polynomials (up to
degree 4) and radial basis functions. We generate trajectory data by simulating 500 trajectories, each consisting of 100
time steps, from random initial conditions on $[0,2]^2$.

We apply the pruning algorithms to identify a low-dimensional
approximately invariant subspace that accurately captures the system's
dynamics. It must be noted that the large initial dictionary does
contain all 4 true eigenfunctions, but also includes many irrelevant
basis functions that can introduce numerical drift during pruning. In
the ideal case, the pruning algorithms should be able to recover a
subspace that contains the 4 true eigenfunctions.

Since the initial dictionary is quite large, there are a lot of
linearly dependent basis functions. To mitigate this, we first
performed a rank-revealing QR decomposition on the initial dictionary
to obtain a well-conditioned basis of dimension $s = 274$ before
applying the pruning algorithms. We use
$\epsilon_{\text{tol}} = 10^{-3}$ as the target tolerance for all
pruning algorithms. We implement the rank-one update scheme (Algorithm
\ref{alg:efficient_computation}) within each pruning algorithm to
speed up computations. We also utilize
$\epsilon_{\text{coarse}} = 0.1$ for the MPV pre-pruning step in the
\texttt{MPV\_SPV} algorithm.

\begin{figure}[htb!]
  \centering
  \begin{subfigure}[b]{0.7\linewidth}
    \centering
    \includegraphics[width=\linewidth]{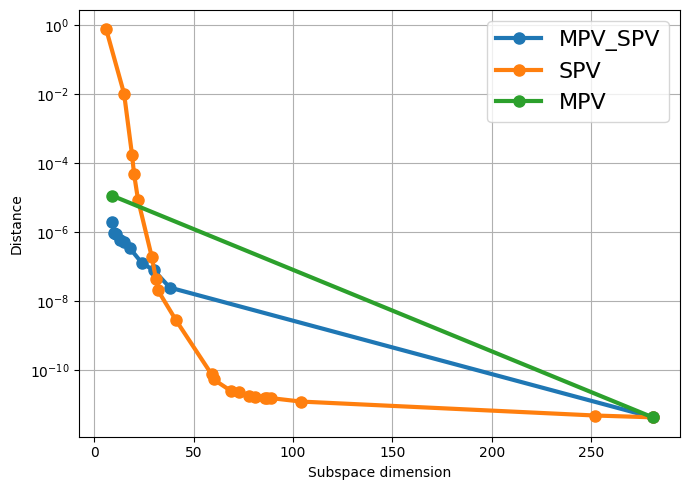}
    \caption{Maximum distance to eigenfunctions}
    \label{fig:accuracy_left}
  \end{subfigure}
  \\
  \begin{subfigure}[b]{0.7\linewidth}
    \centering
    \includegraphics[width=\linewidth]{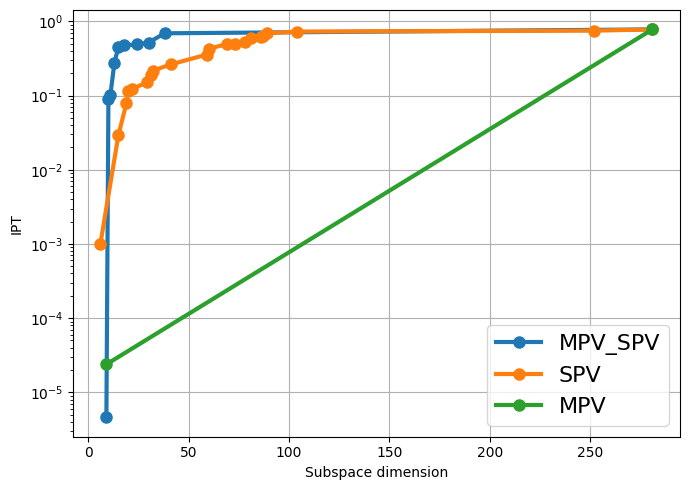}
    \caption{Invariance proximity (IPT)}
    \label{fig:ipt_right}
  \end{subfigure}
  \caption{Comparison of pruning algorithms (read them from right to left). (a) displays
    the maximum distance (Frobenius norm) to the 4 true eigenfunctions
    as a function of the subspace dimension. (b) shows the
    invariance proximity (IPT) of the sequence of pruned subspaces. We
    see that \texttt{MPV\_SPV} (blue) outperforms both \texttt{SPV}
    (orange) and \texttt{MPV} (green).}
  \label{fig:pruning_results}
\end{figure}
%
%

The results are summarized in Figure~\ref{fig:pruning_results}. They
highlight an important drawback of naive iterative pruning methods. If
the search space is too large, numerical drift can lead to significant
degradation in the accuracy of the identified eigenfunctions. We see a
clear hierarchy in performance among the three algorithms. The
\texttt{MPV\_SPV} algorithm, which combines the strengths of both
\texttt{MPV} and \texttt{SPV} methods, is able to mitigate numerical
drift effectively. Furthermore, we also obtain a sequence of nested
subspaces with small invariance proximity which is missing for the
\texttt{MPV} algorithm. This is useful when balancing invariance and
state reconstruction (Section~\ref{sec:state_reconstruction}).

\vspace{-0.2cm}
\subsection{Evaluation of Computational
  Efficiency}\label{sec:sim-computational_efficiency_numerics}

To assess the efficiency gains of the rank-one update scheme,
cf. Algorithm \ref{alg:efficient_computation}, we benchmark the
computation time of the rank-one update scheme against the naive
approach, which recomputes the principal vectors and angles from
scratch at every pruning step. Using the dynamics described in
\eqref{eq:sys1}, we evaluate dictionary sizes of
$s = \{53, 128, 428, 928\}$, constructed via polynomial and radial
basis functions. We also record the runtime of the first SVD
computation to provide a baseline for initialization costs.

The results, summarized in Table \ref{tab:timing_comparison},
demonstrate that the rank-one update scheme yields substantial time
reductions across all algorithms, particularly for larger dimensions
of the initial dimension~$s$. Crucially, in addition to the superior
performance of the \texttt{MPV-SPV} algorithm observed in
Section~\ref{sec:sim-numerical_benchmarking}, the results here confirm
it also maintains a competitive runtime profile when implemented with
rank-one updates. This balance of accuracy and efficiency positions
the rank-one \texttt{MPV-SPV} as a highly practical choice for
large-scale systems. Notably, for large $s$, the total time required
for the pruning updates is comparable to the initial SVD computation
alone, indicating that the overhead of the proposed pruning mechanism
is minimal compared to system initialization.

\begin{table}[ht]
  \centering
  \setlength{\tabcolsep}{2.5pt} 
  
  \resizebox{\columnwidth}{!}{
    \begin{tabular}{rcccccc}
      \toprule
      \textbf{\shortstack{Init.\\Dim.}} &
        \textbf{SPV} & 
        \textbf{\shortstack{SPV\\(Rank-1)}} & 
        \textbf{MPV} & 
        \textbf{\shortstack{MPV\\(Rank-1)}} & 
        \textbf{\shortstack{MPV-SPV\\(Rank-1)}} & 
        \textbf{\shortstack{First\\Comp.}} \\
        \midrule
        53  & 4.0982   & 0.7420  & 0.5486  & 0.5343  & \textbf{0.6296} & 0.2728 \\
        128 & 32.6461  & 2.3755  & 1.3348  & 1.2111  & \textbf{1.3946} & 1.1026 \\
        428 & 176.1153 & 8.7307  & 4.9256  & 3.5185  & \textbf{3.8649} & 3.3766 \\
        928 & 204.0399 & 14.5440 & 10.9164 & 10.2500 & \textbf{9.2133} & 8.8965 \\
      \bottomrule
    \end{tabular}
  }
  \caption{Wall-clock time comparison (in seconds) demonstrating the
    efficiency of rank-one updates across varying dictionary sizes.}
  \label{tab:timing_comparison}
\end{table}

\subsection{Van Der Pol Oscillator}\label{sec:sim-vdp_simulation}
We now demonstrate the application of the proposed pruning algorithms
to construct a Koopman-based model for the Van Der Pol oscillator, a
classic nonlinear dynamical system.
The (discretized) dynamics are given by:
\begin{subequations}\label{eq:vdp}
  \begin{align}
    x_{1}^+ &= x_{1} + \Delta_t \,  x_{2} \\
    x_{2}^+ &= x_{2} + \Delta_t \, \left( (1 - x_{1}^2) x_{2} - x_{1} \right)
  \end{align}
\end{subequations}
where $\Delta_t = 0.025$ is the time step. We generate a dataset of
200,000 snapshot pairs by simulating 500 trajectories, each consisting
of 400 time steps, from random initial conditions uniformly sampled in
the range $[-4, 4]^2$.

The initial dictionary is constructed using the Wendland kernel
\cite{FK-FMP-MS-AS-KW:25} with compact support using centers placed on
a uniform grid over the domain $[-4, 4]^2$ with spacing
$\delta = 0.5$. This leads to a total of $s = 289$ basis functions. As
a baseline comparison, we also implement the kernel EDMD
\cite{FK-FMP-MS-AS-KW:25} using the same Wendland kernel and
centers. Note that kernel EDMD performs the orthogonal projection
using the kernel inner product, which is different from the standard
$L_2(\mu_X)$ inner product used in our other examples. As another
baseline, we perform standard EDMD using the full initial dictionary
without pruning. Finally, we apply the proposed \texttt{MPV-SPV}
pruning algorithm to identify a low-dimensional approximately
invariant subspace working with the standard $L_2(\mu_X)$ inner
product.
We use $\epsilon_{\text{coarse}} = 0.1$ for the \texttt{MPV} step and
$\epsilon_{\text{tol}} = 0.01$ for the \texttt{SPV} step. This process
significantly reduces the dictionary dimension, as shown in
Figure~\ref{fig:vdp_pruning}.

\begin{figure}[ht]
  \centering
  \includegraphics[width=\linewidth]{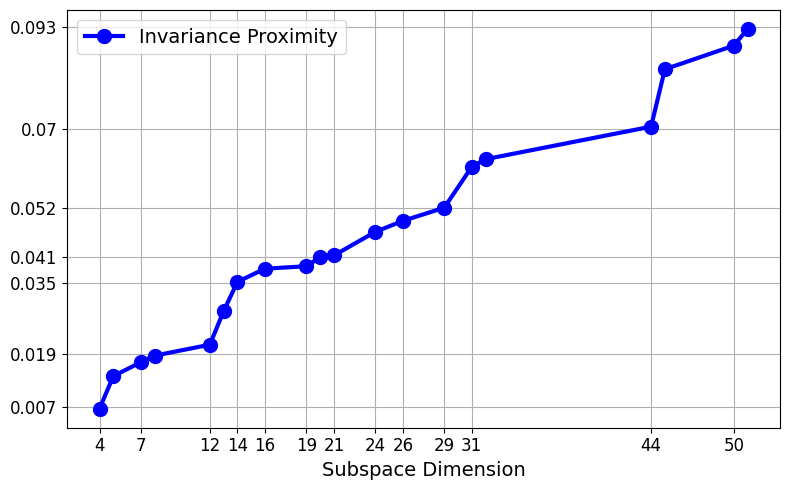}
  \caption{Invariance analysis for the Van Der Pol oscillator. The
    \texttt{MPV-SPV} pruning algorithm generates a sequence of nested
    subspaces, visualizing the trade-off between model complexity
    (dimension) and invariance proximity (IPT). Only subspaces with
    IPT below the tolerance $\epsilon_{\text{tol}} = 0.1$ are shown.}
  \label{fig:vdp_pruning}
\end{figure}
%
%

Next, we compare the prediction accuracy of the Koopman models
constructed using the pruned subspace, kernel EDMD, and standard
EDMD. We pick the pruned subspace with dimension $s = 12$ for
constructing the Koopman model. The time domain simulation results are
summarized in Figure \ref{fig:vdp_combined_results}. We pick
$x_{\text{init}} = [2.97, -3.76]^{\top}$ as the initial condition and simulate the
system for 3000 time steps. The lifted state is initialized as
$z_0 = \Psi(x_0)$, where $\Psi$ is the basis corresponding to each
method. Then, the lifted trajectory is computed using the lifted
dynamics \eqref{eq:lifted_dynamics2} and the state trajectory is
reconstructed using \eqref{eq:lifted_dynamics2}.

Let $x_{\text{true}}(t)$ denote the true state trajectory,
$x_{\text{pred}}(t)$ denote the predicted state trajectory using the
lifted model and $z_{\text{pred}}(t)$ denote the lifted state
trajectory predicted by the Koopman model. We compute the state
prediction error as
$e_{\text{state}}(t) = \| x_{\text{true}}(t) - x_{\text{pred}}(t) \|$
and the lifted state prediction error as
$e_{\text{lifted}}(t) = \| z_{\text{true}}(t) - z_{\text{pred}}(t) \|$,
where $z_{\text{true}}(t) = \Psi(x_{\text{true}}(t))$.

We observe that the pruned model is able to capture the long-term
behavior of the system more accurately than both kernel EDMD and
standard EDMD. For a shorter horizon, kernel EDMD provides better
state estimates, but the model is unstable and the error grows rapidly
over time. Standard EDMD provides consistent but less accurate
predictions compared to the pruned model. The lifted state error plot
highlights the advantage of the pruned model in maintaining a more
accurate representation of the system's dynamics in the lifted space,
which translates to better long-term predictions in the original state
space.

\begin{figure}[htb!]
  \centering
  \begin{subfigure}[b]{0.49\linewidth}
    \centering
    \includegraphics[width=\linewidth]{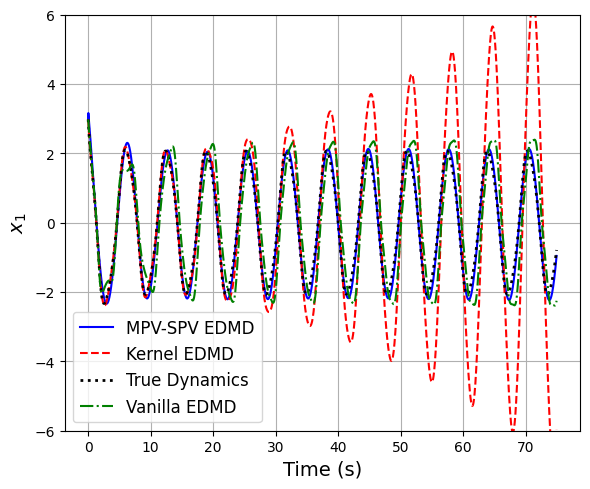} 
    \caption{$x_1$ trajectory}
    \label{fig:traj_vdp_x}
  \end{subfigure}
  \hfill
  \begin{subfigure}[b]{0.49\linewidth}
    \centering
    \includegraphics[width=\linewidth]{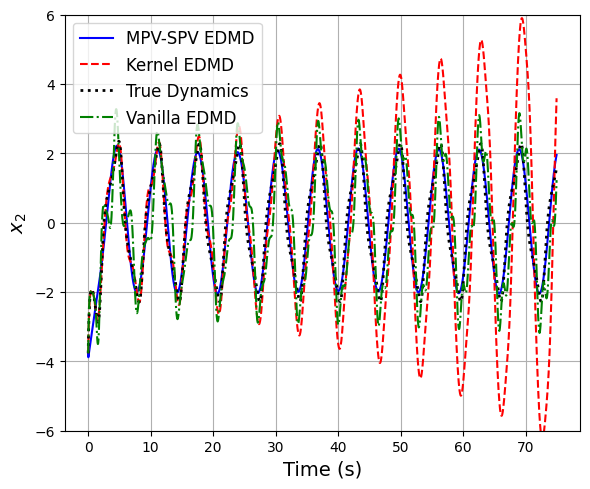} 
    \caption{$x_2$ trajectory}
    \label{fig:traj_vdp_y}
  \end{subfigure}
  \\[1ex]
  \begin{subfigure}[b]{0.49\linewidth}
    \centering
    \includegraphics[width=\linewidth]{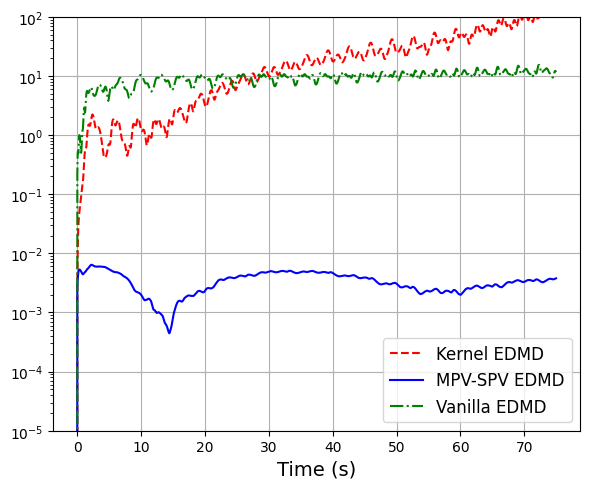} 
    \caption{Lifted state error}
    \label{fig:error_lifted}
  \end{subfigure}
  \hfill
  \begin{subfigure}[b]{0.49\linewidth}
    \centering
    \includegraphics[width=\linewidth]{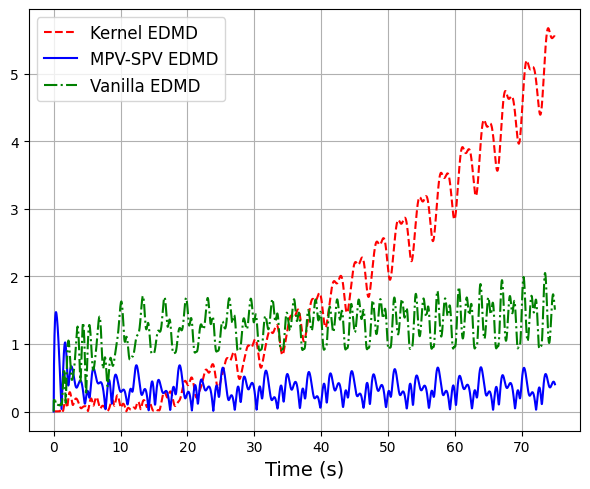} 
    \caption{State prediction error}
    \label{fig:error_state}
  \end{subfigure}
  \caption{Performance comparison of Koopman models on the Van Der Pol
    oscillator using the pruned subspace ($s=12$), Kernel EDMD
    ($s=289$), and Standard EDMD ($s=289$). The top row shows
    trajectory reconstruction, where the pruned model captures
    long-term behavior. The bottom row highlights prediction errors;
    the pruned model maintains lower error over time compared to the
    baselines, particularly in the lifted state representation.}
  \label{fig:vdp_combined_results}
\end{figure}
%
%
\vspace{-0.5cm}
\section{Conclusions}\label{sec:conclusions}
We have established a unified geometric framework for identifying
approximate Koopman-invariant subspaces via single (SPV) and multiple
(MPV) principal vector pruning.  Our formal characterization of how
approximate and external eigenfunctions are retained by these subspace
pruning methods has led us to strategically combine their strengths to
actively mitigate numerical drift. The resulting hybrid approach
ensures the accurate identification of invariant subspaces, even when
starting from massively overparameterized initial dictionaries. To
guarantee scalability, we have incorporated an efficient rank-one
update scheme; this advancement reduces the computational complexity
of tracking principal angles by an order of magnitude while adding
minimal overhead compared to the initial SVD computation.
Furthermore, we have outlined a procedure for constructing lifted
linear models that balances invariance and state reconstruction
errors. Numerical evaluation shows superior stability and accuracy in
capturing long-term dynamics compared to standard and kernel EDMD.
Future work will extend the algebraic framework to systems with
control inputs, integrate the proposed techniques with deep learning
to iteratively refine pruned dictionaries, and derive pointwise error
bounds for state-dependent certification in safety-critical
applications.

\section*{Acknowledgments}
The authors would like to thank Dr. Masih Haseli for
insightful discussions on RFB-EDMD and principal angles.


%
%

%
%

\vspace{-0.5cm}
\bibliographystyle{ieeetr}
\bibliography{../bib/alias,../bib/JC,../bib/Main,../bib/Main-add}

\begin{thebibliography}{10}

\bibitem{DS-JC:26-cdc1}
D.~Shah and J.~Cort\'es, ``A unified algebraic framework for subspace pruning in {K}oopman operator approximation via principal vectors,'' in {\em {IEEE} Conf.\ on Decision and Control}, (Honolulu, Hawaii), Dec. 2026.
\newblock Submitted.

\bibitem{BOK-JVN:32}
B.~O. {K}oopman and J.~V. Neumann, ``Dynamical systems of continuous spectra,'' {\em Proceedings of the National Academy of Sciences}, vol.~18, no.~3, pp.~255--263, 1932.

\bibitem{SK-FN-SP-JHN-CC-CS:20}
S.~Klus, F.~N{\"u}ske, S.~Peitz, J.~H. Niemann, C.~Clementi, and C.~Sch{\"u}tte, ``Data-driven approximation of the {K}oopman generator: Model reduction, system identification, and control,'' {\em Physica D: Nonlinear Phenomena}, vol.~406, p.~132416, 2020.

\bibitem{AD-DL-DS-AT:21}
A.~Dotto, D.~Lengani, D.~Simoni, and A.~Tacchella, ``Dynamic mode decomposition and {K}oopman spectral analysis of boundary layer separation-induced transition,'' {\em Physics of Fluids}, vol.~33, no.~10, 2021.

\bibitem{SPN-SG-SK-SP-KA-YW-SC:22}
S.~P. Nandanoori, S.~Guan, S.~Kundu, S.~Pal, K.~Agarwal, Y.~Wu, and S.~Choudhury, ``Graph neural network and {K}oopman models for learning networked dynamics: A comparative study on power grid transients prediction,'' {\em IEEE Access}, vol.~10, pp.~32337--32349, 2022.

\bibitem{LS-MH-GM-DB-IA-TM-JC-KK:26}
L.~Shi, M.~Haseli, G.~Mamakoukas, D.~Bruder, I.~Abraham, T.~Murphey, J.~Cort\'es, and K.~Karydis, ``Koopman operators in robot learning,'' {\em IEEE Transactions on Robotics}, vol.~42, pp.~1088--1107, 2026.

\bibitem{AM-IM:16}
A.~Mauroy and I.~Mezi{\'c}, ``Global stability analysis using the eigenfunctions of the {K}oopman operator,'' {\em IEEE Transactions on Automatic Control}, vol.~61, no.~11, pp.~3356--3369, 2016.

\bibitem{BY-IRM:24}
B.~Yi and I.~R. Manchester, ``On the equivalence of contraction and {K}oopman approaches for nonlinear stability and control,'' {\em IEEE Transactions on Automatic Control}, vol.~69, no.~7, pp.~4336--4351, 2024.

\bibitem{SAD-DVD:23}
S.~A. Deka and D.~V. Dimarogonas, ``Supervised learning of {L}yapunov functions using {L}aplace averages of approximate {K}oopman eigenfunctions,'' {\em IEEE Control Systems Letters}, vol.~7, pp.~3072--3077, 2023.

\bibitem{CMZ-AM:25}
C.~M. Zagabe and A.~Mauroy, ``Uniform global stability of switched nonlinear systems in the {K}oopman operator framework,'' {\em SIAM Journal on Control and Optimization}, vol.~63, no.~1, pp.~472--501, 2025.

\bibitem{YM-RZ-JL:23}
Y.~Meng, R.~Zhou, and J.~Liu, ``Learning regions of attraction in unknown dynamical systems via {Z}ubov-{K}oopman lifting: Regularities and convergence,'' {\em IEEE Transactions on Automatic Control}, 2025.

\bibitem{MK-IM:20}
M.~Korda and I.~Mezic, ``Optimal construction of {K}oopman eigenfunctions for prediction and control,'' {\em IEEE Transactions on Automatic Control}, vol.~65, no.~12, pp.~5114--5129, 2020.

\bibitem{MH-JC:26-auto}
M.~Haseli and J.~Cort\'es, ``Modeling nonlinear control systems via {K}oopman control family: universal forms and subspace invariance proximity,'' {\em Automatica}, vol.~185, p.~112722, 2026.

\bibitem{MH-IM-JC:25-tac}
M.~Haseli, I.~Mezi\'c, and J.~Cort\'es, ``Two roads to {K}oopman operator theory for control: infinite input sequences and operator families,'' {\em IEEE Transactions on Automatic Control}, 2025.
\newblock Submitted.

\bibitem{DG-VK-FP:24}
D.~Gadginmath, V.~Krishnan, and F.~Pasqualetti, ``Data-driven feedback linearization using the {K}oopman generator,'' {\em IEEE Transactions on Automatic Control}, vol.~69, no.~12, pp.~8844--8851, 2024.

\bibitem{VZ-EB:23}
V.~Zinage and E.~Bakolas, ``Neural {K}oopman control barrier functions for safety-critical control of unknown nonlinear systems,'' in {\em {A}merican {C}ontrol {C}onference}, pp.~3442--3447, IEEE, 2023.

\bibitem{MEV-CNJ-BH:21}
M.~E. Villanueva, C.~N. Jones, and B.~Houska, ``Towards global optimal control via {K}oopman lifts,'' {\em Automatica}, vol.~132, p.~109610, 2021.

\bibitem{JH-KC:24}
J.~Hespanha and K.~{\c{C}}amsari, ``Markov chain {M}onte {C}arlo for {K}oopman-based optimal control,'' {\em IEEE Control Systems Letters}, vol.~8, pp.~1901--1906, 2024.

\bibitem{RS-MS-KW-JB-FA:26}
R.~Strässer, M.~Schaller, K.~Worthmann, J.~Berberich, and F.~Allgöwer, ``Safedmd: A {K}oopman-based data-driven controller design framework for nonlinear dynamical systems,'' {\em Automatica}, vol.~185, p.~112732, 2026.

\bibitem{RS-JB-FA:25}
R.~Strässer, J.~Berberich, and F.~Allgöwer, ``Koopman-based control using sum-of-squares optimization: Improved stability guarantees and data efficiency,'' {\em European Journal of Control}, vol.~86, p.~101367, 2025.
\newblock Special Issue on the European Control Conference 2025.

\bibitem{DS-JC:25-csl}
D.~Shah and J.~Cort\'es, ``Controller design for bilinear neural feedback loops,'' {\em IEEE Control Systems Letters}, vol.~9, pp.~1712--1717, 2025.

\bibitem{RS-KW-IM-JB-MS-FA:26}
R.~Str{\"a}sser, K.~Worthmann, I.~Mezi{\'c}, J.~Berberich, M.~Schaller, and F.~Allg{\"o}wer, ``An overview of {K}oopman-based control: From error bounds to closed-loop guarantees,'' {\em Annual Reviews in Control}, vol.~61, p.~101035, 2026.

\bibitem{ZL-NO-EDS:25}
Z.~Liu, N.~Ozay, and E.~D. Sontag, ``Properties of immersions for systems with multiple limit sets with implications to learning {K}oopman embeddings,'' {\em Automatica}, vol.~176, p.~112226, 2025.

\bibitem{PJS:10}
P.~J. Schmid, ``Dynamic mode decomposition of numerical and experimental data,'' {\em Journal of Fluid Mechanics}, vol.~656, pp.~5--28, 2010.

\bibitem{MOW-IGK-CWR:15}
M.~O. Williams, I.~G. Kevrekidis, and C.~W. Rowley, ``A data-driven approximation of the {K}oopman operator: Extending dynamic mode decomposition,'' {\em Journal of Nonlinear Science}, vol.~25, no.~6, pp.~1307--1346, 2015.

\bibitem{MK-IM:18}
M.~Korda and I.~Mezi{\'c}, ``On convergence of extended dynamic mode decomposition to the {K}oopman operator,'' {\em Journal of Nonlinear Science}, vol.~28, no.~2, pp.~687--710, 2018.

\bibitem{FN-SP-FP-MS-KW:23}
F.~N{\"u}ske, S.~Peitz, F.~Philipp, M.~Schaller, and K.~Worthmann, ``Finite-data error bounds for {K}oopman-based prediction and control,'' {\em Journal of Nonlinear Science}, vol.~33, no.~1, p.~14, 2023.

\bibitem{FK-FMP-MS-AS-KW:25}
F.~K\"{o}hne, F.~M. Philipp, M.~Schaller, A.~Schiela, and K.~Worthmann, ``\(\boldsymbol{L}^{\boldsymbol{\infty }}\)-error bounds for approximations of the {K}oopman operator by kernel extended dynamic mode decomposition,'' {\em SIAM Journal on Applied Dynamical Systems}, vol.~24, no.~1, pp.~501--529, 2025.

\bibitem{MH-FMP-MS-KW:25}
M.~Hertel, F.~M. Philipp, M.~Schaller, and K.~Worthmann, ``Koopman for stochastic dynamics: error bounds for kernel extended dynamic mode decomposition,'' {\em arXiv preprint arXiv:2512.20247}, 2025.

\bibitem{MH-JC:23-auto}
M.~Haseli and J.~Cort\'es, ``Generalizing dynamic mode decomposition: balancing accuracy and expressiveness in {K}oopman approximations,'' {\em Automatica}, vol.~153, p.~111001, 2023.

\bibitem{MJC-CD-AH:25}
M.~J. Colbrook, C.~Drysdale, and A.~Horning, ``Rigged dynamic mode decomposition: Data-driven generalized eigenfunction decompositions for {K}oopman operators,'' {\em SIAM Journal on Applied Dynamical Systems}, vol.~24, no.~2, pp.~1150--1190, 2025.

\bibitem{YX-KS-NKL-ZS:25}
Y.~Xu, K.~Shao, N.~K. Logothetis, and Z.~Shen, ``Reskoopnet: Learning {K}oopman representations for complex dynamics with spectral residuals,'' in {\em Forty-second International Conference on Machine Learning}, (Vancouver, Canada), 2025.

\bibitem{BG-JP-SC-OA:25}
B.~Gao, J.~Patracone, S.~Chr{\'e}tien, and O.~Alata, ``Conformal online learning of deep {K}oopman linear embeddings,'' {\em arXiv preprint arXiv:2511.12760}, 2025.

\bibitem{HZ-STMD-CWR-EAD-LNC:20}
H.~Zhang, S.~T.~M. Dawson, C.~W. Rowley, E.~A. Deem, and L.~N. Cattafesta, ``Evaluating the accuracy of the dynamic mode decomposition,'' {\em Journal of Computational Dynamics}, vol.~7, no.~1, pp.~35--56, 2020.

\bibitem{SI-SB-AA-NK:25}
S.~M. Ichinaga, S.~L. Brunton, A.~Y. Aravkin, and J.~N. Kutz, ``Sparse-mode dynamic mode decomposition for disambiguating local and global structures,'' {\em arXiv preprint arXiv:2507.19787}, 2025.

\bibitem{MJC-LJA-MS:23}
M.~J. Colbrook, L.~J. Ayton, and M.~Sz{\H{o}}ke, ``Residual dynamic mode decomposition: robust and verified {K}oopmanism,'' {\em Journal of Fluid Mechanics}, vol.~955, p.~A21, 2023.

\bibitem{MJC-AT:24}
M.~J. Colbrook and A.~Townsend, ``Rigorous data-driven computation of spectral properties of {K}oopman operators for dynamical systems,'' {\em Communications on Pure and Applied Mathematics}, vol.~77, no.~1, pp.~221--283, 2024.

\bibitem{MJK:24}
M.~J. Colbrook, ``Another look at residual dynamic mode decomposition in the regime of fewer snapshots than dictionary size,'' {\em Physica D: Nonlinear Phenomena}, vol.~469, p.~134341, 2024.

\bibitem{GC-NB-JCL-SB-MJC:26}
G.~Conradie, N.~Boull{\'e}, J.-C. Loiseau, S.~L. Brunton, and M.~J. Colbrook, ``Trustworthy {K}oopman {O}perator {L}earning: {I}nvariance {D}iagnostics and {E}rror {B}ounds,'' {\em arXiv preprint arXiv:2603.15091}, 2026.

\bibitem{MH-JC:23-csl}
M.~Haseli and J.~Cort\'es, ``Temporal forward-backward consistency, not residual error, measures the prediction accuracy of {E}xtended {D}ynamic {M}ode {D}ecomposition,'' {\em IEEE Control Systems Letters}, vol.~7, pp.~649--654, 2023.

\bibitem{MH-JC:24-csl-arxiv-revised}
M.~Haseli and J.~Cort\'es, ``Invariance proximity: closed-form error bounds for finite-dimensional {K}oopman-based models,'' {\em https://arxiv.org/abs/2311.13033}, 2024.

\bibitem{MH-JC:22-tac}
M.~Haseli and J.~Cort\'es, ``Learning {K}oopman eigenfunctions and invariant subspaces from data: {S}ymmetric {S}ubspace {D}ecomposition,'' {\em IEEE Transactions on Automatic Control}, vol.~67, no.~7, pp.~3442--3457, 2022.

\bibitem{MH-JC:25-access}
M.~Haseli and J.~Cort\'es, ``Recursive forward-backward {EDMD}: Guaranteed algebraic search for {K}oopman invariant subspaces,'' {\em IEEE Access}, vol.~13, pp.~61006--61025, 2025.

\bibitem{AM-YS-IM:20}
A.~Mauroy, Y.~Susuki, and I.~Mezi{\'c}, {\em Koopman Operator in Systems and Control}.
\newblock New York: Springer, 2020.

\bibitem{lQL-FD-EMB-IGK:17}
Q.~Li, F.~Dietrich, E.~M. Bollt, and I.~G. Kevrekidis, ``Extended dynamic mode decomposition with dictionary learning: A data-driven adaptive spectral decomposition of the {K}oopman operator,'' {\em Chaos}, vol.~27, no.~10, p.~103111, 2017.

\bibitem{AB-GHG:73}
A.~Bj{\"o}rck and G.~H. Golub, ``Numerical methods for computing angles between linear subspaces,'' {\em Mathematics of Computation}, vol.~27, no.~123, pp.~579--594, 1973.

\bibitem{GHG-CFVL:13}
G.~H. Golub and C.~F.~V. Loan, {\em Matrix Computations}.
\newblock The Johns Hopkins University Press, 2013.

\bibitem{EA-LAPACK:99}
E.~Anderson, Z.~Bai, C.~Bischof, L.~S. Blackford, J.~Demmel, J.~Dongarra, J.~Du~Croz, A.~Greenbaum, S.~Hammarling, A.~McKenney, {\em et~al.}, {\em LAPACK users' guide}.
\newblock SIAM, 1999.

\end{thebibliography}

\section{Appendix}

\subsection{Results on principal angles and vectors}
We present some additional results on principal angles and vectors
that are used in the main text.


\begin{lemma}[Alternate Characterization of Principal
  Arguments]\label{lemma:cos_theta_max}
  Let $\Uc, \Vc \subset \Hc$ be two subspaces with
  $a = \dim(\Uc) \leq \dim(\Vc) = b$.  Let $\{\theta_j\}_{j=1}^a$ be
  the principal angles between $\Uc$ and $\Vc$, and let
  $\{x_j^{\Uc}\}_{j=1}^a \subset \Uc$ and
  $\{y_j^{\Vc}\}_{j=1}^a \subset \Vc$ be the corresponding principal
  vectors.  Then, for $k \in [a]$,
  \begin{equation}\label{eq:alternate_deftn_pvs} 
    \cos \theta_{a - (k-1)} = \min_{\substack{x \in \Uc \\ x \perp
        x_{a-(k-2)}^{\Uc},\dots,x_a^{\Uc}}} \frac{\| \Pc_{\Vc}(x)
      \|}{\| x \|}  ,
  \end{equation}
  and for a given $k$, the minimizer of \eqref{eq:alternate_deftn_pvs}
  is the principal vector $x_{a-(k-1)}^{\Uc}$. Consequently, the set
  of all minimizers as $k \in [a]$ is exactly the set of principal
  vectors $\{x_j^{\Uc}\}_{j=1}^{a}$.
\end{lemma}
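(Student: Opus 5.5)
The plan is to work in the principal-vector coordinates of $\Uc$ and reduce everything to a Rayleigh quotient of a diagonal matrix. By Definition~\ref{defn:pa_pv} and the subsequent Remark, $\{x_j^{\Uc}\}_{j=1}^a$ is an orthonormal basis of $\Uc$ (since $a=\dim\Uc\le b$). By the cited property $\langle x_i^{\Uc},y_j^{\Vc}\rangle=\delta_{ij}\cos\theta_i$, the projection onto $\Vc$ acts diagonally on this basis, up to directions orthogonal to the $y_j$.

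\textbf{Step 1: the projection is diagonal.} We first establish $\Pc_{\Vc}x_i^{\Uc}=\cos\theta_i\,y_i^{\Vc}$. Extend $\{y_j^{\Vc}\}_{j=1}^a$ to an orthonormal basis of $\Vc$ by vectors $y_{a+1},\dots,y_b$. It then suffices to show $\langle x_i^{\Uc},y_m\rangle=0$ for every $m>a$. This is the step I expect to be the main obstacle, since the cited identity only covers indices up to $a$.

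To handle it, I will use the variational definition. Suppose $\langle x_i,y_m\rangle\neq0$ for some $m>a$. Then the unit vector in $\text{span}(y_i,y_m)$ proportional to $\cos\theta_i\,y_i+\langle x_i,y_m\rangle y_m$ is feasible at stage $i$: it is orthogonal to $y_1,\dots,y_{i-1}$. Its inner product with $x_i$ equals $\sqrt{\cos^2\theta_i+\langle x_i,y_m\rangle^2}>\cos\theta_i$, which contradicts the maximality defining $\theta_i$. (If one prefers, Theorem~\ref{thm:Golub1973} gives the same conclusion: $Q_{\Uc}^\top Q_{\Vc}=\widetilde U\Sigma\widetilde V^\top$ with $\widetilde V$ completed to an orthogonal matrix shows $Q_{\Vc}^\top x_i=\cos\theta_i\,\widetilde v_i$.)

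\textbf{Step 2: reduce to a diagonal Rayleigh quotient.} For $x=\sum_j c_jx_j^{\Uc}$, Step 1 and orthonormality of the $y_j$ give
\[
\|\Pc_{\Vc}x\|^2=\sum_{j=1}^a c_j^2\cos^2\theta_j,\qquad \|x\|^2=\sum_j c_j^2 .
\]
The constraint $x\perp x_{a-(k-2)},\dots,x_a$ means $c_j=0$ for $j>a-(k-1)$. The squared ratio is therefore a weighted average of $\cos^2\theta_1,\dots,\cos^2\theta_{a-(k-1)}$.

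\textbf{Step 3: identify the minimum and the minimizers.} Since the angles are increasing, the cosines are decreasing. The weighted average is thus at least $\cos^2\theta_{a-(k-1)}$, with equality at $c=e_{a-(k-1)}$. Taking square roots gives \eqref{eq:alternate_deftn_pvs}, and $x_{a-(k-1)}^{\Uc}$ is a minimizer.

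Equality requires all weight on indices $j$ with $\cos\theta_j=\cos\theta_{a-(k-1)}$. So when the angles are distinct the minimizer is unique up to scaling. When angles are repeated, I will note that the principal vectors themselves are defined only up to a choice within that eigenspace, so minimizers coincide with an admissible choice of principal vectors.

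Iterating over $k=1,\dots,a$ recovers the minimizers $x_a,x_{a-1},\dots,x_1$ in turn. This gives the final claim that the set of all minimizers is exactly $\{x_j^{\Uc}\}_{j=1}^a$.
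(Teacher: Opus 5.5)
Your proof is correct and follows the same route as the paper. Both expand $x$ in the principal basis $\{x_j^{\Uc}\}$, reduce $\|\Pc_{\Vc}x\|^2/\|x\|^2$ to the diagonal Rayleigh quotient $c^\top \Lambda^2 c/c^\top c$, and note that the orthogonality constraints force the trailing coefficients to vanish, so the minimum $\cos^2\theta_{a-(k-1)}$ is attained at $c=e_{a-(k-1)}$. Your Step~1, which rules out components of $x_i^{\Uc}$ along the completion $y_{a+1},\dots,y_b$ via the variational definition, and your remark on repeated angles add rigor where the paper's proof only asserts $\Pc_{\Vc}x_i^{\Uc}=\cos\theta_i\,y_i^{\Vc}$ and ignores non-uniqueness.
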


\begin{proof}
  Any $x \in \Uc$ can be written as
  $x = \sum_{i = 1}^{a} c_i x_i^{\Uc}$. Using the definition of
  principal angles and vectors, this implies that
  $\Pc_{\Vc} x = \sum_{i = 1}^{a} c_i y_i^{\Vc} \cos \theta_i$. With
  $\Lambda = \operatorname{diag}(\cos \theta_i)_{i=1}^a$, we have
  \begin{equation*}
    \frac{\| \Pc_{\Vc}(x) \|^2}{\| x \|^2} = \frac{\sum_{i = 1}^{a}
      c_i^2 \cos^2 \theta_i}{\sum_{i = 1}^{a} c_i^2} = \frac{c^{\top}
      \Lambda^2 c}{c^{\top} c}  
  \end{equation*}
  For $k=1$, there are no orthogonality constraints. Clearly, the
  minimum value of the expression above is $\cos^2 \theta_a$, which is
  achieved when $c = e_a$ (corresponding to $x = x_a^{\Uc}$).
  
  For a general $k > 1$, we impose the orthogonality constraints
  $x \perp x_{a-(k-2)}^{\Uc}, \dots, x_a^{\Uc}$, which implies
  $c \perp e_{a-(k-2)}, \dots, e_a$. Thus, under these orthogonality
  constraints, the minimum value of the expression is
  $\cos^2 \theta_{a-(k-1)}$, which is achieved when $c = e_{a-(k-1)}$
  (corresponding to $x = x_{a-(k-1)}^{\Uc}$).
\end{proof}

Note that Lemma~\ref{lemma:cos_theta_max} expresses $\cos\theta_i$ via
projections of vectors in the lower-dimensional subspace $\Uc$ onto
the higher-dimensional subspace $\Vc$. The reverse
operation—projecting vectors in $\Vc$ onto $\Uc$ and minimizing the
analogous ratio—does not, in general, yield $\cos\theta_i$.

The following result bounds the norm of the projection onto $\Vc$ of a
vector in $\Uc$ using the principal angles.

\begin{corollary}[Bounds on Projection
  Norms]\label{corollary:bounds_proj_norm}
  Consider two subspaces $\Uc, \Vc \subset \Hc$ with
  $\text{dim}(\Uc) \leq \text{dim}(\Vc)$ as defined in Lemma
  \ref{lemma:cos_theta_max}. Then, for $u \in \Uc$,
  \begin{equation*}
    \cos^2 \theta_{\max}(\Uc, \Vc) \leq \frac{\| \Pc_{\Vc}(u) \|^2}{\|
      u \|^2} \leq \cos^2 \theta_{\min}(\Uc, \Vc) .
  \end{equation*}
\end{corollary}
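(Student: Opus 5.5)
The plan is to derive Corollary~\ref{corollary:bounds_proj_norm} directly from the Rayleigh-quotient representation in the proof of Lemma~\ref{lemma:cos_theta_max}.

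\textbf{Step 1: expand $u$ in the principal basis.} Let $a=\dim(\Uc)$ and write $u=\sum_{i=1}^a c_i x_i^{\Uc}$, with $c\neq 0$ when $u\neq 0$. This is possible because the principal vectors $\{x_i^{\Uc}\}$ form an orthonormal basis of $\Uc$, which holds since $a\le \dim(\Vc)$.

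\textbf{Step 2: compute the projection.} The principal vectors satisfy $\langle x_i^{\Uc},y_j^{\Vc}\rangle=\delta_{ij}\cos\theta_i$, so $\Pc_{\Vc}x_i^{\Uc}=\cos\theta_i\, y_i^{\Vc}$. To justify this, decompose $x_i^{\Uc}$ against $\Vc$. Its component orthogonal to the $y_j^{\Vc}$ but inside $\Vc$ must vanish; otherwise the variational characterization in Definition~\ref{defn:pa_pv} would be contradicted. Since the $y_i^{\Vc}$ are orthonormal, this gives
\[
\frac{\|\Pc_{\Vc}u\|^2}{\|u\|^2}=\frac{\sum_i c_i^2\cos^2\theta_i}{\sum_i c_i^2}.
\]

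\textbf{Step 3: bound the weighted average.} The right-hand side is a convex combination of the numbers $\cos^2\theta_i$, with weights $c_i^2/\|c\|^2$. It therefore lies between $\min_i\cos^2\theta_i=\cos^2\theta_{\max}$ and $\max_i\cos^2\theta_i=\cos^2\theta_{\min}$. Here we use that cosine is decreasing on $[0,\pi/2]$.

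\textbf{Alternative route for the lower bound.} The lower bound can also be read off from Lemma~\ref{lemma:cos_theta_max} with $k=1$, which states that $\cos\theta_a=\min_{x\in\Uc}\|\Pc_{\Vc}x\|/\|x\|$. The upper bound follows from Definition~\ref{defn:pa_pv} for $j=1$: for unit $u$ we have $\|\Pc_{\Vc}u\|=\max_{v\in\Vc,\|v\|=1}\langle u,v\rangle\le\cos\theta_1$.

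\textbf{Main obstacle.} The only delicate point is justifying $\Pc_{\Vc}x_i^{\Uc}=\cos\theta_i y_i^{\Vc}$ when $\dim\Vc>\dim\Uc$. This is already used as given in Lemma~\ref{lemma:cos_theta_max}, so I would cite it rather than reprove it. The alternative route above avoids it entirely.
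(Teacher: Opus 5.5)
Your proposal is correct and follows the paper's implicit argument: the corollary is read off from the Rayleigh-quotient identity $\|\Pc_{\Vc}u\|^2/\|u\|^2 = c^\top\Lambda^2 c/c^\top c$ in the proof of Lemma~\ref{lemma:cos_theta_max}, exactly as in your Steps~1--3, and your variational justification of $\Pc_{\Vc}x_i^{\Uc}=\cos\theta_i\,y_i^{\Vc}$ is sound. One small correction: your ``alternative route'' avoids that identity only for the upper bound, because the $k=1$ case of Lemma~\ref{lemma:cos_theta_max} that you use for the lower bound is itself proved via the same identity.
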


\subsection{Perturbation bounds}
We describe a result bounding the sensitivity of the angle between two
vectors under perturbation. This is used in the proof of Theorem
\ref{thm:stability_MPV_pruning}.

\begin{lemma}[Perturbation Bound]\label{lemma:Perturbation_bound}
  Let $x, x', y, y'$ be vectors such that $\|x\| \ge \alpha_x$ and
  $\|y\| \ge \alpha_y$. Let $\theta(x,y)$ denote the angle between $x$
  and $y$.
  %
  %
  The difference in the sine of the angles is bounded by the
  vector perturbations:
  \begin{align*}
    |\sin \theta(x,y) - \sin \theta(x',y')| \le \frac{2}{\alpha_x}\|x -
    x'\| + \frac{4}{\alpha_y}\|y - y'\|.     
  \end{align*}
\end{lemma}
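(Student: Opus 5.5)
The plan is to write the sine of the angle as a distance from a unit vector to a line, and then move from $(x,y)$ to $(x',y')$ in two steps, first perturbing $x$ with $y$ fixed and then perturbing $y$ with $x'$ fixed. We assume throughout that $x',y'\neq 0$, since otherwise $\theta(x',y')$ is undefined; in every application in the paper these vectors are nonzero.

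\textbf{Step 1 (sine as a distance).} For nonzero $x,y$, write $\hat x = x/\|x\|$, $\hat y = y/\|y\|$ and $P_{\hat y}=\hat y\hat y^\top$ (abstractly, $P_{\hat y} z=\langle \hat y, z\rangle \hat y$). Then
\[
\sin\theta(x,y)=\sqrt{1-\langle \hat x,\hat y\rangle^2}=\|(I-P_{\hat y})\hat x\| .
\]
This identity holds whatever the sign of the cosine, because the sine only involves $\cos^2$. The triangle inequality then gives the splitting
\[
|\sin\theta(x,y)-\sin\theta(x',y')|\le |\sin\theta(x,y)-\sin\theta(x',y)|+|\sin\theta(x',y)-\sin\theta(x',y')| .
\]

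\textbf{Step 2 (normalization bound).} For nonzero $v,v'$ I will use the elementary estimate $\|\hat v-\hat v'\|\le 2\|v-v'\|/\|v\|$. To prove it, write
\[
\frac{v}{\|v\|}-\frac{v'}{\|v'\|}=\frac{v-v'}{\|v\|}+v'\Big(\frac{1}{\|v\|}-\frac{1}{\|v'\|}\Big).
\]
The second term has norm $\big|\|v'\|-\|v\|\big|/\|v\|\le \|v-v'\|/\|v\|$, and adding the two pieces gives the bound.

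\textbf{Step 3 (the $x$-term).} Since $I-P_{\hat y}$ is an orthogonal projection, it is $1$-Lipschitz, so
\[
|\sin\theta(x,y)-\sin\theta(x',y)|\le\|\hat x-\hat x'\|\le \frac{2\|x-x'\|}{\|x\|}\le\frac{2}{\alpha_x}\|x-x'\|.
\]

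\textbf{Step 4 (the $y$-term).} By the reverse triangle inequality and $\|\hat x'\|=1$,
\[
|\sin\theta(x',y)-\sin\theta(x',y')|\le \|(P_{\hat y'}-P_{\hat y})\hat x'\|\le \|\hat y\hat y^\top-\hat y'\hat y'^\top\| .
\]
Writing $\hat y\hat y^\top-\hat y'\hat y'^\top=(\hat y-\hat y')\hat y^\top+\hat y'(\hat y-\hat y')^\top$ and using that both $\hat y$ and $\hat y'$ are unit vectors, this is at most $2\|\hat y-\hat y'\|$. Applying Step 2 then gives at most $4\|y-y'\|/\|y\|\le (4/\alpha_y)\|y-y'\|$.

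Summing Steps 3 and 4 yields the claim. I do not expect a real obstacle here. The only care needed is at the start: the sine must be expressed through the projection residual so that both perturbations become Lipschitz estimates, and the roles must be ordered so that each normalization is divided by the unperturbed norm ($\|x\|$ and $\|y\|$), since only those are controlled by $\alpha_x$ and $\alpha_y$.
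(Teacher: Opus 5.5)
Your proof is correct and follows essentially the same route as the paper: write $\sin\theta$ as the projection residual $\|(I-P_{\hat y})\hat x\|$, split the difference into an $x$-perturbation term bounded by $\|\hat x-\hat x'\|$ and a $y$-perturbation term bounded by $2\|\hat y-\hat y'\|$, and then apply the factor-$2$ Lipschitz bound for normalization. The differences are cosmetic: the paper passes through the intermediate pair $(\hat x,\hat y')$ rather than your $(\hat x',\hat y)$. You also actually prove the two auxiliary estimates, namely the normalization bound and $\|P_{\hat y}-P_{\hat y'}\|\le 2\|\hat y-\hat y'\|$, which the paper merely asserts.
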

\begin{proof}
  Let $u, u', v, v'$ be the normalized vectors (e.g., $u =
  x/\|x\|$). By definition, $\sin \theta(x,y) = \|(I - P_v)u\|$. We
  define the difference operator
  $\Delta = (I - P_{v'})u' - (I - P_v)u$. Using the triangle
  inequality:
  \begin{align*}
    &|\sin \theta(u', v') - \sin \theta(u, v)| \le \|\Delta\| \,\,
      \le \|(I - P_{v'})(u' - u)\| \\
    &+ \|(P_v - P_{v'})u\| \,\,
      \le \|u' - u\| + 2\|v - v'\| \tag{1}
  \end{align*}
  where we used the bound $\|(P_v - P_{v'})u\| \le 2\|v - v'\|$
  derived from symmetric projection properties.
  
  To relate this to unnormalized vectors, we apply the Lipschitz
  continuity of the map $x \mapsto x/\|x\|$. For $\|x\| \ge \alpha_x$:
  \begin{align*}
    \|u - u'\| = \left\| \frac{x}{\|x\|} - \frac{x'}{\|x'\|} \right\|
    \le \frac{2}{\alpha_x} \|x - x'\| \tag{2} 
  \end{align*}
  Substituting (2) into (1) yields the final result.
\end{proof}

\vspace*{-4ex}

\begin{IEEEbiography}[{\includegraphics[width=1in,height=1.2in,clip,keepaspectratio,angle
    = 90]{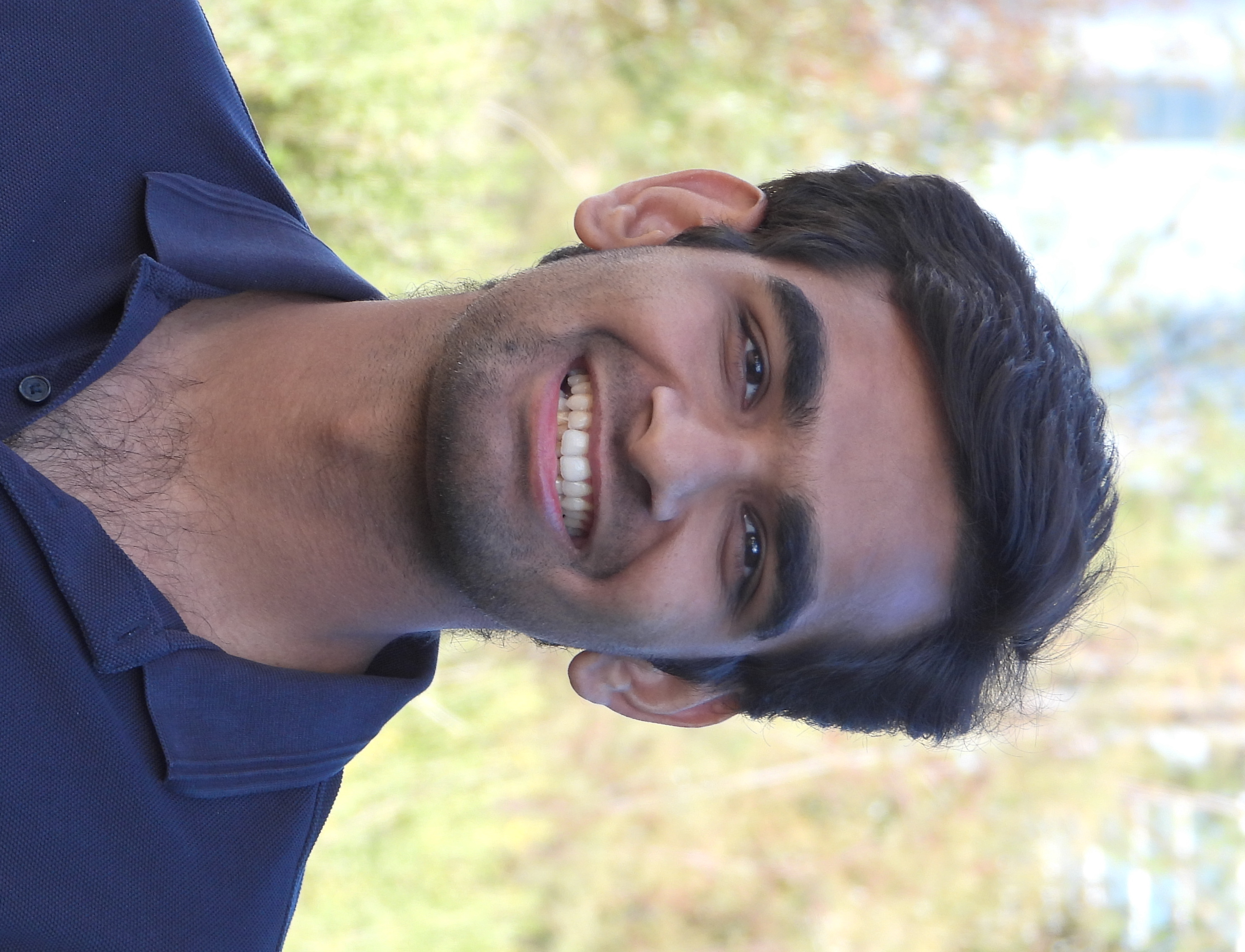}}]{Dhruv Shah} is a Ph.D. student at the
  University of California, San Diego, where he also completed his
  Master's in Mechanical and Aerospace Engineering (2024). Previously,
  he graduated from the Indian Institute of Technology, Bombay with a
  Bachelor's in Electrical Engineering (2023). His work explores
  nonlinear systems and Koopman operator theory, specifically focusing
  on their utility in robotics, aerospace engineering, and machine
  learning.
\end{IEEEbiography}

\vspace*{-4ex}

\begin{IEEEbiography}[{\includegraphics[width=1in,height=1.2in,clip,keepaspectratio]{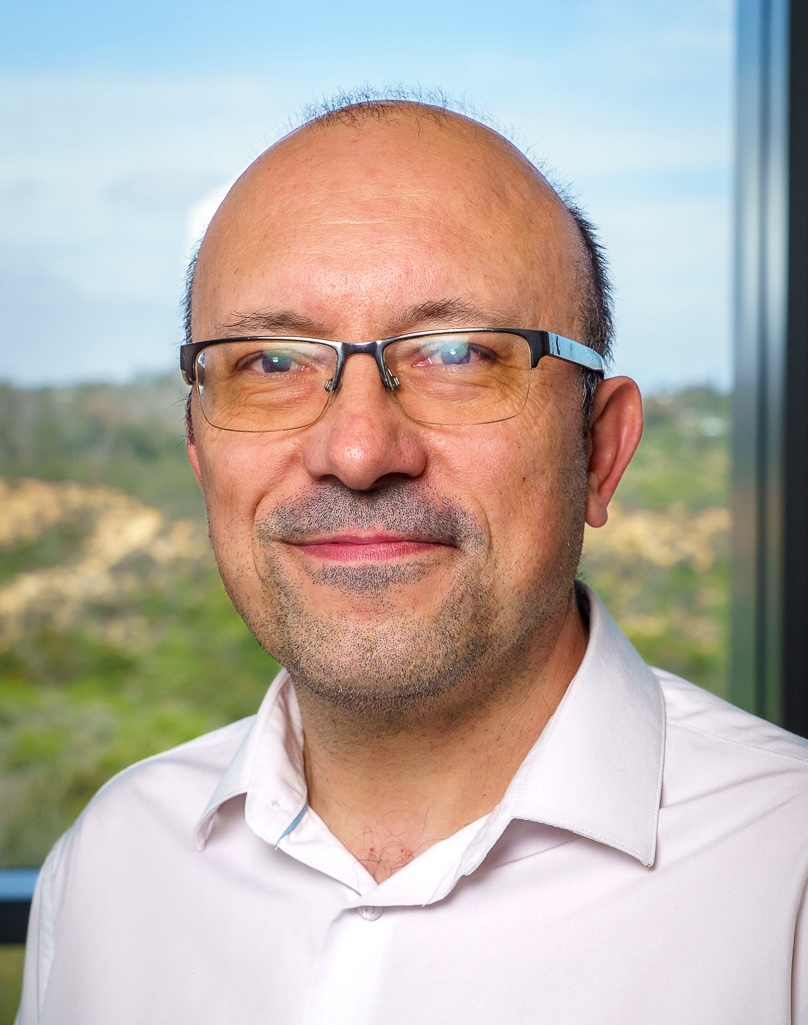}}]{Jorge
    Cort\'{e}s}(M'02, SM'06, F'14) received the Licenciatura degree in
  mathematics from Universidad de Zaragoza, Spain, in 1997, and the
  Ph.D. degree in engineering mathematics from Universidad Carlos III
  de Madrid, Spain, in 2001. He held postdoctoral positions with the
  University of Twente, Twente, The Netherlands, and the University of
  Illinois at Urbana-Champaign, Illinois, USA.  He is a Professor and
  Cymer Corporation Endowed Chair in High Performance Dynamic Systems
  Modeling and Control at the Department of Mechanical and Aerospace
  Engineering, UC San Diego, California, USA.  He is a Fellow of IEEE,
  SIAM, and IFAC.  His research interests include distributed control
  and optimization, network science, autonomy, learning, nonsmooth
  analysis, decision making under uncertainty, network neuroscience,
  and multi-agent coordination in robotic, power, and transportation
  networks.
\end{IEEEbiography}

\end{document}